\newif\ifdoublecol

\doublecoltrue
\documentclass[twocolumn]{IEEEtran}
\IEEEoverridecommandlockouts

\usepackage{etex} 

\ifCLASSINFOpdf
  \usepackage[pdftex]{graphicx}

  \usepackage{epstopdf}
\else
  \usepackage[dvips]{graphicx}
  \graphicspath{{../eps/}}
  \DeclareGraphicsExtensions{.eps}
\fi
\usepackage{cite}

\usepackage[cmex10]{amsmath}
\usepackage{xcolor,colortbl}
\definecolor{col1}{HTML}{3891A6}
\definecolor{col2}{HTML}{EF5B5B}
\definecolor{col3}{HTML}{3DDC97}

\usepackage{amsmath,amsthm,relsize}
\usepackage{amsfonts, amssymb, cuted}
\usepackage{mathtools}
\usepackage{algorithm}
\usepackage[noend]{algpseudocode}
\usepackage{cite}
\usepackage{soul}
\usepackage{graphicx}
\usepackage{tikz}
\usepackage{pgfplotstable}
\usepackage{pgfplots}
\usepackage{nicefrac}
\usepackage{enumitem}
\usepackage{support-caption}
\usepackage{subcaption}
\usepackage[font=footnotesize]{caption}
\usepackage{multirow}
\pgfplotsset{compat=1.15}
\usepackage{relsize}
\usetikzlibrary{patterns}
\usepackage{acro}
\usepackage{pifont}
\newtheorem{theorem}{Theorem}

\usepackage{todonotes}

\usepackage{mathtools}

\usepackage{array}
\usepackage{booktabs}

\usepackage{diagbox}
\usepackage{smartdiagram}
\usesmartdiagramlibrary{additions}
\usepackage{bm}
\usepackage{multicol}

\usepackage{tabularx}
\usepackage{colortbl}

\usepackage{hyperref}
\hypersetup{
    colorlinks=true,
    linkcolor=blue,
    citecolor=blue,
    filecolor=blue,
    urlcolor=blue
}

\usetikzlibrary{plotmarks}
  \usetikzlibrary{arrows.meta}
  \usepgfplotslibrary{patchplots}
  \usepackage{grffile}
  \pgfplotsset{plot coordinates/math parser=false}
  \newlength\figureheight
  \newlength\figurewidth
   \pgfplotsset{compat=1.11,
    /pgfplots/ybar legend/.style={
    /pgfplots/legend image code/.code={%
       \draw[##1,/tikz/.cd,yshift=-0.25em]
        (0cm,0cm) rectangle (3em,8pt);},
   },
}

\usepgfplotslibrary{groupplots,units}
\pgfplotsset{
  compat=1.9,
  unit code/.code 2 args={\si{#1#2}} 
}
\usepackage{siunitx}

\usepackage{color}
\usepackage{colortbl}
\usepackage{multirow}

\newlength{\Oldarrayrulewidth}

\definecolor{intnull}{RGB}{213,229,255}
\definecolor{inteins}{RGB}{128,179,255}
\definecolor{intzwei}{RGB}{42,127,255}
\definecolor{intdrei}{RGB}{0,85,212}
\definecolor{intvier}{RGB}{0,51,128}
\definecolor{intfunf}{RGB}{0,34,85}
\usepackage{arydshln} 

\usetikzlibrary{decorations.pathreplacing}
\renewcommand*{\arraystretch}{.4}

\newtheorem{lemma}{Lemma}

\newtheorem{remarknum}{Remark} 
\usepackage{arydshln}

\newcommand{\vbar}{\raisebox{.17ex}{\rule{.04em}{1.35ex}}}
\newcommand{\vbarind}{\raisebox{.01ex}{\rule{.04em}{1.1ex}}}
\newcommand{\R}{\ifmmode{\rm I}\hspace{-.2em}{\rm R} \else ${\rm I}\hspace{-.2em}{\rm R}$ \fi}
\newcommand{\T}{\ifmmode{\rm I}\hspace{-.2em}{\rm T} \else ${\rm I}\hspace{-.2em}{\rm T}$ \fi}
\newcommand{\N}{\ifmmode{\rm I}\hspace{-.2em}{\rm N} \else \mbox{${\rm I}\hspace{-.2em}{\rm N}$} \fi}
\newcommand{\B}{\ifmmode{\rm I}\hspace{-.2em}{\rm B} \else \mbox{${\rm I}\hspace{-.2em}{\rm B}$} \fi}
\newcommand{\Hil}{\ifmmode{\rm I}\hspace{-.2em}{\rm H} \else \mbox{${\rm I}\hspace{-.2em}{\rm H}$} \fi}
\newcommand{\C}{\ifmmode\hspace{.2em}\vbar\hspace{-.31em}{\rm C} \else \mbox{$\hspace{.2em}\vbar\hspace{-.31em}{\rm C}$} \fi}
\newcommand{\Cind}{\ifmmode\hspace{.2em}\vbarind\hspace{-.25em}{\rm C} \else \mbox{$\hspace{.2em}\vbarind\hspace{-.25em}{\rm C}$} \fi}
\newcommand{\Q}{\ifmmode\hspace{.2em}\vbar\hspace{-.31em}{\rm Q} \else \mbox{$\hspace{.2em}\vbar\hspace{-.31em}{\rm Q}$} \fi}
\newcommand{\Z}{\ifmmode{\rm Z}\hspace{-.28em}{\rm Z} \else ${\rm Z}\hspace{-.28em}{\rm Z}$ \fi}

\newtheorem{lem}{Lemma}

\newtheorem{exmp}{Example}
\theoremstyle{definition}

\newcommand{\CF}[0]{{\mathcal{F}}}

\newcommand{\CI}[0]{{\mathcal{I}}}
\newcommand{\CJ}[0]{{\mathcal{J}}}
\newcommand{\CK}[0]{{\mathcal{K}}}

\newcommand{\CM}[0]{{\mathcal{M}}}

\newcommand{\CP}[0]{{\mathcal{P}}}

\newcommand{\CT}[0]{{\mathcal{T}}}

\newcommand{\Bw}[0]{{\mathbf{w}}}
\newcommand{\Bx}[0]{{\mathbf{x}}}
\newcommand{\By}[0]{{\mathbf{y}}}
\newcommand{\Bz}[0]{{\mathbf{z}}}

\newcommand{\BH}[0]{{\mathbf{H}}}

\newcommand{\BJ}[0]{{\mathbf{J}}}

\newcommand{\BU}[0]{{\mathbf{U}}}

\newcommand{\SfG}[0]{{\mathsf{G}}}

\usepackage{tikz}

\usepackage{tikz}
\usetikzlibrary{arrows,
                calc,
                decorations.pathreplacing,
                calligraphy,
                matrix,
                positioning
                }

\DeclareAcronym{ADMM}{
    short = ADMM,
    long = alternating direction method of multipliers,
    list = Alternating Direction Method of Multipliers,
    tag = abbrev
}

\DeclareAcronym{AoA}{
    short = AoA,
    long = angle-of-arrival,
    list = Angle-of-Arrival,
    tag = abbrev
}

\DeclareAcronym{SISO}{
    short = SISO,
    long = single-input single-output,
    list = single-input single-output,
    tag = abbrev
}

\DeclareAcronym{MRT}{
    short = MRT,
    long = maximum ratio transmitter,
    list = maximum ratio transmitter,
    tag = abbrev
}

\DeclareAcronym{PDA}{
    short = PDA,
    long = placement delivery array,
    list = placement delivery array,
    tag = abbrev
}

\DeclareAcronym{EE}{
    short = EE,
    long = energy efficiency,
    list = energy efficiency,
    tag = abbrev
}

\DeclareAcronym{MDS}{
    short = MDS,
    long = maximum distance separation,
    list = maximum distance separation,
    tag = abbrev
}

\DeclareAcronym{SIC}{
    short = SIC,
    long = successive-interference-cancellation,
    list = successive-interference-cancellation,
    tag = abbrev
}

\DeclareAcronym{MAC}{
    short = MAC,
    long = multiple-access-channel,
    list = multiple-access-channel,
    tag = abbrev
}

\DeclareAcronym{AoD}{
    short = AoD,
    long = angle-of-departure,
    list = Angle-of-Departure,
    tag = abbrev
}

\DeclareAcronym{BB}{
    short = BB,
    long = base band,
    list = Base Band,
    tag = abbrev
}

\DeclareAcronym{BC}{
    short = BC,
    long = broadcast channel,
    list = Broadcast Channel,
    tag = abbrev
}

\DeclareAcronym{BS}{
    short = BS,
    long = base station,
    list = Base Station,
    tag = abbrev
}

\DeclareAcronym{BR}{
    short = BR,
    long = best response,
    list = Best Response, 
    tag = abbrev
}

\DeclareAcronym{CB}{
    short = CB,
    long = coordinated beamforming,
    list = Coordinated Beamforming,
    tag = abbrev
}

\DeclareAcronym{CC}{
    short = CC,
    long = coded caching,
    list = Coded Caching,
    tag = abbrev
}

\DeclareAcronym{CE}{
    short = CE,
    long = channel estimation,
    list = Channel Estimation,
    tag = abbrev
}

\DeclareAcronym{CoMP}{
    short = CoMP,
    long = coordinated multi-point transmission,
    list = Coordinated Multi-Point Transmission,
    tag = abbrev
}

\DeclareAcronym{CRAN}{
    short = C-RAN,
    long = cloud radio access network,
    list = Cloud Radio Access Network,
    tag = abbrev
}

\DeclareAcronym{CSE}{
    short = CSE,
    long = channel specific estimation,
    list = Channel Specific Estimation,
    tag = abbrev
}

\DeclareAcronym{CSI}{
    short = CSI,
    long = channel state information,
    list = Channel State Information,
    tag = abbrev
}

\DeclareAcronym{CSIT}{
    short = CSIT,
    long = channel state information at the transmitter,
    list = Channel State Information at the Transmitter,
    tag = abbrev
}

\DeclareAcronym{CU}{
    short = CU,
    long = central unit,
    list = Central Unit,
    tag = abbrev
}

\DeclareAcronym{D2D}{
    short = D2D,
    long = device-to-device,
    list = Device-to-Device,
    tag = abbrev
}

\DeclareAcronym{DE-ADMM}{
    short = DE-ADMM,
    long = direct estimation with alternating direction method of multipliers,
    list = Direct Estimation with Alternating Direction Method of Multipliers,
    tag = abbrev
}

\DeclareAcronym{DE-BR}{
    short = DE-BR,
    long = direct estimation with best response,
    list = Direct Estimation with Best Response,
    tag = abbrev
}

\DeclareAcronym{DE-SG}{
    short = DE-SG,
    long = direct estimation with stochastic gradient,
    list = Direct Estimation with Stochastic Gradient,
    tag = abbrev
}

\DeclareAcronym{DFT}{
	short = DFT,
	long = discrete fourier transform,
	list = Discrete Fourier Transform,
	tag = abbrev
}

\DeclareAcronym{DoF}{
    short = DoF,
    long = degrees of freedom,
    list = Degrees of Freedom,
    tag = abbrev
}

\DeclareAcronym{DL}{
    short = DL,
    long = downlink,
    list = Downlink,
    tag = abbrev
}

\DeclareAcronym{GD}{
	short = GD, 
	long = gradient descent,
	list = Gradeitn Descent,
	tag = abbrev
}

\DeclareAcronym{IBC}{
    short = IBC,
    long = interfering broadcast channel,
    list = Interfering Broadcast Channel,
    tag = abbrev
}

\DeclareAcronym{i.i.d.}{
    short = i.i.d.,
    long = independent and identically distributed,
    list = Independent and Identically Distributed,
    tag = abbrev
}

\DeclareAcronym{JP}{
    short = JP,
    long = joint processing,
    list = Joint Processing,
    tag = abbrev
}

\DeclareAcronym{KKT}{
    short = KKT,
    long = Karush-Kuhn-Tucker,
    tag = abbrev
}

\DeclareAcronym{LOS}{
	short = LOS,
	long = line-of-sight,
	list = Line-of-Sight,
	tag = abbrev
}

\DeclareAcronym{LS}{
    short = LS,
    long = least squares,
    list = Least Squares,
    tag = abbrev
}

\DeclareAcronym{LTE}{
    short = LTE,
    long = Long Term Evolution,
    tag = abbrev
}

\DeclareAcronym{LTE-A}{
    short = LTE-A,
    long = Long Term Evolution Advanced,
    tag = abbrev
}

\DeclareAcronym{MIMO}{
    short = MIMO,
    long = multiple-input multiple-output,
    list = Multiple-Input Multiple-Output,
    tag = abbrev
}

\DeclareAcronym{MISO}{
    short = MISO,
    long = multiple-input single-output,
    list = Multiple-Input Single-Output,
    tag = abbrev
}

\DeclareAcronym{MSE}{
    short = MSE,
    long = mean-squared error,
    list = Mean-Squared Error,
    tag = abbrev
}

\DeclareAcronym{MMSE}{
    short = MMSE,
    long = minimum mean-squared error,
    list = Minimum Mean-Squared Error,
    tag = abbrev
}

\DeclareAcronym{mmWave}{
	short = mmWave,
	long = millimeter wave,
	list = Millimeter Wave,
	tag = abbrev
}

\DeclareAcronym{MU-MIMO}{
    short = MU-MIMO,
    long = multi-user \ac{MIMO},
    list = Multi-User \ac{MIMO},
    tag = abbrev
}

\DeclareAcronym{OTA}{
    short = OTA,
    long = over-the-air,
    list = Over-the-Air,
    tag = abbrev
}

\DeclareAcronym{PSD}{
    short = PSD,
    long = positive semidefinite,
    list = Positive Semidefinite,
    tag = abbrev
}

\DeclareAcronym{QoS}{
	short = QoS,
	long = quality of service,
	list = Quality of Service,
	tag = abbrev
}

\DeclareAcronym{RCP}{
	short = RCP,
	long = remote central processor,
	list = Remote Central Processor,
	tag = abbrev
}

\DeclareAcronym{RRH}{
    short = RRH,
    long = remote radio head,
    list = Remote Radio Head,
    tag = abbrev
}

\DeclareAcronym{RSSI}{
    short = RSSI,
    long = received signal strength indicator,
    list = Received Signal Strength Indicator,
    tag = abbrev
}

\DeclareAcronym{RX}{
	short = RX,
	long = receiver,
	list = Receiver,
	tag = abbrev
}

\DeclareAcronym{SCA}{
    short = SCA,
    long = successive convex approximation,
    list = Successive Convex Approximation,
    tag = abbrev
}

\DeclareAcronym{SG}{
    short = SG,
    long = stochastic gradient,
    list = Stochastic Gradient,
    tag = abbrev
}

\DeclareAcronym{SNR}{
    short = SNR,
    long = signal-to-noise ratio,
    list = Signal-to-Noise Ratio,
    tag = abbrev
}

\DeclareAcronym{SINR}{
    short = SINR,
    long = signal-to-interference-plus-noise ratio,
    list = Signal-to-Interference-plus-Noise Ratio,
    tag = abbrev
}

\DeclareAcronym{SOCP}{
	short = SOCP, 
	long = second order cone program,
	list = Second Order Cone Program,
	tag = abbrev
}

\DeclareAcronym{SSE}{
    short = SSE,
    long = stream specific estimation,
    list = Stream Specific Estimation,
    tag = abbrev
}

\DeclareAcronym{SVD}{
	short = SVD,
	long = singular value decomposition,
	list = Singular Value Decomposition,
	tag = abbrev
}

\DeclareAcronym{TDD}{
	short = TDD,
	long = time division duplex,
	list = Time Division Duplex,
	tag = abbrev
}

\DeclareAcronym{TX}{
	short = TX,
	long = transmitter,
	list = Transmitter,
	tag = abbrev
}

\DeclareAcronym{UE}{
    short = UE,
    long = user equipment,
    list = User Equipment,
    tag = abbrev
}

\DeclareAcronym{UL}{
    short = UL,
    long = uplink,
    list = Uplink,
    tag = abbrev
}

\DeclareAcronym{ULA}{
	short = ULA,
	long = uniform linear array,
	list = Uniform Linear Array,
	tag = abbrev
}

\DeclareAcronym{UPA}{
    short = UPA,
    long = uniform planar array,
    list = Uniform Planar Array,
    tag = abbrev
}

\DeclareAcronym{WMMSE}{
    short = WMMSE,
    long = weighted minimum mean-squared error,
    list = Weighted Minimum Mean-Squared Error,
    tag = abbrev
}

\DeclareAcronym{WMSEMin}{
    short = WMSEMin,
    long = weighted sum \ac{MSE} minimization,
    list = Weighted sum \ac{MSE} Minimization,
    tag = abbrev
}

\DeclareAcronym{WBAN}{
	short = WBAN,
	long = wireless body area network,
	list = Wireless Body Area Network,
	tag = abbrev
}

\DeclareAcronym{WSRMax}{
    short = WSRMax,
    long = weighted sum rate maximization,
    list = Weighted Sum Rate Maximization,
    tag = abbrev
}

\usepackage{makecell}

\usepackage{dblfloatfix}
\usepackage{placeins} 

\begin{document}


\title{Cache-Aided Asymmetric MIMO Communications: Achievable DoF Analysis}\color{black}


\author{\IEEEauthorblockN{Mohammad NaseriTehrani,
~\IEEEmembership{Student~Member,~IEEE,}  
MohammadJavad Salehi,~\IEEEmembership{Member,~IEEE,} 
~and Antti T\"olli}~\IEEEmembership{Senior~Member,~IEEE}
\thanks{
The authors are affiliated with the University of Oulu, Finland (emails: 
\{firstname.lastname@oulu.fi\}).
This work was supported by Infotech Oulu and by the Research Council of Finland under grants no. 343586 (CAMAIDE) and 346208 (6G Flagship).
%
This article has been presented in part in~\cite{naseritehrani2025achievable}.
}
}



\maketitle




\begin{abstract}
Integrating coded caching (CC) into multiple-input multiple-output (MIMO) communications significantly enhances the achievable degrees of freedom (DoF). This paper investigates a practical cache-aided asymmetric MIMO configuration with cache ratio $\gamma$, where a server with $L$ transmit antennas communicates with $K$ users. The users are partitioned into $J$ groups, and each user in group $j$ has $G_j$ receive antennas. We propose four content-aware MIMO-CC strategies: \emph{min-G} enforces symmetry using the smallest antenna count among users; \emph{Grouping} maximizes intra-subset spatial multiplexing gain at the expense of some global caching gain; \emph{Super-grouping} aggregates users into optimized \emph{min-$G$}-based super-sets with identical effective receive multiplexing gains before applying \emph{Grouping} across them; and \emph{Phantom} redistributes spatial resources assuming ``phantom'' antennas at the users to bridge the performance gains of \emph{min-$G$} and \emph{Grouping}.
We develop these asymmetric strategies under three reference symmetric CC placement–delivery policies with guaranteed linear decodability: a  finite-search DoF-optimized policy 
attaining the best single-shot
achievable DoF lower bound, \color{black}
and two closed-form policies, namely combinatorial and  linear cyclic low-complexity constructions, with the cyclic policy attaining DoF performance close to the others in many operating regimes.
Analytical and numerical results demonstrate significant DoF improvements across various system configurations, and that policy–strategy combinations offer flexible trade-offs between DoF and subpacketization complexity. 
\end{abstract}

\begin{IEEEkeywords}
\noindent coded caching, multicasting, asymmetric antenna configurations, MIMO communications, Degrees of freedom
\end{IEEEkeywords}

\section{Introduction}


The rapidly increasing demand for multimedia content, driven by applications like immersive viewing, gaming, and extended reality (XR), continues to underpin the growth in mobile data traffic. With stringent requirements for ultra‑low latency ($<$10ms) and multi‑Gbps throughput levels, conventional scaling in spectrum or densification alone is insufficient to meet these requirements.
Instead, proactive caching of popular content at end-users, by offloading traffic during peak hours, is foreseen to play a critical role in reducing bandwidth pressure and network congestion, thereby enabling scalable network performance~\cite{salehi2022enhancing}.
In this regard, coded caching (CC)~\cite{maddah2014fundamental} has emerged as an effective solution, utilizing the onboard memory of network devices as a communication resource, especially beneficial for cacheable multimedia content.
The performance gain in CC arises from multicasting well-constructed codewords to user groups of size $t+1$, where the CC gain $t$ is proportional to the cumulative cache size of all users. 
Originally designed for single-input single-output (SISO) setups~\cite{maddah2014fundamental}, CC was later shown to be effective in multiple-input single-output (MISO) systems, demonstrating that spatial multiplexing and CC gains are additive~\cite{shariatpanahi2018physical}.
This is achieved by serving 
multiple groups of users simultaneously with multiple multicast messages and suppressing the intra-group interference by beamforming.
Accordingly, in a MISO-CC setting with $L$ Tx antennas, $t+ L$ users can be served in parallel, and the so-called degree-of-freedom (DoF) of $t+L$ is achievable~\cite{shariatpanahi2016multi,shariatpanahi2018physical,lampiris2021resolving,salehi2020lowcomplexity}. 


The global caching gain scales linearly with the number of users in SISO and MISO settings. However, it was only recently demonstrated that multi-antenna reception at the user side can \emph{multiplicatively boost} this gain by a factor of the spatial multiplexing dimension $G$~\cite{salehi2021MIMO,naseritehrani2024multicast,naseritehrani2026cache}. This multiplicative scaling broadens the design space for  collaborative multi-user multi-input multi-output (MIMO) deployments by enabling parallel coded multicast transmission, jointly exploiting spatial multiplexing and cache-induced side information.

All prior art, however, assumes the same number of antennas at each receiver, which may not be the case in practice. For example, 5G NR-capable devices, ranging from high-performance smartphones to low-power IoT nodes, are designed to meet diverse requirements, in terms of latency, data rates, and QoE, resulting in category-specific asymmetric antenna configurations and capabilities~\cite{dahlman20205g,salehi2022enhancing}.
Despite its practical significance, the achievable single-shot DoF in scenarios where users are equipped with asymmetric antennas has not yet been characterized in the MIMO-CC literature. To address this gap, we develop a set of cache-aided transmission schemes that explicitly account for heterogeneous user antenna configurations. 
These schemes are designed to exploit the unique characteristics of heterogeneous user equipment by carefully balancing two key performance dimensions: the spatial multiplexing gain enabled by multi-antenna receivers, and the global caching gain arising from content reuse. 

Two primary and two hybrid strategies for asymmetric MIMO-CC communications are studied in this paper. In the context of primary strategies, the proposed \emph{min-$G$} strategy emphasizes the aggregate CC gain across users by adapting delivery based on the user with the lowest spatial rank, and the \emph{Grouping} strategy employs a rank-aware approach that clusters users with similar spatial capabilities (spatial rank or link qualities), thereby maximizing the impact of receiver-side spatial multiplexing. The primary strategies are followed by \emph{Super-grouping} and \emph{Phantom} hybrid strategies, that strike a tradeoff between spatial and global caching gains. 
Collectively, these strategies broaden the design space of MIMO-CC systems by moving beyond the conventional symmetry assumption, enabling scalable content delivery in heterogeneous user environments.
\subsection{Related Work}
\label{rel_work_subs}




\subsubsection{Asymmetric SISO-CC Networks}
\label{rel_work_subs_1}

Early CC works mainly considered symmetric, or homogeneous, settings, where all users have equal cache capacities and share a uniform delivery link~\cite{maddah2014fundamental}. Later studies addressed several asymmetric SISO-CC settings, including cache-size heterogeneity~\cite{Karamchandani2016Hierarchical,li2016rate,amiri2017Decentralized,Yang2019TwoUser}, heterogeneous link qualities~\cite{zhang2017wireless}, joint cache--link heterogeneity~\cite{cao2019coded,bidokhti2018noisy,Amiri2018Erasure}, and more general heterogeneous SISO settings with unequal cache sizes, file lengths, and user-specific demands~\cite{Yang2018TIT,chang2022coded}.
These works motivate asymmetric CC, but their asymmetries arise from storage, link, demand, or file-size heterogeneity.
\color{black}
\subsubsection{Asymmetric MISO-CC Networks}
\label{rel_work_subs_2}

Recent advances in MISO-CC communications have begun to explore multi-antenna coded caching for networks with different cache sizes at the users. Specifically,~\cite{lampiris2020full} demonstrated that multiple transmit antennas, while providing full multiplexing gains, can simultaneously eliminate the performance penalties typically associated with cache-size imbalance, and~\cite{Urabe2022SPAWC} studied multi-user MISO systems with cache-size heterogeneity in the finite-SNR region. 
More recently,~\cite{mahmoodi2023multi} developed a connectivity-aware cache allocation that assigns larger caches to weaker users to enhance local gain, while preserving cache overlap across users to sustain global coded multicasting opportunities. 
Later, dynamic CC schemes~\cite{abolpour2024resource}
addressed asymmetric user-to-caching-profile associations and mitigated DoF loss through finite-SNR transmission design.
However, all these works still assume single-antenna receivers, and thus do not exploit
receive-side spatial dimensions.

\subsubsection{MIMO-CC\color{black}}
\label{rel_work_subs_3}
While MISO-CC has been well-studied in the literature, applying CC in MIMO setups has received less attention. In~\cite{cao2017fundamental}, the optimal DoF of cache-aided MIMO networks with three transmitters and three receivers were studied, and in~\cite{cao2019treating}, general message sets were used to introduce inner and outer bounds on the achievable DoF of MIMO-CC schemes. More recently, we developed low-complexity MIMO-CC schemes for single-transmitter setups in~\cite{salehi2021MIMO}, and showed that with $G$ antennas at each receiver, if $\frac{L}{G}$ is an integer, the single-shot DoF of $Gt+L$ is achievable with a small subpacketization overhead. 
In~\cite{naseritehrani2024multicast, tehrani2024enhanced,naseritehrani2026cache,naseritehrani2023low}, we proposed an improved achievable single-shot DoF bound for MIMO-CC systems, going beyond the DoF value of $Gt+L$ earlier explored in~\cite{salehi2021MIMO}. In~\cite{naseritehrani2026cache}, we also designed a high-performance transmission scheduling strategy for MIMO-CC setups at finite SNR under symmetric per-user stream allocation. 
Recently, vector coded caching in MU-MIMO with heterogeneous multi-antenna users was investigated in~\cite{zhao2026vector}, focusing on finite-SNR throughput under fading, pathloss, CSI overhead, and max--min fairness.
\color{black}
%


\subsubsection{Subpacketization bottleneck}
\label{rel_work_subs_4}
Subpacketization reflects the division of each file into smaller parts for the CC operation~\cite{lampiris2018adding}. Both the original single-antenna and MISO-CC schemes of~\cite{maddah2014fundamental,shariatpanahi2018physical} required exponentially growing subpacketization (w.r.t the user count $K$), rendering them infeasible for even moderate-sized networks~\cite{lampiris2018adding}. To resolve this issue, the pioneering works in~\cite{lampiris2018adding,salehi2020lowcomplexity} introduced signal-level CC operation (in contrast to bit-level CC~\cite{shariatpanahi2018physical}), showing that the same optimal DoF of $t + L$ can be achieved in MISO-CC setups with much smaller subpacketization. Later,~\cite{salehi2020lowcomplexity} was extended to MIMO setups~\cite{salehi2021MIMO} with linearly growing subpacketization. 
However, the reduced subpacketization in both schemes comes at the cost of limited applicability, as~\cite{lampiris2018adding} imposes divisibility constraints on the system parameters, requiring that both 
$\tfrac{L}{t}$ and $\tfrac{K}{t}$ are integers, and~\cite{salehi2020lowcomplexity} is applicable only to MISO-CC setups with $L \ge t$ (or $\big\lfloor \tfrac{L}{G} \big\rfloor \ge t$ for the MIMO-CC~\cite{salehi2021MIMO}). 

More recently, shared-cache MIMO~\cite{naseritehrani2026low} extended the
framework of~\cite{naseritehrani2026cache} to low-subpacketization asymptotic-DoF and finite-SNR designs, while MIMO-PDA schemes~\cite{huang2026placement} provided low subpacketization realizations of selected operating points with per-user spatial multiplexing gain fixed at $G$ and achievable DoF of $G(t+\lceil \frac{L}{G} \rceil)$. These constructions are subject to additional feasibility/divisibility conditions.
\color{black}

\subsection{Main Contributions}
Existing MIMO-CC DoF results are predominantly developed for symmetric per-user receive-antenna configurations, whereas the heterogeneous setting considered here has received limited attention.\color{black} This setting fundamentally departs from symmetric baselines, as delivery must balance coded multicast opportunities against spatial bottlenecks induced by unequal antenna resources. We address this gap by jointly designing cache and spatial resources to leverage,  rather than be limited by, heterogeneity, thereby enhancing the achievable performance in asymmetric MIMO-CC systems.

We study the single-shot achievable DoF of asymmetric MIMO-CC with two objectives: 
(i) improving the achievable DoF under heterogeneous receive antenna configurations, and (ii) developing simple combinatorial and low-complexity designs
 whose achievable DoF is characterized either in closed form or through finite search over a discrete feasible set\footnote{Throughout the paper, \emph{closed-form} refers to expressions that
require no search over design parameters, whereas expressions obtained by
optimization over a finite discrete feasible set are termed
\emph{finite-search optimized}.}.\color{black} 

Our developments are built under three baseline symmetric CC policies $\pi \in \{\mathrm{opt}, \mathrm{cmb}, \mathrm{lin}\}$, each specifying the cache placement and the delivery/multicasting design rule, while guaranteeing linear decodability for all target users.  
Compared with the preliminary conference publication~\cite{naseritehrani2025achievable}, which was limited to \(\mathrm{opt}\)-based asymmetric constructions, the present article extends the framework to all three policies:
\color{black}
(i) the DoF-optimized policy $\mathrm{opt}$~\cite{naseritehrani2026cache}, which attains
the best  achievable single-shot DoF lower bound \color{black} in symmetric MIMO-CC by optimizing the multicasting parameters (the number of served users and the number of streams per user);
(ii) the combinatorial policy $\mathrm{cmb}$~\cite{shariatpanahi2016multi,salehi2021MIMO}, which views the MIMO-CC system as a set of parallel MISO-CC sub-systems, thereby avoiding
parameter search; and (iii) the linear-subpacketization policy $\mathrm{lin}$~\cite{salehi2020lowcomplexity,salehi2021MIMO}, which enables delivery via a
cyclic construction and offers a favorable
performance--complexity trade-off, approaching the DoF of the other policies with substantially lower subpacketization.  We further provide a unified
linear-decodability condition used throughout the paper
, identify the ceiling-based candidate as a special \(\mathrm{opt}\) operating point when its feasibility condition holds
, and quantify the achievable DoF gap analysis to
the \(\mathrm{opt}\) policy.
\color{black}
%


These symmetric policies are adapted to the asymmetric receive-antenna setup through four asymmetric MIMO-CC strategies.
We first develop two primary strategies that explicitly reveal the fundamental tradeoff
between global caching and spatial multiplexing gains. The \emph{min-$G$} strategy enforces a symmetric setup by treating all users as having the same number of antennas equal to the minimum among them, thereby maximizing coded multicast opportunities at the cost of reduced spatial multiplexing. 
In contrast, the \emph{Grouping} strategy partitions users into subsets according to their receive antenna configurations and serves each subset in orthogonal (time) dimensions, thereby preserving per-group spatial multiplexing gains while sacrificing some global caching gain. 
\color{black}. 
To bridge this trade-off and unify the benefits of the two primary strategies, 
we then develop two hybrid strategies that extend the achievable DoF across a broader range of system parameters.
First, we introduce the \emph{Super-grouping} strategy, which combines the primary strategies in a structured manner: users are first aggregated into equivalent sets using the \emph{min-$G$} abstraction, yielding equivalent Rx spatial multiplexing gains per set, and the equivalent sets are then served using the primary \emph{Grouping} strategy.
Second, we introduce the \emph{Phantom} MIMO-CC strategy, which generalizes the primary strategies
by introducing virtual (``phantom'') antennas at some users to dynamically reconfigure spatial resources, enabling finer control over the number of served streams to accommodate asymmetric transmissions within each delivery interval.
These hybrid strategies consistently outperform the primary strategies across all considered policies. \emph{Super-grouping} closely approaches the performance of the \emph{Phantom} strategy 
by optimizing the number and composition of equivalent user sets, while
 offering a favorable complexity-performance trade-off through structured user partitioning based on the equivalent Rx-side spatial multiplexing. 
Finally, we also develop a simplified variant of the \emph{Phantom} strategy, that achieves a DoF very close to Phantom with much lower delivery complexity.

The
\emph{Phantom} strategy introduced in~\cite{naseritehrani2025achievable} is substantially generalized
from a single-round \(\mathrm{opt}\)-based scheme to a multi-round framework across the three policies. This extension includes round-dependent operating points and recursive constructions
, the general achievable DoF formulation
, and establishes linear decodability for all
policies, rounds, and transmitted intervals through a rank--nullity argument
. In addition, the residual-unicasting (UC) phase is analyzed under a general UC completion-time formulation.
\color{black}

Across all proposed strategies under $\mathrm{opt}$, $\mathrm{cmb}$, and $\mathrm{lin}$ policies, we derive the achievable DoF and the associated delivery/subpacketization scalings either in closed-form or through finite-search optimization over discrete feasible sets\color{black}, enabling a
transparent performance--complexity comparison. Extensive numerical results confirm that hybrid strategies
consistently outperform primary strategies, and that different policy--strategy combinations provide
flexible operating points for trade-offs
between achievable DoF and subpacketization.

\emph{Notations.} Bold upper- and lower-case letters are used for matrices and vectors, respectively. Calligraphic letters denote sets, ${|\CK|}$ denotes set size of $\CK$, ${\CK \backslash \CT}$ represents elements in $\CK$ excluding those in ${\CT}$. $(x)^+\triangleq\max\{x,0\}$ denotes the positive part of $x$.
%
 For better readability, Table~\ref{tab:notation} summarizes the main notation used throughout the paper.
In the symmetric, \emph{min-\(G\)}, \emph{Grouping}, and \emph{Super-grouping} constructions, \(G\), \(\check G\), \(G_j\), and \(\bar G_l\), respectively, determine the
common per-user stream allocation. In \emph{Phantom} round \(i\), \(\hat G_i\) determines the hypothetical operating point, while projection onto the actual dimension \(\SfG_k\) gives the user-specific allocation.
\color{black}
\begin{table*}[t]

\caption{Main notations.}
\label{tab:notation}
\centering
\scriptsize
\renewcommand{\arraystretch}{1.10}
\setlength{\tabcolsep}{3pt}

\begin{tabularx}{\textwidth}{
|>{\centering\arraybackslash}p{0.105\textwidth}
 >{\raggedright\arraybackslash}X
|>{\centering\arraybackslash}p{0.105\textwidth}
 >{\raggedright\arraybackslash}X
|>{\centering\arraybackslash}p{0.105\textwidth}
 >{\raggedright\arraybackslash}X|}
\hline

\rule{0pt}{3.0ex}%
$(\CK,\ \CK_j,\ K_j,\bar\CK_l,\ \bar K_l)$
& \hspace*{2em} User set; group-$j$ set and size; super-group-$l$ set and size
&
$(\SfG_k,\ G_j, \bar G_l)$
& Rx dimensions of user $k$, group $j$, and super-group $l$
&
$(\pi)$
& Reference policy
\\

$(i,\ I,\ s)$
& MC round, round count, and interval index
&
$(\hat{\CK}_i,\ \hat K_i,\ \hat G_i)$
& Round-$i$ user set, size, and hypothetical Rx dimension
&
$(\hat{\Omega}_{\pi,i},\ \hat{\beta}_{\pi,i})$
& Candidate user and stream counts
\\

$(\bm\beta_{\pi,i}^{k}(s))$
& Actual stream allocation of user $k$
&
$(\hat{\CK}_{\pi,i}(s))$
& Served-user set in interval $s$
&
$(\tilde{\CM}_{\pi,i}^{k}(s),
\hat{\CM}_{\pi,i}^{k}(s))$
& \hspace*{2em} Candidate and transmitted subpacket sets
\\

$(\wp_\pi)$
& Policy-dependent packet index
&
$(\vartheta_\pi,\ \hat\varphi_{\pi,i},
\hat\Theta_{\pi,i})$
& Placement, round, and accumulated subpacketization
&
$(\tilde{\Bx}_{\pi,i}^{\rm MC}(s),
\hat{\Bx}_{\pi,i}^{\rm MC}(s))$
& \:\:  Candidate and actual MC vectors
\\

$(\CI_{\wp_\pi,k}(s),
\bar{\BH}_{\wp_\pi,k}(s))$
& \hspace*{2em} Interference set and equivalent channel
&
$(\Gamma_{\pi,i},
r_{\wp_\pi,k}(s))$
& Interfering-user and ZF-constraint counts
&
$(\delta_{\pi,i},
\delta_{\pi,i}^{k}(s))$
& Maximum and user-wise same-index subpacket counts
\\

$(\CK_{\rm UC},\ \CJ_{\rm UC},\ d_k)$
& Residual user/group sets and user-$k$ residual demand
&
$(Z_{\pi}^{\rm Tot},\ Z_{\pi}^{\rm UC})$
& Total missing and residual subpacket counts
&
$(S_{\pi,i}^{\rm MC},\ S_{\pi}^{\rm MC})$
& Round and total MC interval counts
\\

\rule[-1.15ex]{0pt}{3.0ex}%
$(S_{\pi,L}^{\rm UC})$
& BS-side UC completion time
&
$(S_{\pi}^{\rm UC})$
& UC completion time
&
$(\Theta_\pi)$
& Final subpacketization
\\

\hline
\end{tabularx}
\end{table*}
\color{black}
\section{System Model}
\label{section:sys_model}





We consider a MIMO setup like Figure~\ref{fig:ISIT_sysm}, where a single base station (BS) with \( L \) transmit antennas serves \( K \) cache-enabled multi-antenna users. Each user~$k \in [K]$ has $\SfG_k$ receive antennas,\footnote{In fact, \(L\) and 
\(\SfG_k\)
denote attainable spatial multiplexing gains, upper-bounded by the physical antenna counts, channel ranks, and RF chain limits; ``antenna count'' is used for simplicity.
}  
and a cache memory of size $MF$ data bits. The users request files from a library $\CF$ of $N$ files, each with the size of $F$ bits. Consequently, the cache ratio at each user is defined as $\gamma = \frac{M}{N}$. 
%
\begin{figure}[t]
        \centering
        \includegraphics[height = 5.7cm]{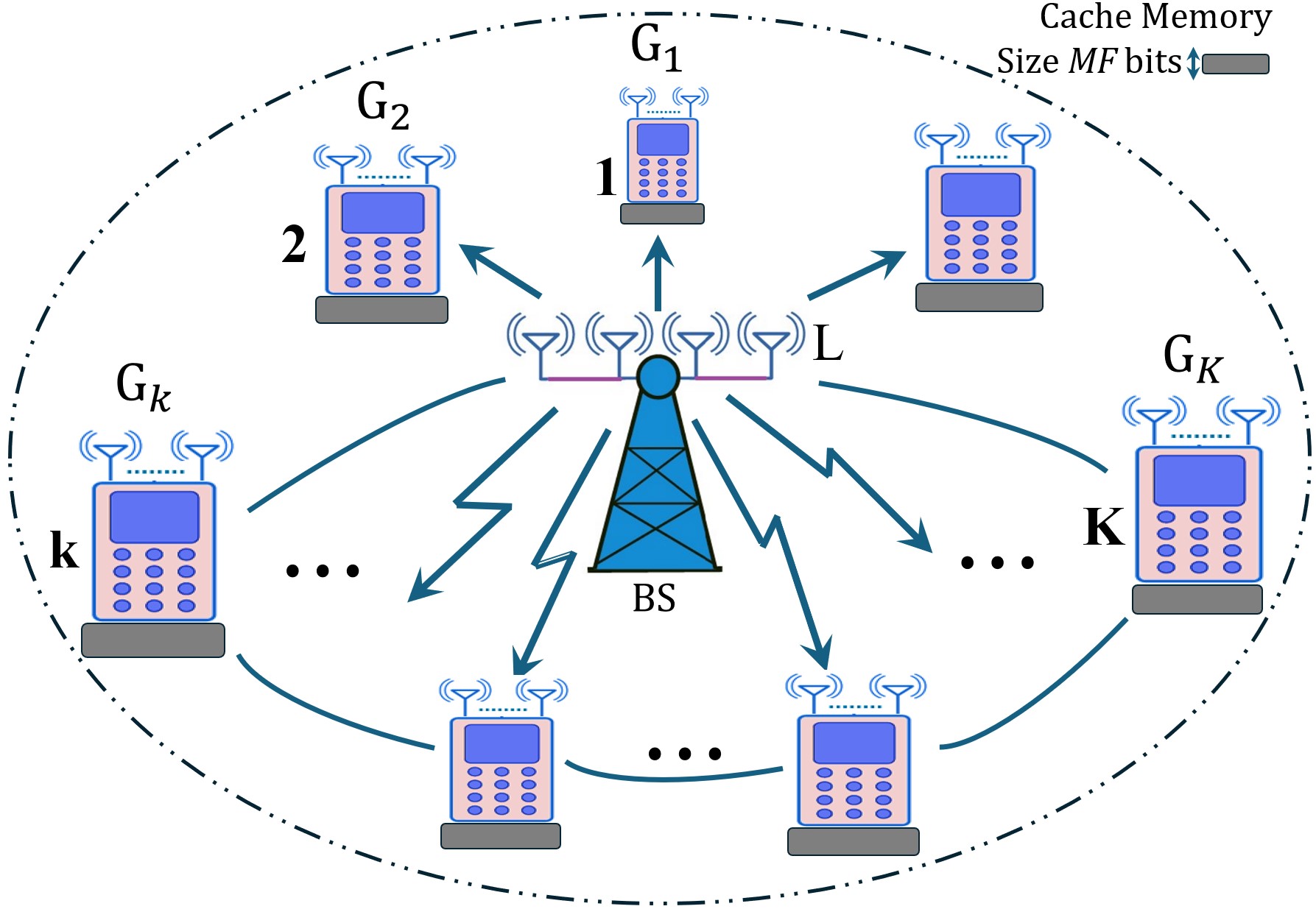} \label{fig:receiver_Prop}
    \caption{Asymmetric MIMO-CC downlink serving a target UE set \(\CK\) of size \(K\), where all UEs has the same cache size \(MF\) bits but heterogeneous Rx spatial multiplexing gains $\SfG_{k}$. 
    }\label{fig:ISIT_sysm}
\end{figure}
%
The user set $\CK \equiv [K]$ is partitioned into \( J \) disjoint subsets \(\CK_{j}\), $j \in \CJ = [J]$, such that each user ${k} \in$ \(\CK_{j}\) has \(\SfG_k = {G}_{j} \) receive antennas, i.e., ${G}_{j}$ indicates the number of receive antennas common for group $j$. We use $K_{j}$ to denote the size of $\CK_{j}$, and assume $\CK_{j}$ sets are sorted based on the antennas' count, i.e., ${G}_{1} < {G}_{2}$ $< \cdots < {G}_{J}$. 

Throughout the paper, we consider \(G_j\leq L\), \(j\in[J]\), so that the BS
has at least as many transmit dimensions as the largest receive dimension.%
\color{black}
The system operation consists of two phases: cache placement and delivery. In the placement phase, each file in the library is split into a number of smaller \textit{subfiles}, and the cache memory of each user is filled with a subset of subfiles, depending on the particular delivery scheme in use, as described in the following sections. 
At the beginning of the delivery phase, each user \(k\) reveals its requested file from the file library $\CF$ 
to the BS. 
The BS then constructs and transmits a set of transmission vectors \(\Bx(s) \in \mathbb{C}^L\), \(s \in [S]\), e.g., during $S$ consecutive time slots, where $S$ is given by the selected delivery scheme. Each transmission vector $\Bx(s)$ delivers a part of the requested data, with size $f(s)$ bits, to each user in a subset \(\CK(s) \subseteq \CK\) of target users, where \(|\CK(s)| = \Omega(s)\), and
\(\Omega(s)\) values are chosen to maximize a performance metric. 
Upon transmission of $\Bx(s)$, user $k \in \CK(s)$ receives
\begin{equation}
\label{eq:RX_signal}
\begin{array}{l}
\By_k(s) = \BH_k \Bx(s) + \Bz_k(s) \; ,
\end{array}
\end{equation}
where $\BH_k \in \mathbb{C}^{\SfG_k \times L}$ represents the channel matrix with rank $\min(\SfG_k,L)$ between the BS and user $k$, and $\Bz_k \sim \mathcal{CN}(\mathbf{0},N_0 \mathbf{I})$ is the noise. 
Full channel state information (CSI) is assumed to be available at the BS.\footnote{In practice, TDD enables BSs to estimate downlink channels from uplink pilots via channel reciprocity~\cite{tolli2019distributed}.}
Depending on the delivery algorithm, in each transmission, each subfile may be split into a different number of smaller \emph{subpackets} to ensure delivering new data in each transmission vector.

\textbf{Performance metric.}
As the placement phase is performed without any prior knowledge of the upcoming delivery phase, we can safely assume that each user has a portion $\gamma$ of its requested file in the cache, and requests the rest $F(1-\gamma)$ bits from the BS.
In general, the sum rate for delivering the missing $KF(1-\gamma)$ bits to all $K$ users can be defined as
$R_{\mathrm{total}} = \frac{K F (1-\gamma)}{T_{\mathrm{total}}}$
bits per second (bps), where $T_{\mathrm{total}}$ is the total delivery time. 
Let us use
$\bm\beta_{k}(s)$ to denote the number of received parallel streams, and $R_{k,l}(s)$ to denote the achievable rate of stream $l \in [\bm\beta_k(s)]$ of user $k$ in transmission $s$. Then, denoting the delivery time for transmission $s$ as $T(s)$, we can write
\begin{equation}
    \begin{array}{l}
    T_{\mathrm{total}} = \!\sum_{s \in [S]} T(s) = \!\sum_{s \in [S]} \frac{\sum_{k \in \CK(s)} \bm\beta_k(s) f(s)}{\sum_{k \in \CK(s)} \sum_{l=1}^{\bm\beta_k(s)} R_{k,l}(s)}.
    \end{array}
\end{equation}
Clearly, $R_{k,l}(s)$ (and hence, $T_{\mathrm{total}}$ and $R_{\mathrm{total}}$) is a function of the transmission SNR.
%
In this paper, we are interested in characterizing the achievable spatial DoF, defined as:
\begin{equation}\label{eq:DoF_ideal}
\begin{array}{l}
    \mathrm{DoF} = \lim\limits_{\textrm{SNR}\rightarrow\infty}\frac{{R}_{\textrm{total}}}{\log \textrm{SNR}} = \frac{KF(1-\gamma)}{\lim\limits_{\textrm{SNR}\rightarrow\infty} \log \textrm{SNR} \times \lim\limits_{\textrm{SNR}\rightarrow\infty} T_{\textrm{total}}
    }.
\end{array}
\end{equation}
Assuming that inter-stream interference can be completely removed by zero-forcing~(ZF) precoders, as $\textrm{SNR} \to \infty$, each stream can be delivered at a rate approaching the channel capacity, i.e., 
\begin{equation}
\begin{array}{l}
    \lim\limits_{\textrm{SNR} \to \infty} R_{k,l}(s) = C \coloneq \lim\limits_{\textrm{SNR}\rightarrow\infty} \log(\textrm{SNR}), \\[2ex] \qquad \qquad \forall s \in [S], k \in \CK(s), l \in [\bm\beta_k(s)].
\end{array}
\end{equation}
Hence, we get $\lim_{\textrm{SNR}\rightarrow\infty} T_{\textrm{total}} = \sum_{s \in [S]} \frac{f(s)}{C}$, which together with~\eqref{eq:DoF_ideal} results in:
\begin{equation}
    \begin{array}{l}\label{eq:DoF_ideal1}
    \mathrm{DoF} = \frac{K F(1-\gamma)}{\sum_{s\in [S]}f(s)}.
    \end{array}
\end{equation}
Now, as each user $k \in \CK(s)$ receives $\bm\beta_k(s)$ parallel streams in transmission $s$, and all the missing subpackets of users are delivered after all $S$ transmissions, we have $KF (1-\gamma) = \sum_{s \in [S]} f(s) 
\sum_{k \in \CK(s)} \bm\beta_k(s)$, and~\eqref{eq:DoF_ideal1} can be re-written as
\begin{equation}\label{eq:asym_DoF_phantom_version}
\begin{array}{l}
    \mathrm{DoF} = \frac{\sum_{s \in [S]} f(s) \cdot \sum_{k \in \CK(s)}\bm\beta_k(s)}{\sum_{s \in [S]} f(s)}.
\end{array}
\end{equation}
Depending on the particular delivery scheme, all the users may receive the same number of streams per transmission, i.e., $\bm\beta_k(s) = \beta(s)$ for all $k \in \CK(s)$. In this case, equation~\eqref{eq:asym_DoF_phantom_version}
can be further simplified to
\begin{equation}\label{eq:asym_DoF}
\begin{array}{l}
    \mathrm{DoF} = \frac{\sum_{s \in [S]} f(s) \cdot \beta(s) \cdot \Omega(s)}{\sum_{s \in [S]} f(s)},
\end{array}
\end{equation}
which can also be interpreted as the weighted sum (by the factor of subpacket length $f(s)$) of per-transmission DoF values (total number of parallel streams, i.e., $\beta(s)\Omega(s)$).

\begin{remarknum}\label{standard_dof_analysis}
The DoF analysis assumes perfect CSI and ideal interference suppression and characterizes the number of independent streams that can be delivered in parallel in the high-SNR regime. Each DoF operating point therefore specifies a feasible number of simultaneously served users and streams per user, providing guidance for transmission-mode selection under practical constraints. With imperfect CSI or nonideal interference suppression, some spatial dimensions may need to be reserved for robustness, while finite-SNR operation may also favor serving fewer simultaneously served users and/or streams~\cite{tolli2017multi}. In particular, if the CSI uncertainty does not vanish as the SNR increases, the resulting residual interference may prevent the nominal DoF from being achieved.

\end{remarknum}\color{black}
\section{Reference Symmetric MIMO-CC Policies}
\label{ref-schemes}
In this paper, we propose multiple transmission strategies for asymmetric (in terms of the number of receive antennas) MIMO-CC systems. Each of these strategies can itself be built upon various reference
symmetric MIMO-CC schemes, herein called \emph{policies}, available in the literature\cite{naseritehrani2026cache,shariatpanahi2016multi,tolli2017multi,salehi2020lowcomplexity,salehi2021MIMO}. Here, we consider three reference 
policies, represented by $\pi\in\{\mathrm{opt},\mathrm{cmb},\mathrm{lin}\}$. Depending on the chosen policy, the resulting asymmetric design naturally leads to different tradeoffs, including different achievable DoF levels and varying subpacketization requirements. 
 For compactness, the CC gain \(t\triangleq K\gamma\) is used interchangeably with \(K\gamma\) in this section.\color{black}

\subsection{The DoF-optimized policy $(\mathrm{opt})$}
%

To the best of our knowledge, the largest known linear achievable single-shot DoF lower bound \color{black} for any general MIMO-CC setup with a symmetric number of $G$ antennas at all users is established in~\cite{naseritehrani2026cache}. 

\vspace{1pt}
\noindent\textbf{Placement phase.} Each file $W \in \CF$ is split into ${\vartheta}_{\mathrm{opt}} = \binom{K}{K \gamma}$ subfiles $W_{\CP}$, where $\CP$ can denote every subset of the users with size $K\gamma$. Then, at the cache memory of each user~$k$, we store all $W_{\CP}$ for which and $ k \in \CP$.

\vspace{1pt}
\noindent\textbf{Delivery phase.}
The algorithm presented in~\cite{naseritehrani2026cache} assumes the same number of $\Omega \ge K\gamma +1$ users are served in each transmission, and each user receives the same number of parallel streams $\beta$. $\Omega$ and $\beta$ are found by solving the DoF optimization problem\footnote{A simple one-dimensional line search over the integer $\Omega$ is obtained by enforcing the $\beta$-constraint with equality (i.e., the floor term), yielding~\cite{naseritehrani2026cache}

$\Omega^* =  \arg\max\limits_{\substack{\\[.3ex] K\gamma+1 \le \Omega \le K\gamma+L}} \Omega  \bigg\lfloor \min\Big(G, \frac{ L \binom{\Omega-1}{K\gamma}}{1 + (\Omega - K\gamma-1)\binom{\Omega-1}{K\gamma}}\Big)\bigg\rfloor$.} 
\begin{equation}\label{eq:total_DoF0}
\begin{array}{l}
 (\Omega^*, \beta^*) = \Delta(K, \gamma, L, G),
 \end{array}
\end{equation}
where
\begin{equation}\label{eq:total_DoF}
\begin{array}{l}
   \Delta(K, \gamma, L, G) \triangleq  \arg \max_{\Omega,\beta}~ \Omega \cdot \beta, \\[1ex]
    \mathrm{s.t.}\:\:{\beta \le \mathrm{min}\bigg({G}, \frac{L \binom{\Omega-1}{K\cdot \gamma}}{1 + (\Omega - K\cdot \gamma-1)\binom{\Omega-1}{K\cdot \gamma}} \bigg).}
\end{array}
 \end{equation}

In~\cite[Theorem~2]{naseritehrani2026cache}, it is shown that for any general $\Omega$ and $\beta$ values satisfying the condition in~\eqref{eq:total_DoF}, one can always design a MIMO-CC scheme where in each transmission, every target user can decode all its $\beta$ parallel streams with a linear receiver. 
The algorithm involves splitting each subfile $W_{\CP}$ into ${\phi}_{\mathrm{opt}} = \binom{K-K\gamma-1}{\Omega^*-K\gamma-1} \beta^*$ subpackets $W_{\CP}^q$, and creating 
\begin{equation}\label{eq:tran_cnt_minG1}
 \begin{array}{l}
 {S}_{\mathrm{opt}} = \binom{K}{\Omega^*} \binom{\Omega^*-1}{K\gamma}
\end{array}
\end{equation}
transmission vectors, each delivering $\beta^*$ parallel streams to every user within a target subset of $\Omega^*$ users. Accordingly, $\lambda_{\mathrm{opt}}=\binom{K-1}{\Omega^*-1}\binom{\Omega^*-1}{K\gamma}$ is the total number of
MC transmissions that serve a given user $k\in{\CK}$.

\vspace{1pt}
\noindent\textbf{DoF analysis.} 
The DoF of the $\mathrm{opt}$ policy is given as
\begin{equation}\label{eq:DoF_minG}
 \begin{array}{l}
 \textrm{DoF}^{*}_{\mathrm{opt}} = \Omega^* \cdot \beta^*.
\end{array}
\end{equation}
This follows from~\eqref{eq:asym_DoF}, by allocating an equal size of 
\begin{equation}\label{files_size_sym}
\begin{array}{l}
f = \tfrac{F}{\Theta_{\mathrm{opt}}}\ \text{bits}    
\end{array}
\end{equation}
per all streams, where 
\begin{equation}\label{sym_subpkt}
\begin{array}{l}
    \Theta_{\mathrm{opt}} = {\vartheta}_{\mathrm{opt}} {\phi}_{\mathrm{opt}} = \binom{K}{K\gamma} \binom{K-K\gamma-1}{\Omega^{*}-K\gamma-1}\beta^{*},
    \end{array}
\end{equation} 
denotes the final subpacketization. In DoF calculations, design parameters are set to their optimized values, i.e., $\Omega(s) = \Omega^*$ and $\beta(s) = \beta^*$, for every $s \in [{S}_{\mathrm{opt}}]$.

Within the candidate \((\Omega,\beta)\) operating points of the \(\mathrm{opt}\) framework, the choice \(\beta=G\) yields a simple ceiling-based benchmark. Its feasibility and relation to the general $\mathrm{opt}$ policy are characterized next.
\begin{remarknum}\label{opt_special_case_ceiling}
The search-based \(\mathrm{opt}\) policy includes the ceiling-based achievable DoF as a particular solution whenever the corresponding operating point satisfies the linear-decodability condition.
According to~\cite[Corollary~1]{naseritehrani2026cache},
the \(\mathrm{opt}\) policy considers feasible pairs \((\Omega,\beta)\)
satisfying $\beta\leq G$ and
\begin{equation}
\Big\lceil
\nicefrac{\beta}{\binom{\Omega-1}{t}}
\Big\rceil
\leq
L-(\Omega-t-1)\beta,
\label{eq:opt_ceiling_feasibility}
\end{equation}
where the corresponding achievable DoF is \(\Omega\beta\). Now, consider the parameter choice
$\beta=G$, $\Omega=t+\lceil \frac{L}{G}\rceil$.
For this choice,~\eqref{eq:opt_ceiling_feasibility} reduces to
\begin{equation}
\Big\lceil
\nicefrac{G}{\binom{t+\lceil \frac{L}{G}\rceil-1}{t}}
\Big\rceil
\leq
L-(\lceil \nicefrac{L}{G}\rceil-1)G.
\label{eq:ceiling_special_feasibility}
\end{equation}
Whenever~\eqref{eq:ceiling_special_feasibility} holds, the corresponding achievable DoF is
\begin{equation}\label{ceil_DoF}
{\rm DoF}_{\rm ceil} = \Omega\beta
=
G\big(t+\big\lceil\nicefrac{L}{G}\big\rceil\big).
\end{equation}
Consequently,
$\mathrm{DoF}_{\rm opt}
\geq
{\rm DoF}_{\rm ceil}$. 
To quantify the potential gain in achievable DoF from optimizing
\((\Omega,\beta)\) beyond this feasible ceiling benchmark,
Appendix~\ref{app_A0} upper-bounds the resulting DoF gap.
Moreover, substituting the ceiling operating point into the \(\mathrm{opt}\) subpacketization and transmission-count expressions gives
$\Theta_{\rm ceil}
\!=\!
\binom{K}{t}
\binom{K-t-1}{\left\lceil \nicefrac{L}{G}\right\rceil-1}
G$,
and
$S_{\rm ceil}
\!=\!
\binom{K}{t+\left\lceil \nicefrac{L}{G}\right\rceil}
\binom{t+\left\lceil \nicefrac{L}{G}\right\rceil-1}{t}$.
Therefore, whenever~\eqref{eq:ceiling_special_feasibility} holds, the
ceiling-based construction is included in the feasible design space of the more
general \(\mathrm{opt}\) policy\footnote{
For \(\beta\!=\!G\), Lemma~1 in~\cite{naseritehrani2026cache}
provides only the necessary bound
\(\Omega\!\leq\! t+\lfloor L/G\rfloor+1\), whereas
\eqref{eq:opt_ceiling_feasibility}, rewritten as $(\Omega-t-1)
+\nicefrac{1}{\binom{\Omega-1}{t}}
\leq\nicefrac{L}{G}$, yields a sharper bound
\(\Omega\!\leq\! t+\lceil L/G\rceil\).
The bounds coincide when \(L/G\!\notin\!\mathbb{Z}\). 
When \(L/G\!\in\!\mathbb Z\), the upper bound from~\cite[Lemma~1]{naseritehrani2026cache} exceeds the sharper bound by one.
Neither bound alone guarantees feasibility, which additionally
requires~\eqref{eq:ceiling_special_feasibility}.}. This result also coincides with~\cite[Theorem~3]{huang2026placement}.
\end{remarknum}
\color{black}

\subsection{The combinatorial parallel links policy $(\mathrm{cmb})$}

This policy builds on the MISO-CC scheme in~\cite{shariatpanahi2016multi,tolli2017multi} and adopts key delivery-design insights from the MIMO-CC framework in~\cite{salehi2021MIMO}. In particular, the $\mathrm{cmb}$ policy views the system as a set of $G$ parallel MISO-CC sub-systems, and uses receiver-side spatial multiplexing to suppress (parts of) the interference these sub-systems cause on each other. The main motivation for considering this policy is that it provides a closed-form DoF expression 
and requires no line search over the multicasting parameters, unlike the \(\mathrm{opt}\) policy.

\vspace{1pt}
\noindent\textbf{Placement phase.}
Similar to the placement phase of the $\mathrm{opt}$ policy, each file $W \in \CF$ is split into ${\vartheta}_{\mathrm{cmb}} = \binom{K}{K \gamma}$ subfiles $W_{\CP}$, and each user $k$ stores all $W_{\CP}$ for which $k \in \CP$.

\vspace{1pt}
\noindent\textbf{Delivery phase.}
The first step is to consider a \emph{virtual} MISO-CC network, with 
the transmitter-side spatial multiplexing gain $\alpha' \equiv  \lfloor \frac{L}{G} \rfloor$. Then, we use the MISO-CC scheme in~\cite{shariatpanahi2016multi,tolli2017multi} to create a set of $S'=\binom{K}{t+\alpha'}$ \emph{virtual} transmission vectors (these vectors are not transmitted and serve merely as placeholders to create the final vectors). Following~\cite{shariatpanahi2016multi,tolli2017multi}, this requires extra subpacketization by a factor of $\phi' = \binom{K-t-1}{\alpha'-1}$, and $t+\alpha'$ users are included in each virtual transmission vector. However, directly using the resulting virtual vectors could necessitate the target users to employ non-linear receivers (if the number of parallel streams per user exceeds their multiplexing capacity).
To alleviate this issue, we replace XOR terms with signal-level processing, and replace each virtual transmission vector with $\binom{t+\alpha'-1}{t}$ new virtual vectors each delivering exactly one term to each of the $t+\alpha'$ target users (see~\cite{salehi2022multi}).
Now, to create the final transmission vectors, we repeat each resulting virtual vector $G$ times (by further splitting every subpacket by a factor of $G$ and using new subpacket indices in each repetition), and transmit all the $G$ repetitions of a virtual vector with a single transmission vector. The receiver-side (as well the available transmitter-side) spatial multiplexing capacity is used to remove the extra inter-stream interference caused by such parallel transmissions. 
Following this algorithm, the final subpacketization level is
\begin{equation}\label{cpar_sbpkt}
\begin{array}{l}
    \Theta_{\mathrm{cmb}} = G\binom{K}{K \gamma}\binom{K-K\gamma-1}{\lfloor L/G \rfloor -1}
     \end{array}
\end{equation}
and the number of transmission slots is
\begin{equation}\label{cpar_txcnt}
\begin{array}{l}
S_{\mathrm{cmb}} = \binom{K}{K\gamma+\lfloor L/G \rfloor}\binom{K\gamma+\lfloor L/G \rfloor -1}{K\gamma}.
\end{array}
\end{equation}
Also, $\lambda_{\mathrm{cmb}}=\binom{K-1}{\lfloor\frac{L}{G}\rfloor+K\gamma-1}\binom{\lfloor\frac{L}{G}\rfloor+K\gamma-1}{K\gamma}$
gives the number of MC transmissions that include a given user $k\in{\CK}$.

\vspace{1pt}
\noindent\textbf{DoF analysis.}
Each user now receives $\beta(s)= \beta={G}$ parallel streams per transmission, while the number of users per transmission is fixed at $\Omega(s) = \Omega=t+\alpha'$. As a result: 
\begin{equation}\label{eq:DoF_minG_cpar}
\begin{array}{l}
 \mathrm{DoF}_{\mathrm{cmb}}=\Omega\cdot\beta = G(t+\alpha') = G K\gamma + G\lfloor\frac{L}{G}\rfloor.
 \end{array}
\end{equation}
%
%
Similar to the $\mathrm{opt}$ policy, the $\mathrm{cmb}$ policy is also applicable to any set of network parameters ($K$, $\gamma$, $L$, $G$). However, both policies suffer from a large subpacketization overhead due to their underlying combinatorial structures. 
The core motivation behind considering the $\mathrm{cmb}$ policy is its closed-form DoF expression, eliminating the line search to determine the optimized operating point. 
The $\mathrm{cmb}$-based construction guarantees linear decodability, as discussed in Remark~\ref{Linear-Decodability_general}. The DoF gap between $\mathrm{opt}$ and $\mathrm{cmb}$ is at most $2G-2$~\cite{naseritehrani2026cache}.\color{black} 
\subsection{The cyclic parallel links policy $(\mathrm{lin})$}\label{cyc_subsubs}

This policy addresses the large subpacketization bottleneck of the other two policies, but is applicable under the feasibility condition of $\lfloor L/G\rfloor \ge K\gamma$. 
The idea is to follow a similar process as $\mathrm{cmb}$ policy, but to apply the cyclic MISO-CC scheme in~\cite{salehi2020lowcomplexity} to create virtual transmission vectors for the virtual MISO network (here, there is no need to increase the number of transmission vectors as the cyclic scheme is inherently linear). 

\vspace{1pt}
\noindent\textbf{Placement phase.}
Data placement is done using a cyclic structure, as described in~\cite{salehi2020lowcomplexity}. Each file is split into ${\vartheta}_{\mathrm{lin}} = K$ packets $W_p$, $p \in [K]$. Then, user~1 stores $W_K,W_1,\cdots,W_{K\gamma-1}$, and for each user $k$, $k>1$, the set of stored packet indices is found by a circular shift of the indices for user $k-1$, over the index set $[K]$ (see~\cite{salehi2020lowcomplexity} for details and graphical representations). 

\vspace{1pt}
\noindent\textbf{Delivery phase.}
We again treat the asymmetric MIMO-CC system as a set of ${G}$ parallel MISO-CC subsystems. Similar to the 
$\mathrm{cmb}$ policy, we consider a virtual MISO-CC scheme with 
$\alpha' \equiv \lfloor \tfrac{L}{{G}} \rfloor$. However, now, we apply the cyclic scheme in~\cite{salehi2020lowcomplexity} to generate the virtual transmission vectors. We split each packet $W_p$ into $\phi^{}_{\mathrm{lin}} = t + \alpha'$ subpackets $W_p^q$, and create $K(K-t)$ virtual transmission vectors through
two perpendicular cyclic shift operations over a tabular structure~\cite{salehi2020lowcomplexity} 
(details are not repeated here for the 
sake of
brevity). Then, we create a final transmission vector for each virtual vector, by repeating the latter by ${G}$ times and transmitting all the copies in parallel.
The corresponding required subpacketization is now
\begin{equation}\label{cycl_sbpkt}
\begin{array}{l}
    \Theta_{\mathrm{lin}} = GK(K\gamma + \lfloor\frac{L}{G}\rfloor)
    \end{array}
\end{equation}
and the number of transmissions is
\begin{equation}\label{cycl_tx_cnts}
 S_{\mathrm{lin}} = K(K-K\gamma).   
\end{equation}
Accordingly $\lambda_{\mathrm{lin}}= (K-K\gamma)(K\gamma+\lfloor\frac{L}{G}\rfloor)$ denotes the total number of transmissions that serve a given user $k\in {\CK}$.

\vspace{1pt}
\noindent\textbf{DoF analysis.}
Provided that the condition $\lfloor \frac{L}{G} \rfloor \ge K \gamma$~\cite{salehi2021MIMO} holds,
the DoF of $\mathrm{lin}$ policy remains the same as in the $\mathrm{cmb}$ policy, as $\beta(s)= \beta={G}$, and $\Omega(s)= \Omega={t+\alpha'}$:
\begin{equation}\label{cycl_DoF}
\begin{array}{l}
    \mathrm{DoF}_{\mathrm{lin}} = \mathrm{DoF}_{\mathrm{cmb}}=\Omega\cdot\beta =   G K\gamma + G\lfloor\frac{L}{G}\rfloor.
    \end{array}
\end{equation}
\begin{remarknum}\label{Linear-Decodability_general}
The linear decodability of the $\mathrm{opt}$ policy is proven in~\cite[Theorem 1]{naseritehrani2026cache}. The same reasoning used therein can also be employed to show linear decodability of both $\mathrm{cmb}$ and $\mathrm{lin}$ policies. In general, if we use $\delta_{\pi}$ to denote the maximum number of subpackets with the same packet index in each transmission interval with policy $\pi$, the proof in~\cite[Theorem 1]{naseritehrani2026cache} implies that linear decodability requires
\begin{equation} \label{DoF_bndd}
\begin{aligned}
    \beta &\le \min \left(G, \frac{L-\delta_{\pi}}{\Omega-K\gamma-1} \right), \qquad \Omega > K\gamma+1, \\[-.95mm]
    \beta &\le G, \qquad \qquad \qquad \qquad \qquad \quad \; \Omega = K\gamma+1.
\end{aligned}
\end{equation}

The linear decodability of the \(\mathrm{cmb}\) construction follows by setting \(\delta_{\pi}\!=\!\beta\!=\!G\) and \(\Omega\!=\!K\gamma+\lfloor L/G\rfloor\) in~\eqref{DoF_bndd}. The same holds for the $\mathrm{lin}$ construction under its general feasibility condition $\lfloor L/G\rfloor\!\geq \!K\gamma$. For the ceiling-based operating point, setting $\beta\!=\!G$, \(\delta_{\rm ceil}\!=\! \Big\lceil \frac{G}{\binom{K\gamma+\lceil L/G\rceil-1}{K\gamma}}
\Big\rceil\) and \(\Omega\!=\!K\gamma+\lceil L/G\rceil\) in~\eqref{DoF_bndd} recovers exactly the feasibility condition in~\eqref{eq:ceiling_special_feasibility}. Thus, unlike the guaranteed-feasible $\mathrm{cmb}$ point, the ceiling-based point is further feasibility-constrained because, for some parameter combinations, serving the additional user may leave insufficient spatial dimensions for linear decodability.
\end{remarknum}
\color{black}

\color{black}


\section{Primary asymmetric solutions}

In this section, we analyze two distinct canonical CC transmission strategies for asymmetric MIMO-CC systems. The \emph{min-$G$} targets to maximize the global CC gain, while \emph{Grouping} aims at maximizing the receiver-side spatial multiplexing gain.

\subsection{ \texorpdfstring{The min-$G$ strategy}{The min-$G$ strategy}}

\begin{table*}[!t]
    \centering
    \caption{The \emph{min-$G$} strategy parameters for different reference policies}
    \label{tab:minG-table}
    \begin{tabular}{>{\centering\arraybackslash}p{0.12\textwidth}!{\vrule width 1pt}>{\centering\arraybackslash}p{0.26\textwidth}|>{\centering\arraybackslash}p{0.26\textwidth}|>{\centering\arraybackslash}p{0.26\textwidth}}
        \textbf{Policy} & \textbf{DoF-optimized} & \textbf{Combinatorial parallel links} & \textbf{Cyclic parallel links} \\
        \Xhline{1pt}
        \textbf{DoF} 
        & $\begin{aligned}\\[-1.5mm]
        \textrm{DoF}^{*}_{\mathrm{opt},\check{G}} &= \Omega^*_{\check{G}} \cdot \beta^*_{\check{G}} \\
        \Omega^*_{\check{G}} , \beta^*_{\check{G}} &\leftarrow \Delta(K,\gamma,L,\check{G})
        \end{aligned}$
        & $\mathrm{DoF}^{}_{\mathrm{cmb},\check{G}} = \check{G} \color{red}\big(\color{black}K\gamma + \big\lfloor \tfrac{L}{\check{G}} \big\rfloor\color{red}\big)\color{black}$ 
        & $\mathrm{DoF}^{}_{\mathrm{lin},\check{G}} = \check{G} \color{red}\big(\color{black}K\gamma +  \big\lfloor \tfrac{L}{\check{G}} \big\rfloor\color{red}\big)\color{black}$ \\
        \hline
        \textbf{Subpacketization} 
        & \vspace{.2mm}$\Theta^{}_{\mathrm{opt},\check{G}} = \beta_{\check{G}}^*\binom{K}{K\gamma}\binom{K-K\gamma-1}{\Omega_{\check{G}}^*-K\gamma-1}$ 
        & \vspace{.2mm}$\Theta^{}_{\mathrm{cmb},\check{G}} = \check{G} \binom{K}{K \gamma}\binom{K-K\gamma-1}{\big\lfloor \tfrac{L}{\check{G}} \big\rfloor -1}$ 
        & \vspace{.2mm}$\Theta^{}_{\mathrm{lin},\check{G}} = \check{G} K \left( K\gamma + \big\lfloor \tfrac{L}{\check{G}} \big\rfloor \right)$ \\
        \hline
        \textbf{Trans. count} 
        & \vspace{.1mm}${S}_{\mathrm{opt},\check{G}} = \binom{K}{\Omega^*_{\check{G}}} \binom{\Omega^*_{\check{G}}-1}{K\gamma}$ 
        & \vspace{.1mm}$S^{}_{\mathrm{cmb},\check{G}} = \binom{K}{K\gamma+\big\lfloor \tfrac{L}{\check{G}} \big\rfloor}\binom{K\gamma+\big\lfloor \tfrac{L}{\check{G}} \big\rfloor-1}{K\gamma}$ 
        & \vspace{.1mm}$S^{}_{\mathrm{lin},\check{G}} = K\left(K-K\gamma\right)$ \\
        \hline
        \textbf{Applicability} 
        & All network parameters 
        & All network parameters 
        & $\big\lfloor \tfrac{L}{\check{G}} \big\rfloor \ge K\gamma$ \\
    \end{tabular}%
\end{table*}
\begin{figure}[t]
         \centering
        \includegraphics[height = 2.15cm]{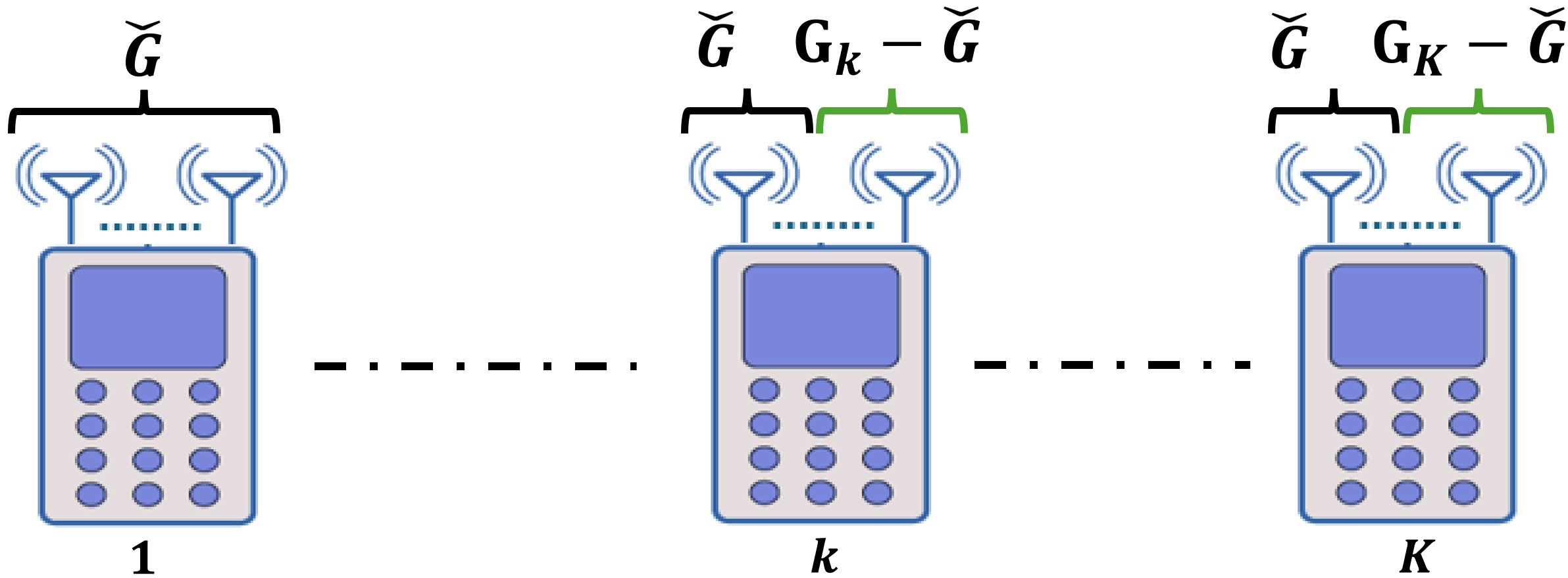} 
    \caption{The \emph{min-$G$} delivery strategy. Spatial multiplexing capacity is fixed to with $\check{G}$ for all $k\in [K]$}\label{fig:min_G}
\end{figure}
The \emph{min-$G$} strategy transforms the asymmetric MIMO-CC system into a symmetric setup by considering the effective spatial multiplexing gain of
\begin{equation}\label{eq:min_G}
 \begin{array}{l}
 \check{G}  = \min_{k\in\CK} \SfG_k = G_{1}
\end{array}
\end{equation}
for all users. The idea behind this strategy is to maximize the effect of the cumulative cache size of users (i.e., the CC gain) in the CC delivery session. However, for users in groups $\CK_{j}$, $j \ge 2$, a part of the spatial multiplexing capability ($G_{j}-\check{G}$) is left unused (see Fig.~\ref{fig:min_G}).
Built upon this core idea, the analytical results of the \emph{min-$G$} strategy for various reference policies, obtained by substituting $G$ with $\check{G}$ in respective equations, are summarized in Table~\ref{tab:minG-table}.

\color{black}

\subsection{The Grouping strategy}
\label{subsec: grouping}
With the \emph{Grouping} strategy, we treat each subset $\CK_{j}$ of users separately (see Fig~\ref{fig:grps}). With this approach, all users in a set $\CK_{j}$ apply the same reference policy locally, independent of users in other subsets. 
The primary rationale behind this strategy is to maximize the spatial multiplexing gain for each subset $\CK_{j}$, at the cost of reducing the effective cumulative cache size (i.e., the CC gain). Built upon this core idea, all the analytical results developed for the proposed \emph{Grouping} strategy are summarized in Table~\ref{tab:grouping-table}.

\vspace{1pt}
\noindent\textbf{Placement phase.}
Data placement is performed in $J$ steps, where at step $j \in \CJ$, cache memories of users in $\CK_{j}$ are filled with data, following the selected reference policy, by replacing $K$ with $K_{j}$.

\vspace{1pt}
\noindent\textbf{Delivery phase.} Data delivery is also done in $J$ orthogonal steps, following the selected reference policy, by replacing $K$ with $K_{j}$ and $G$ with $G_{j}$. At step $j \in \CJ$, all the missing parts of data files requested by users in $\CK_{j}$ are delivered.

\vspace{1pt}
\noindent\textbf{DoF analysis.}
The following lemmas clarify how the DoF expressions in Table~\ref{tab:grouping-table} are calculated.

\begin{figure} [!t]
         \centering
        \includegraphics[height = 2.8cm]{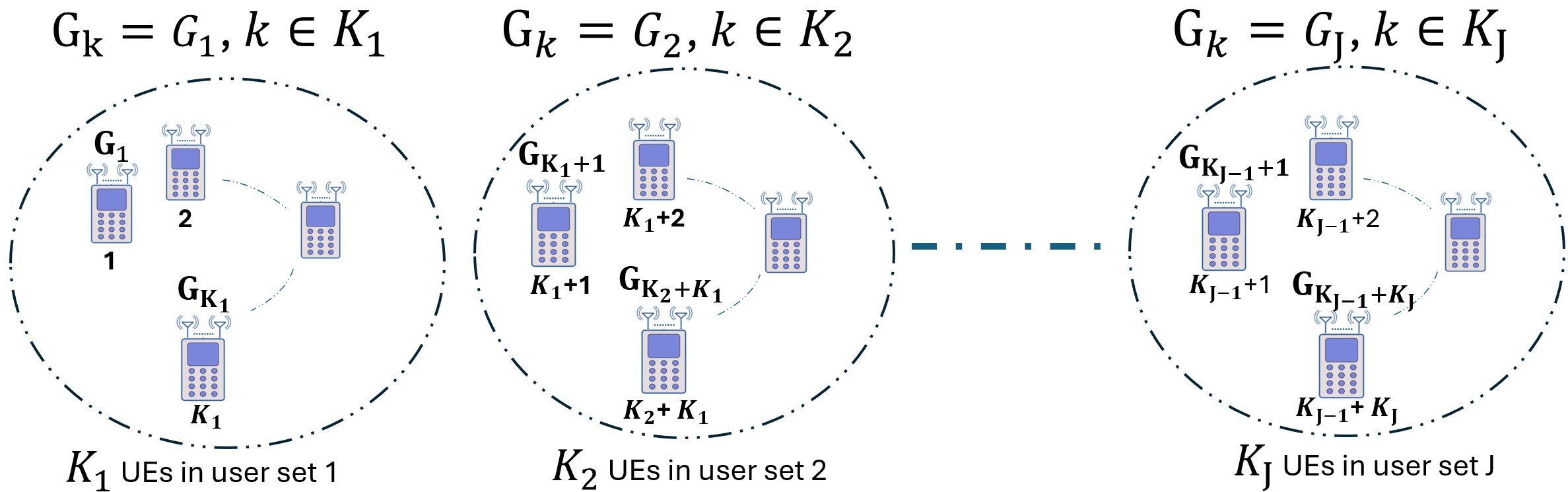} 
    \caption{The \emph{Grouping}  strategy with the spatial multiplexing capacity of $G_{j}$ and $K_{j}$ users in group $j\in\CJ$}\label{fig:grps}
\end{figure}

\begin{table*}[!t]
    \centering
    \caption{The \emph{Grouping} strategy parameters for different reference policies}
    \label{tab:grouping-table}
    \begin{tabular}{>{\centering\arraybackslash}p{0.12\textwidth}!{\vrule width 1pt}>{\centering\arraybackslash}p{0.26\textwidth}|>{\centering\arraybackslash}p{0.26\textwidth}|>{\centering\arraybackslash}p{0.26\textwidth}}
        \textbf{Policy} & \textbf{DoF-optimized} & \textbf{Combinatorial parallel links} & \textbf{Cyclic parallel links} \\
        \Xhline{1pt}
        \textbf{DoF} 
        & $\begin{aligned}
        &\textrm{DoF}^{*}_{\mathrm{opt},\CJ} = \\
        &\qquad \frac{K}{ \sum_{j\in\CJ}  \frac{ K_{j}}{ \Omega^*_{j} \beta^*_{j}}} 
        \end{aligned}$ \eqref{eq:total_DoF_grouping}
        & $\begin{aligned}
        &\mathrm{DoF}^{}_{\mathrm{cmb},\CJ} = \\
        &\qquad\frac{K}{\sum_{j\in\CJ}^{} \frac{K_{j}}{G_{j} K_{j} \gamma + G_{j} \big\lfloor \tfrac{L}{G_{j}} \big\rfloor}}
        \end{aligned}$ \eqref{eq:total_DoF_grouping_cpar}
        & $\begin{aligned}
        &\mathrm{DoF}^{}_{\mathrm{lin},\CJ} = \\
        &\qquad \frac{K}{\sum_{j\in\CJ}^{} \frac{K_{j}}{G_{j} K_{j} \gamma + G_{j} \big\lfloor \tfrac{L}{G_{j}} \big\rfloor}}
        \end{aligned}$\eqref{eq:total_DoF_grouping_lpar} \\ 
        \hline
        \makecell{
        \textbf{DoF} \\
        per group $j$
        }
        &$\begin{aligned}
        \textrm{DoF}^{*}_{\mathrm{opt},j} &= \Omega^*_{j} \cdot \beta^*_{j} \\
        \Omega^*_{j} , \beta^*_{j} &\leftarrow \Delta(K_{j},\gamma,L,{G}_{j})
        \end{aligned}$ 
        & $\begin{aligned}
        &\mathrm{DoF}^{}_{\mathrm{cmb},j} = \\
        &\qquad G_{j}\color{red}\big(\color{black} K_{j}\gamma + \big\lfloor \tfrac{L}{G_{j}} \big\rfloor\color{red}\big)\color{black}
        \end{aligned}$
        & $\begin{aligned}
        &\mathrm{DoF}^{}_{\mathrm{lin},j} = \\
        &\qquad G_{j} \color{red}\big(\color{black}K_{j}\gamma + \big\lfloor \tfrac{L}{G_{j}} \big\rfloor\color{red}\big)\color{black}
        \end{aligned}$
        \\
        \hline
        \makecell{
        \textbf{Subpacketization} \\
        per group $j$
        }
        & $\begin{aligned}
        &\Theta^{}_{\mathrm{opt},j} = \\
        &\quad \binom{K_{j}}{K_{j}\gamma}\binom{K_{j}-K_{j}\gamma-1}{\Omega_{j}^*-K_{j}\gamma-1}\beta_{j}^*
        \end{aligned}$
        & $\begin{aligned}
        &\Theta^{}_{\mathrm{cmb},j} = \\
        &\quad  \binom{K_{j}}{K_{j} \gamma}\binom{K_{j}-K_{j}\gamma-1}{\big\lfloor \tfrac{L}{G_{j}} \big\rfloor -1}G_{j}
        \end{aligned}$
        & $\begin{aligned}
        &\Theta^{}_{\mathrm{lin},j} = \\
        &\quad G_{j} K_{j} \left( K_{j}\gamma + \big\lfloor \tfrac{L}{G_{j}} \big\rfloor \right)
        \end{aligned}$ \\
        \hline
        \makecell{
        \textbf{Trans. count} \\
        per group $j$
        }
        & $\begin{aligned}
        &{S}_{\mathrm{opt},j} = \\
        &\qquad \binom{K_{j}}{\Omega^*_{j}} \binom{\Omega^*_{j}-1}{K_{j}\gamma}
        \end{aligned}$ 
        & $\begin{aligned}
        &S^{}_{\mathrm{cmb},j} = \\
        & \binom{K_{j}}{K_{j}\gamma+\big\lfloor \tfrac{L}{G_{j}} \big\rfloor}  \binom{K_{j}\gamma+\big\lfloor \tfrac{L}{G_{j}} \big\rfloor-1}{K_{j}\gamma}
        \end{aligned}$ 
        & $\begin{aligned}
        &S^{}_{\mathrm{lin},j} = \\
        &\qquad K_{j}\left(K_{j}-K_{j}\gamma\right)
        \end{aligned}$ \\
        \hline
        \textbf{Applicability} 
        & All network parameters 
        & All network parameters 
        & $\big\lfloor \tfrac{L}{{G}_{j}} \big\rfloor \ge K_{j}\gamma, \quad \forall j \in [J]$ \\
    \end{tabular}%
\end{table*}

\begin{lemma}\label{lm: acheivableDoFgrpop} 
The DoF of the Grouping strategy with the $\mathrm{opt}$ policy, denoted by $\mathrm{DoF}_{\mathrm{opt},\CJ}^*$, can be written as~\eqref{eq:total_DoF_grouping} in Table~\ref{tab:grouping-table}.
\begingroup
\refstepcounter{equation}\label{eq:total_DoF_grouping}
\endgroup
\end{lemma}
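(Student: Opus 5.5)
The plan is to compute the total DoF directly from the general expression~\eqref{eq:DoF_ideal1}, i.e., $\mathrm{DoF} = \frac{KF(1-\gamma)}{\sum_{s} f(s)}$. The denominator is a total normalized delivery time. Because the $J$ delivery steps of the Grouping strategy are orthogonal in time, this denominator splits into a sum over groups $j\in\CJ$. Each summand can then be evaluated using the symmetric $\mathrm{opt}$ policy result~\eqref{eq:DoF_minG}, applied with parameters $(K_j,\gamma,L,G_j)$ and $(\Omega^*_j,\beta^*_j)=\Delta(K_j,\gamma,L,G_j)$.

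First, I will check that the construction is valid group by group. Placement in step $j$ uses $K_j$ instead of $K$, so each user in $\CK_j$ still caches a fraction $\gamma$ of every file and needs $F(1-\gamma)$ missing bits. Delivery step $j$ is exactly the symmetric $\mathrm{opt}$ scheme for a $K_j$-user system with $G_j$ receive antennas each. Since $G_j\le L$, linear decodability holds by~\cite[Theorem~2]{naseritehrani2026cache} and Remark~\ref{Linear-Decodability_general}. Users outside $\CK_j$ are not served in step $j$, and transmissions of different steps do not overlap, so no cross-group interference arises.

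Next, let $\mathcal{S}_j$ be the set of transmission indices of step $j$. Applying~\eqref{eq:DoF_ideal1} to the symmetric subsystem gives $\sum_{s\in\mathcal{S}_j} f(s) = \frac{K_jF(1-\gamma)}{\mathrm{DoF}^*_{\mathrm{opt},j}} = \frac{K_jF(1-\gamma)}{\Omega^*_j\beta^*_j}$. Equivalently, with per-stream size $F/\Theta_{\mathrm{opt},j}$ and $S_{\mathrm{opt},j}$ transmissions, $S_{\mathrm{opt},j}\,\Omega^*_j\beta^*_j\, f_j = K_jF(1-\gamma)$. Summing over $j$ and using $\sum_j K_j = K$ for the numerator then gives
\begin{equation*}
\mathrm{DoF}^*_{\mathrm{opt},\CJ} = \frac{KF(1-\gamma)}{\sum_{j\in\CJ} \frac{K_jF(1-\gamma)}{\Omega^*_j\beta^*_j}} = \frac{K}{\sum_{j\in\CJ}\frac{K_j}{\Omega^*_j\beta^*_j}},
\end{equation*}
which is~\eqref{eq:total_DoF_grouping}. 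In words, the total DoF is a weighted harmonic mean of the per-group DoFs, with weights $K_j/K$.

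The main point needing care is the per-group accounting. Different groups use different subpacketizations $\Theta_{\mathrm{opt},j}$, so the subpacket sizes $f(s)$ differ across steps. Hence I should not apply the uniform-$f$ form~\eqref{eq:asym_DoF} directly over all $s$, but work with~\eqref{eq:DoF_ideal1} and the per-group time sums as above. I will also verify the identity $\sum_{s\in\mathcal{S}_j}f(s)\,\Omega^*_j\beta^*_j = K_jF(1-\gamma)$ from the explicit counts, to confirm that every missing bit is delivered exactly once. This uses $\lambda_{\mathrm{opt}}$ with $K_j$ in place of $K$: it shows that each user in $\CK_j$ receives $\beta^*_j\lambda_{\mathrm{opt}}$ subpackets, which together account for its $F(1-\gamma)$ missing bits.
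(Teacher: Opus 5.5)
Your proposal is correct and follows essentially the same route as the paper. Both arguments rest on the per-group identity $S_{\mathrm{opt},j}\,\Omega^*_j\beta^*_j\,F/\Theta_{\mathrm{opt},j} = K_j(1-\gamma)F$ and then sum the numerator and denominator over the time-orthogonal steps. One small correction: \eqref{eq:asym_DoF} is not a uniform-$f$ form. It is already $f(s)$-weighted and only assumes equal streams per user within a transmission, and the paper plugs into it directly with $f(s)=F/\Theta_{\mathrm{opt},j}$ varying across steps. Your detour through \eqref{eq:DoF_ideal1} is therefore unnecessary but harmless, since the two forms are equivalent under that identity.
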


\begin{proof}
    With the $\mathrm{opt}$ policy, at step $j \in \CJ$ of the placement phase, each file $W \in \CF$ is split into $\vartheta_{j} = \binom{K_{j}}{K_{j}\gamma}$
    subfiles $W_{\CP_{j}}$, where $\CP_{j} \subseteq \CK_{j}$ and $|\CP_{j}| = K_{j} \gamma$, and 
    each user $k \in \CK_{j}$ stores all $W_{\CP_{j}}$ for which $k \in \CP_{j}$. Then, at step $j \in \CJ$ of the delivery phase, missing parts of the files requested by users in $\CK_j$ are delivered through ${S}_{\mathrm{opt},j}$ transmission vectors (see Table~\ref{tab:grouping-table}), where delivery parameters $\Omega^*_{j}$ and $\beta^*_{j}$ are obtained by solving $(\Omega^*_{j}, \beta^*_{j}) = \Delta(K_{j},\gamma,L,{G}_{j})$. The delivery phase requires extra subpacketization, and the final subpacketization level during step $j \in \CJ$ is given as $\Theta^{}_{\mathrm{opt},j}$ (see Table~\ref{tab:grouping-table}).
%
As a result, for each transmission $s \in [{S}_{\mathrm{opt},j}]$ in step $j \in \CJ$, we have $f(s) = \nicefrac{F}{\displaystyle \Theta^{}_{\mathrm{opt},j} }$. 
Moreover, as the transmission vectors in step $j$ deliver every missing part of all users in $\CK_{j}$, and as $\textrm{DoF}^{*}_{\mathrm{opt},j} =\Omega^*_{j}\beta^*_{j}$, we can write
\begin{equation*}\label{convk2j}
    {S}_{\mathrm{opt},j} \times \textrm{DoF}^{*}_{\mathrm{opt},j} \times \nicefrac{F}{\Theta^{}_{\mathrm{opt},j}} = K_{j}(1-\gamma)F.
\end{equation*}
Substituting these into~\eqref{eq:asym_DoF}, the DoF of the \emph{Grouping} strategy with the $\mathrm{opt}$ policy can be written as
\begin{equation*}\label{eq:total_DoF_grouping_prf}
\begin{aligned}
    \mathrm{DoF}_{\mathrm{opt},\CJ}^* &=  \frac{\sum_{j\in \CJ} S_{\mathrm{opt},j} \frac{F}{\Theta^{}_{\mathrm{opt},j}}\Omega^*_{j}\beta^*_{j}
    }{ \sum_{j\in\CJ} S_{\mathrm{opt},j} \frac{F}{\Theta^{}_{\mathrm{opt},j}}
    }
    =  
    \frac{K
    }{ \sum_{j\in\CJ}  \frac{ K_{j}}{ \Omega^*_{j} \beta^*_{j}}
    }, 
\end{aligned}
\end{equation*}
and the proof is complete. 
\end{proof}

\begin{lemma}\label{lm: acheivableDoFgrpop_cpar} 
The DoF of the Grouping strategy with both $\mathrm{cmb}$ and $\mathrm{lin}$ policies, denoted by $\mathrm{DoF}^{}_{\pi,\CJ}$, $\pi \in \{\mathrm{cmb},\mathrm{lin}\}$, is given by~\eqref{eq:total_DoF_grouping_cpar} and~\eqref{eq:total_DoF_grouping_lpar} in Table~\ref{tab:grouping-table}. 
\begingroup
\refstepcounter{equation}\label{eq:total_DoF_grouping_cpar}
\endgroup
\begingroup
\refstepcounter{equation}\label{eq:total_DoF_grouping_lpar}
\endgroup
\end{lemma}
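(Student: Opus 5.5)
The plan mirrors the proof of Lemma~\ref{lm: acheivableDoFgrpop}, with the $\mathrm{opt}$ parameters replaced by the closed-form $\mathrm{cmb}$ and $\mathrm{lin}$ ones. Fix $\pi\in\{\mathrm{cmb},\mathrm{lin}\}$. At step $j\in\CJ$ the placement and delivery of policy $\pi$ run on the subsystem $(K_j,\gamma,L,G_j)$. For $\mathrm{cmb}$, each file is split into $\binom{K_j}{K_j\gamma}$ subfiles, then further into $\Theta_{\mathrm{cmb},j}$ equal subpackets, delivered through $S_{\mathrm{cmb},j}$ transmissions. For $\mathrm{lin}$, each file is split into $K_j$ cyclic packets, then into $\Theta_{\mathrm{lin},j}$ subpackets, delivered through $S_{\mathrm{lin},j}=K_j(K_j-K_j\gamma)$ transmissions. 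These quantities are listed in Table~\ref{tab:grouping-table}. For $\mathrm{lin}$ we assume $\lfloor L/G_j\rfloor\ge K_j\gamma$ for every $j$, so the cyclic construction exists in each group.

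In every transmission $s$ of step $j$, each of the $\Omega_j=K_j\gamma+\lfloor L/G_j\rfloor$ served users receives $\beta_j=G_j$ streams. Linear decodability of each stream follows from Remark~\ref{Linear-Decodability_general} applied with $K\to K_j$ and $G\to G_j$, taking $\delta_\pi=G_j$. The transmit side still has all $L$ antennas, because the groups are served in orthogonal time steps and do not interfere. Hence the per-transmission DoF is $\Omega_j\beta_j=G_jK_j\gamma+G_j\lfloor L/G_j\rfloor=\mathrm{DoF}_{\pi,j}$, and each stream carries $f(s)=F/\Theta_{\pi,j}$ bits.

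Next comes the counting identity. The step-$j$ transmissions deliver exactly the missing $K_j(1-\gamma)F$ bits of the users in $\CK_j$, so
\begin{equation*}
S_{\pi,j}\,\mathrm{DoF}_{\pi,j}\,\frac{F}{\Theta_{\pi,j}}=K_j(1-\gamma)F.
\end{equation*}
This can be checked directly from the closed forms rather than only argued conceptually. For $\mathrm{lin}$,
\begin{equation*}
K_j(K_j-K_j\gamma)\,G_j\big(K_j\gamma+\lfloor L/G_j\rfloor\big)\Big/\Big(G_jK_j\big(K_j\gamma+\lfloor L/G_j\rfloor\big)\Big)=K_j(1-\gamma)K_j,
\end{equation*}
which, after normalizing by $F/K_j$ per packet, gives $K_j(1-\gamma)F$. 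For $\mathrm{cmb}$ the analogous check is a binomial identity.

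Finally, substitute into~\eqref{eq:asym_DoF}. Summing over $j$, the numerator is $\sum_j K_j(1-\gamma)F=K(1-\gamma)F$. The denominator is
\begin{equation*}
\sum_j S_{\pi,j}F/\Theta_{\pi,j}=\sum_j K_j(1-\gamma)F/\mathrm{DoF}_{\pi,j}.
\end{equation*}
Cancelling $(1-\gamma)F$ yields $K\big/\sum_j K_j/\big(G_jK_j\gamma+G_j\lfloor L/G_j\rfloor\big)$, which is~\eqref{eq:total_DoF_grouping_cpar} and~\eqref{eq:total_DoF_grouping_lpar}.

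The main obstacle is the binomial bookkeeping in the $\mathrm{cmb}$ counting identity. Writing $t_j=K_j\gamma$ and $\alpha_j=\lfloor L/G_j\rfloor$, we need
\begin{equation*}
\binom{K_j}{t_j+\alpha_j}\binom{t_j+\alpha_j-1}{t_j}(t_j+\alpha_j)=K_j(1-\gamma)\binom{K_j}{t_j}\binom{K_j-t_j-1}{\alpha_j-1}.
\end{equation*}
Both sides count the same objects: a served user, its $t_j$-subset, and the remaining $\alpha_j-1$ users of the transmission. So the identity follows from the standard factorization of $\binom{K_j}{t_j+\alpha_j}\binom{t_j+\alpha_j}{t_j}$ together with $(t_j+\alpha_j)\binom{t_j+\alpha_j-1}{t_j}=\alpha_j\binom{t_j+\alpha_j}{t_j}$. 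Alternatively, one can bypass the algebra by the conceptual argument that every missing subpacket is delivered exactly once.
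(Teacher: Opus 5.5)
Your proposal is correct and follows the paper's proof: per-group stream size $f(s)=F/\Theta_{\pi,j}$, the counting identity $S_{\pi,j}\,\mathrm{DoF}_{\pi,j}\,F/\Theta_{\pi,j}=K_j(1-\gamma)F$, and substitution into the weighted DoF formula; your binomial check for $\mathrm{cmb}$ is a valid extra verification that the paper omits. One small slip: in your explicit $\mathrm{lin}$ check the ratio simplifies to $K_j-K_j\gamma=K_j(1-\gamma)$, not $K_j(1-\gamma)K_j$, so multiplying by $F$ gives $K_j(1-\gamma)F$ directly, and the ``normalizing by $F/K_j$ per packet'' step is a compensating error that should be dropped.
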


\begin{proof}
The proof follows a similar structure as the proof of Lemma~\ref{lm: acheivableDoFgrpop}. Noting that the final subpacketization level at step $j \in \CJ$ with policy $\pi$ is $\Theta_{\pi,j}$, for each transmission $s \in [S_{\pi,j}]$ in step $j$ we have $f(s)\!=\!\nicefrac{F}{\Theta_{\pi,j}}$. Moreover, as 
%
%
 the transmission vectors in step $j$ deliver all the missing parts of all users in $\CK_{j}$ with $\textrm{DoF}^{}_{\pi,j}\!=\!G_{j} (K_{j}\gamma + \lfloor\! \nicefrac{L}{G_{j}}\! \rfloor)$, we can write
\begin{equation*}\label{convk3j}
\begin{aligned}
{S}_{\pi,j} \textrm{DoF}^{}_{\pi,j} \times \nicefrac{F}{\Theta_{\pi,j}} = K_{j}(1-\gamma)F.
   \end{aligned}
\end{equation*}
Substituting this into~\eqref{eq:asym_DoF}, the DoF of the \emph{Grouping} strategy with policy $\pi \in \{\mathrm{cmb},\mathrm{lin}\}$
can be expressed as
\begin{equation}\label{eq:total_DoF_grouping_cpar1}
\begin{aligned}
\mathrm{DoF}_{\pi,\CJ} \!&=\!  \frac{\sum_{j\in \CJ} S_{\pi,j} \frac{F}{\Theta_{\pi,j}}G_{j} (K_{j}\gamma + \lfloor \nicefrac{L}{G_{j}} \rfloor)
    }{ \sum_{j\in\CJ} S_{\pi,j} \frac{F}{\Theta_{\pi,j}}
    },
\end{aligned}
\end{equation}
which, after simplification, results in~\eqref{eq:total_DoF_grouping_cpar},
and the proof is complete. 
\end{proof}
\color{black}
\section{Hybrid asymmetric solutions}
In this section, we study two composite strategies that enable an interplay between CC and spatial multiplexing gains in asymmetric MIMO-CC systems.
\subsection{The Super-grouping strategy}
The \emph{Grouping} strategy, as introduced in Section~\ref{subsec: grouping}, treats each set of users $\CK_j$ separately. A key limitation of this approach 
is the potential degradation in the overall achievable DoF due to relying solely on maximizing spatial multiplexing gain at the Rx side. To overcome this limitation, we propose a promising alternative, known as the \emph{Super-grouping} strategy, that
combines the \emph{min-$G$} and \emph{Grouping} strategies into a unified framework. The first step in the \emph{Super-grouping} strategy is to merge the existing $J$ user sets $\CK_j$ into $\bar{J}\leq J$ equivalent user sets $\bar{\CK}_{l}$, $l\in[\bar J]$. Then, we employ the \emph{min-$G$} strategy to deliver data to users inside each equivalent user set, and use the \emph{Grouping} strategy among different sets. 

\vspace{1pt}
\noindent\textbf{Construction of equivalent user sets.} 
As the \emph{Super-grouping} strategy employs the \emph{min-$G$} strategy to deliver data to users within each equivalent user set $\bar{\CK}_{l}$, from the DoF perspective, it is preferable to merge original sets with consecutive indices only. This is because the \emph{min-$G$} strategy sets the Rx spatial gain to the smallest Rx gain among the involved users; hence, with the original sets $\CK_j$ ordered by the users' Rx gains, merging non-consecutive indices inevitably translates into a larger DoF loss than consecutive-index merging.

For consecutive-index merging, we need to first find $\bar{J}-1$ indices $J_1,\cdots,J_{\bar{J}-1}$ such that $1 \le J_1 < J_2 < \cdots <J_{\bar{J}-1} < J$. Then, defining $J_0 = 0$ and $J_{\bar{J}} = J$, we can create $\bar{J}$ index sets as
\begin{equation}
\begin{array}{l}\label{sgrp_delivery1}
\CJ_l = \{J_{l-1}+1, \ldots, J_{l}\}, \qquad l\in [\bar{J}]  
\end{array}
\end{equation}
and create the equivalent user sets as
\begin{equation}
    \bar{\CK}_{l} = \bigcup_{j \in \CJ_{l}} \CK_{j}.
\end{equation}

For a given $\bar{J} \in [J]$, the number of ways to select $J_1,\cdots,J_{\bar{J}-1}$ such that $1 \le J_1 < J_2 < \cdots <J_{\bar{J}-1} < J$ is $\binom{J-1}{\bar{J} - 1}$. 
Each selection of these indices defines a unique merging of the original user sets into equivalent sets. In order to find the 
{largest achievable DoF} \color{black}
with the \emph{Super-grouping} strategy, one should compare all these merging possibilities and select the one with the largest DoF.
It should be noted that, when $\bar{J} = 1$ and $\bar{J} = J$, the \emph{Super-grouping} strategy reduces to \emph{min-$G$} and \emph{Grouping}, respectively.
\begin{table*}[t]
    \centering
    \caption{The \emph{Super-grouping} strategy parameters for different reference policies}
    \label{tab:super-grouping-table}
    \begin{tabular}{>{\centering\arraybackslash}p{0.12\textwidth}!{\vrule width 1pt}>{\centering\arraybackslash}p{0.26\textwidth}|>{\centering\arraybackslash}p{0.26\textwidth}|>{\centering\arraybackslash}p{0.26\textwidth}}
        \textbf{Policy} & \textbf{DoF-optimized} & \textbf{Combinatorial parallel links} & \textbf{Cyclic parallel links} \\
        \Xhline{1pt}
        \textbf{DoF} 
        & $\begin{aligned}
        &\textrm{DoF}^{*}_{\mathrm{opt},\bar{\CJ}^*} = \\
        &\qquad \frac{K
    }{ \sum_{l\in[\bar{J}^*]} \displaystyle \frac{ \sum_{j\in \CJ_{l}^*} K_{j}}{ \bar \Omega^*_{{l}}\bar \beta^*_{{l}}}
    } \end{aligned}$ 
        & $\begin{aligned}\\[.2mm]
        &\mathrm{DoF}^{}_{\mathrm{cmb},\bar{\CJ}^*} = \\
        &\hspace{-2mm}\frac{K}{\sum\limits_{l\in[\bar{J}^*]}^{} \frac{\!\!\!\sum\limits_{j\in \CJ_{l}^*}\!\!\!K_{j}}{\min\limits_{j\in \CJ_{l}^*}\! G_{j}\!\big(\!\!\sum\limits_{j\in \CJ_{l}^*}\!\!\!K_{j} \gamma + \big\lfloor\! \tfrac{L}{\min\limits_{j\in \CJ_{l}^*}\!\! G_{j}\!} \!\big\rfloor\big)}}
        \end{aligned}$ 
        & $\begin{aligned}\\[.2mm]
        &\mathrm{DoF}^{}_{\mathrm{lin},\bar{\CJ}^*} = \\
        &\hspace{-2mm}\frac{K}{\sum\limits_{l\in[\bar{J}^*]}^{} \frac{\!\!\!\sum\limits_{j\in \CJ_{l}^*}\!\!\!K_{j}}{\min\limits_{j\in \CJ_{l}^*}\! G_{j}\!\big(\!\!\sum\limits_{j\in \CJ_{l}^*}\!\!\!K_{j} \gamma + \big\lfloor\! \tfrac{L}{\min\limits_{j\in \CJ_{l}^*}\!\! G_{j}\!} \!\big\rfloor\big)}}
        \end{aligned}$ \\ 
        \hline
        \textbf{Applicability} 
        & All network parameters 
        & All network parameters 
        & $\big\lfloor \tfrac{L}{\min\limits_{j\in \CJ_{l}^*}\! G_{j}} \big\rfloor \! \ge \!\!\sum\limits_{j\in \CJ_{l}^*}\!\!K_{j}
\gamma, \quad \forall l \in [\bar{J}^*]$ \\
    \end{tabular}%
\end{table*}
Based on this core concept, the analytical results for the proposed \emph{Super-grouping} strategy are developed and summarized in Table~\ref{tab:super-grouping-table}.

\vspace{1pt}
\noindent\textbf{Placement phase.} 
Data placement is done in $\bar{J}$ steps, where at each step $l\in[\bar{J}]$, cache memories of users in equivalent set $\bar \CK_{{l}}$ are filled with data, according to the selected reference policy. It should be noted that, in step $l$, the $K$ value in the reference policy should be replaced with
\begin{equation}
\begin{array}{l}
\bar K_{{l}} = \sum\limits_{j\in \CJ_{l}} K_{j}.
\end{array}
\end{equation}

\vspace{1pt}
\noindent\textbf{Delivery phase.} Data delivery proceeds in $\bar{J}$ steps. In step $l\in[\bar{J}]$, all missing parts of the data files requested by users in $\bar{\CK}_l$ are delivered, following the selected reference policy and by replacing $G$ with
\begin{equation}
\begin{array}{l}
\bar G_{{l}} = \min\limits_{j\in \CJ_{l}} G_{j}.
\end{array}
\end{equation} 

\vspace{1pt}
 \noindent\textbf{DoF analysis.} The maximum achievable 
%
DoF of the \emph{Super-grouping} strategy for a given policy $\pi$, denoted by
$\mathrm{DoF}_{\pi,\bar{\CJ}^*}$,
is obtained by jointly optimizing the number of equivalent user sets $\bar J$ and the way the user sets $\{\CK_j\}_j$ are merged to form the equivalent sets:
\begin{equation}\label{grp_delivery34_opt_concise}
(\bar{J}^*,\mathbf{\bar{J}}^*)=
\arg\max_{\substack{\\[.5ex]\bar{J}\in[J]}}\;\;\;
\max_{\substack{\\[.5ex]J_{l-1} < J_l < J_{l+1},\\[.5ex] \forall l\in[\bar J-1]}}
\mathrm{DoF}_{\pi,\bar{\CJ}},
\end{equation}
where $\bar \BJ = [0,J_1,\cdots,J_{\bar{J}-1},J]$ includes the partitioning indices used for the merging process. In~\eqref{grp_delivery34_opt_concise}, the
inner maximization searches over the $\bar J-1$ splitting points for the chosen $\bar J$, and outer maximization aims to find the best number of partitions $\Bar J$. Therefore, we obtain the optimized Super-group sets as $\CJ_{l}^*$.
\subsection{The Phantom strategy}
\label{subsec: phantom}
The \emph{Phantom} strategy also serves as a bridge between the \emph{min-$G$} and \emph{Grouping} strategies. This method maximizes the CC gain while leveraging the excess spatial multiplexing of users with more antennas than the minimum value, which would otherwise remain unused in the \emph{min-$G$} strategy. The Phantom construction is a symmetrization-and-projection mechanism inspired by virtual-user techniques in coded caching~\cite{salehi2020lowcomplexity,abolpour2024resource}. \color{black}
%

The \emph{Phantom} strategy involves multiple multicasting (MC) rounds, followed by a single unicasting (UC) round. An MC round $i$ includes $\hat{S}_{\pi,i}^{\mathrm{MC}}$ transmission intervals, where $\pi$ is the selected reference symmetric transmission policy (see Section~\ref{ref-schemes}). MC transmissions in round $i$ deliver parts of the missing data to a subset of users $\hat{\CK}_i$. In the first round, $\hat{\CK}_1=\CK$ involves all the users in the network, and in each subsequent round, the set of selected users gets smaller. The MC rounds continue until no users are left for MC transmissions. We use $I$ to denote the total number of MC rounds. 

As a general overview, in round $i \in [I]$, data delivery for users in $\hat{\CK}_i$ proceeds as follows. First, assuming a hypothetical symmetric setup similar to the original network, but with $\hat{G}_i = \max_{k \in \hat{\CK}_i} \SfG_k$ Rx antennas at each user $k \in \hat{\CK}_i$, a delivery algorithm is run according to the selected reference policy $\pi$. This results in a total number of ${S}_{\pi,i}^{\mathrm{MC}}$ candidate transmission vectors $\tilde{\Bx}_{\pi,i}^{\mathrm{MC}}(s)$, where $s \in [{S}_{\pi,i}^{\mathrm{MC}}]$ shows the interval index (note that $s$ is re-initialized at the beginning of each round). Each candidate transmission vector is capable of delivering $\hat{\beta}_{\pi,i}$ parallel streams to a subset of users, with size $\hat{\Omega}_{\pi,i}$, in the hypothetical network.\footnote{
The term ``candidate transmission vectors'' reflects that these vectors are not yet transmitted, but are passed to the subsequent step, where a subset of subpackets is removed \color{black} to ensure linear decodability.
}
Next, each candidate vector $\tilde{\Bx}_{\pi,i}^{\mathrm{MC}}(s)$ is mapped to a real transmission vector $\hat{\Bx}_{\pi,i}^{\mathrm{MC}}(s)$. Since some users in the original network may have fewer than $\hat{\beta}_{\pi,i}$ Rx antennas, a subset of their intended subpackets are removed from the candidate transmission vectors to ensure linear decodability for all $k \in \hat{\CK}_i$. The 
subpackets removed \color{black}  from the current candidate transmission vector \color{black} are deferred to later transmission, either in a subsequent MC round or in the final UC round. The term Phantom stems from the assumption of symmetric number of antennas across the selected users in each round: users with fewer antennas are effectively treated as if they possessed additional, non-existent (phantom) receive antennas, whose associated streams are subsequently
removed from the candidate
transmission vector\color{black}.
In particular, the excess components $\!(\hat\beta_{\pi,i}\!-\! \SfG_k\!)^{+}$ correspond to non-existing receive dimensions and are removed from the current candidate
transmission vector, rather than discarded, and are deferred to later MC rounds or to the UC phase. Lemma~\ref{LD_Phantom} proves that this
projection preserves linear decodability. \color{black}
The proposed \emph{Phantom} strategy is 
detailed in the following.


\vspace{1pt}
\noindent\textbf{Placement phase.} Follows the reference policy $\pi$, as detailed in Section~\ref{ref-schemes}. We use $W_{\wp_\pi}$ to show a packet of a general file $W$, where $\wp_\pi$ clarifies which users store $W_{\wp_\pi}$ in their cache memories. For $\pi \in \{\mathrm{opt},\mathrm{cmb}\}$, $\wp_\pi$ is a subset of user indices, while for $\pi = \mathrm{lin}$, it represents a single user index.

\vspace{1pt}
\noindent\textbf{Delivery phase, MC rounds.} At each MC round $i \in [I]$, the delivery algorithm proceeds as follows:

\vspace{1pt}
\noindent\textbf{Step~1 - Target user set determination.} For $i = 1$, the target user set includes all users, i.e., $\hat{\CK}_1 = \cup_{j \in [J]} \CK_j=\CK$. 
For $i > 1$, we have:
\begin{equation}
    \hat{\CK}_i = \bigcup\nolimits_{j \in [J], \; G_j < \hat{\beta}_{\pi,i-1}} \CK_j.
\end{equation}
The rationale for this selection is that after round~$i-1$, a subset of missing subpackets of users with fewer Rx antennas than $\hat{\beta}_{\pi,i-1}$ are deferred\color{black}, and in round~$i$, we aim to find multicasting opportunities among such deferred \color{black} subpackets.

\vspace{1pt}
\noindent\textbf{Step~2 - Candidate transmission vector creation.} 
Retaining the original coded-caching gain $K\gamma$, \color{black} we consider a hypothetical symmetric MIMO-CC setup, with $L$ antennas at the transmitter and $\hat K_i = |\hat{\CK}_i|$ users, each with cache ratio of $\gamma$ and $\hat{G}_i = \max_{k \in \hat{\CK}_i} \SfG_k$ Rx antennas. For this hypothetical setup, we follow the reference policy $\pi$, to create a total number of ${S}_{\pi,i}^{\mathrm{MC}}$ candidate transmission vectors. First, following the discussions in Section~\ref{ref-schemes}, we find out the number of users served in each transmission, denoted by $\hat{\Omega}_{\pi,i}$, and the number of parallel streams per user, denoted by $\hat{\beta}_{\pi,i} \le \hat{G}_i$. Then, we increase subpacketization by a factor of $\hat{\phi}_{\pi,i}$, given as:
\begin{equation}\label{eq:policy_phi_phantom}
\begin{aligned}
\hat{\phi}_{\mathrm{opt},i}
&= \hat{\beta}_{\mathrm{opt},i}^{*}
   \binom{\hat{K}_i - K \gamma - 1}
         {\hat{\Omega}_{\mathrm{opt},i}^{*} - K\gamma - 1}, \\
\hat{\phi}_{\mathrm{cmb},i}
&= \hat{G}_i
    \binom{\hat{K}_i - K \gamma - 1}
         {\left\lfloor \frac{L}{\hat{G}_i} \right\rfloor - 1},
         \\
\hat{\phi}_{\mathrm{lin},i}
&= \hat{G}_i
   \biggl( K \gamma + \left\lfloor  \dfrac{L}{\hat{G}_i}  \right\rfloor \biggr).
\end{aligned}
\end{equation}\color{black}
It is important to note that, as in each round $i$ we aim to find MC opportunities among subpackets deferred \color{black} in the previous round $i-1$, the subpacketization accumulates multiplicatively across rounds. Hence, the final subpacketization value for data sent in round $i$ is given as
\begin{equation}
    \hat{\Theta}_{\pi,i} = {\vartheta}_\pi \prod_{i' \in [i]} \hat{\phi}_{\pi,i'},
\end{equation}
where ${\vartheta}_\pi$ denotes the subpacketization of the placement phase.
This accumulated subpacketization must be accounted for while calculating the number of intervals in round $i$, i.e., $S_{\pi,i}^{\mathrm{MC}}$.  Specifically, in the first MC round, $S_{\pi,1}^{\mathrm{MC}} = S_{\pi,1}$, and for $i > 1$ we have
\begin{equation}\label{eq:MC_round_trans_count}
    S_{\pi,i}^{\mathrm{MC}} = \Bigg(\prod_{i' \in [i-1]} \hat{\phi}_{\pi,i'}\Bigg) S_{\pi,i},
\end{equation}
where $S_{\pi,i}$ is the number of transmissions for the hypothetical network in round $i$, using the reference policy $\pi$, as defined in Section~\ref{ref-schemes}. For each interval $s \in [S_{\pi,i}^{\mathrm{MC}}]$, the candidate vector can be expressed as
\begin{equation}
    \tilde{\Bx}_{\pi,i}^{\mathrm{MC}}(s) = \sum\nolimits_{k \in \hat{\CK}_{\pi,i}(s)} \sum\nolimits_{W_{\wp_\pi}^q \in \tilde{\CM}_{\pi,i}^k(s)}   \Bw_{\wp_\pi}^q W_{\wp_\pi}^q,
\end{equation}
where $\hat{\CK}_{\pi,i}(s) \subseteq \hat{\CK}_i$ denotes the set of users receiving data in transmission interval $s$ of round $i$, $\tilde{\CM}_{\pi,i}^k(s)$ represents the set of missing subpackets of user $k$ transmitted in this interval (it includes $\hat{\beta}_{\pi,i}$ subpackets), $q$ is an auxiliary index to account for delivery-phase subpacketization, and $\Bw_{\wp_\pi}^q$ is the transmit precoder 
vector for subpacket $W_{\wp_\pi}^q \in \tilde{\CM}_{\pi,i}^k(s)$.\footnote{The choice of the precoder design (e.g., zero-forcing or MMSE) affects finite-SNR performance, but does not change the DoF. Interference suppression criteria are given by the selected reference policy $\pi$.} 


\vspace{1pt}
\noindent\textbf{Step~3 - Final transmission vector creation.} Each candidate vector $\tilde{\Bx}_{\pi,i}^{\mathrm{MC}}(s)$ created in Step~2 aims to deliver $\hat{\beta}_{\pi,i}$ parallel streams to each user $k \in \hat{\CK}_{\pi,i}(s)$. However, due to asymmetric Rx capabilities, there might exist some users in $\hat{\CK}_{\pi,i}(s)$ with fewer than $\hat{\beta}_{\pi,i}$ Rx antennas. In such cases, to ensure linear decodability by all users, for each user $k \in \hat{\CK}_{\pi,i}(s)$ with $\SfG_k < \hat{\beta}_{\pi,i}$, we randomly remove \color{black} $\hat{\beta}_{\pi,i} - \SfG_k$ subpackets from $\tilde{\CM}_{\pi,i}^k(s)$, and leave these subpackets to be transmitted later, either in subsequent MC rounds or in the final UC round. Denoting the reduced subpacket set by $\hat{\CM}_{\pi,i}^k(s)$, we can write down the real transmission vectors as
\begin{equation}\label{eq:real_trans_vec}
    \hat{\Bx}_{\pi,i}^{\mathrm{MC}}(s) = \sum\nolimits_{k \in \hat{\CK}_{\pi,i}(s)} \sum\nolimits_{W_{\wp_\pi}^q \in \hat{\CM}_{\pi,i}^k(s)}   \Bw_{\wp_\pi}^q W_{\wp_\pi}^q
\end{equation}

\color{black}

\vspace{1pt}
\noindent\textbf{Delivery phase, UC round.}
After completing all MC rounds, the remaining subpackets are delivered through $S_{\pi}^{\textrm{UC}}$ UC transmission intervals. This round does not introduce extra subpacketization, i.e., $\hat{\Theta}_{\pi, I}$ is used for UC transmissions as well. 
 Consistent with the high-SNR DoF setting of Remark~\ref{standard_dof_analysis}, a scheduled user set $\Bar{\CK}$ can support at most $\min (L, \sum_{k \in \Bar{\CK}} \SfG_{k})$ parallel streams. 
Thus, 
standard linear ZF precoding
can deliver up to \(L\) residual subpackets in parallel subject to the per-user receive-dimension constraints. 

Let $\mathcal R_k$ denote the set of residual subpackets required by user $k$, and define \[ \CK_{\rm UC} \triangleq \{k\in\CK:\mathcal R_k\neq\emptyset\}, \quad d_k\triangleq|\mathcal R_k|, \quad Z_{\pi}^{\rm UC} \triangleq \sum_{k\in\CK_{\rm UC}}d_k. \] 
The UC completion time is jointly constrained by the BS transmit dimension and the users' receive dimensions and is given by
\begin{equation}\label{RA_UCT} S_{\pi}^{\rm UC} = \max\Bigg\{ \underbrace{\Big\lceil \nicefrac{Z_{\pi}^{\rm UC}}{L} \Big\rceil}_{S_{\pi,L}^{\rm UC}}, \; \max_{k\in\CK_{\rm UC}} \Big\lceil \nicefrac{d_k}{\SfG_k} \Big\rceil \Bigg\}. \end{equation}
The first term $S_{\pi,L}^{\rm UC}$ is imposed by the BS-side constraint that at most $L$ residual streams can be transmitted per UC interval, whereas the second follows from the fact that user $k$ can receive at most $\SfG_k$ residual streams per interval. 

To characterize when the user-side can be ignored, let $\CJ_{\rm UC}$ denote the groups requiring UC transmission, with common residual demand $d_j$ and receive dimension $G_j$ within group $j$, such that $Z_{\pi}^{\rm UC}=\sum_{j\in\CJ_{\rm UC}}K_jd_j$.
For a limiting group $ j^\star\in \arg\max_{j\in\CJ_{\rm UC}}\frac{d_j}{G_j} $, the BS-side determines the UC completion time if and only if 

\begin{equation}\label{eq:uc_rxa_activation} 
\Big\lceil \nicefrac{d_{j^\star}}{G_{j^\star}} \Big\rceil \leq \Big\lceil \nicefrac{\big(K_{j^\star}d_{j^\star}+Z_{-j^\star}^{\rm UC}\big)}{L} \Big\rceil, 
\end{equation} 
where $Z_{-j^\star}^{\rm UC} \triangleq \sum_{\substack{j\in\CJ_{\rm UC}\\j\neq j^\star}}K_jd_j.$ 
Accordingly, the user-side term can become active only if

\begin{equation}\label{eq:uc_rxa_structure}
\begin{array}{l}
\frac{d_{j^\star}}{G_{j^\star}}
>
\frac{K_{j^\star}d_{j^\star}+Z_{-j^\star}^{\rm UC}}{L}
\Longleftrightarrow
\left(
L-K_{j^\star}G_{j^\star}
\right)d_{j^\star}
>
G_{j^\star}Z_{-j^\star}^{\rm UC}.
\end{array}
\end{equation}
which requires
\begin{equation}\label{necc_1}
    K_{j^\star}G_{j^\star}<L
\end{equation}
and 
\begin{equation}\label{necc_2}
\begin{array}{l}
d_{j^\star}
>
\frac{G_{j^\star}}
{L-K_{j^\star}G_{j^\star}}
Z_{-j^\star}^{\rm UC}.
\end{array}
\end{equation}
In the subsequent analysis, we consider only operating points that do not satisfy these necessary conditions. In such cases, \eqref{eq:uc_rxa_activation} holds, and the UC completion time reduces to $ S_{\pi}^{\rm UC} = S_{\pi,L}^{\rm UC} = \big\lceil\nicefrac{Z_{\pi}^{\rm UC}}{L}\big\rceil$.

\color{black}

\begin{lemma}\label{LD_Phantom}
    For all $\pi \in \{\mathrm{opt},\mathrm{cmb},\mathrm{lin}\}$, $i \in [I]$, and $s \in [S_{\pi,i}^{\mathrm{MC}}]$, each transmission vector $\hat{\Bx}_{\pi,i}^{\mathrm{MC}}(s)$ in~\eqref{eq:real_trans_vec} is linearly decodable by all users $k \in \hat{\CK}_{\pi,i}(s)$.
\end{lemma}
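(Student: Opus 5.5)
We reduce the statement to the rank--nullity argument behind the linear-decodability condition~\eqref{DoF_bndd} of Remark~\ref{Linear-Decodability_general}, applied to the projected vector $\hat{\Bx}_{\pi,i}^{\mathrm{MC}}(s)$.

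\emph{Setup.} Fix $\pi$, a round $i$, an interval $s$ and a target user $k\in\hat{\CK}_{\pi,i}(s)$. We group the terms of~\eqref{eq:real_trans_vec} into three classes:
\begin{itemize}
\item[(a)] the $\bm\beta^{k}_{\pi,i}(s)=\min(\hat\beta_{\pi,i},\SfG_k)$ desired subpackets in $\hat{\CM}_{\pi,i}^k(s)$;
\item[(b)] subpackets intended for other users that user $k$ has cached, which it removes from $\By_k(s)$ using its cache;
\item[(c)] the remaining interfering subpackets, whose packet index excludes $k$.
\end{itemize}
For class (c), user $k$ sees the equivalent channel $\bar{\BH}_{\wp_\pi,k}(s)$. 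Each interfering stream must either be nulled at user $k$ by the precoders $\Bw_{\wp_\pi}^q$ or be suppressed inside the receive space of $k$.

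\emph{Step 1: the candidate vector is decodable.} By construction, the candidate vector $\tilde{\Bx}_{\pi,i}^{\mathrm{MC}}(s)$ is a valid transmission of policy $\pi$ in the hypothetical symmetric network with $\hat G_i$ antennas per user. By~\cite[Theorem~1]{naseritehrani2026cache} and Remark~\ref{Linear-Decodability_general}, its parameters satisfy~\eqref{DoF_bndd} with $G=\hat G_i$ and $\delta_\pi=\delta_{\pi,i}$. Consequently, the number of ZF constraints imposed on each precoder is at most $L-\delta_{\pi,i}$. This leaves a null space of dimension at least $\delta_{\pi,i}$. That null space is enough to send the same-index subpackets, which users outside the packet index set all see simultaneously, along linearly independent directions.

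\emph{Step 2: projection only helps.} We then show that every relevant quantity is monotone under projection. Removing subpackets can only:
\begin{itemize}
\item shrink each interference set $\CI_{\wp_\pi,k}(s)$;
\item shrink each ZF-constraint count $r_{\wp_\pi,k}(s)$;
\item shrink each user-wise same-index count $\delta^{k}_{\pi,i}(s)\le\delta_{\pi,i}$.
\end{itemize}
The number of ZF constraints on any retained precoder is a count over retained streams of other users, so it cannot increase. The precoders computed for $\tilde{\Bx}_{\pi,i}^{\mathrm{MC}}(s)$, or freshly recomputed ZF precoders for the reduced set, therefore still exist with null-space dimension at least the retained same-index count.

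\emph{Step 3: decodability at user $k$.} After cache removal and ZF nulling, the signal at user $k$ is
$\BH_k\sum \Bw W$ over (a) plus the residual interference that the policy leaves for receive-side suppression. We count its dimension: it is at most $\bm\beta^k_{\pi,i}(s)$ desired streams plus the residual interference. Step 2 shows this count is no larger than in the candidate design. In the candidate design this total was at most $\hat\beta_{\pi,i}\le\hat G_i$. The projection guarantees $\bm\beta^k_{\pi,i}(s)\le\SfG_k$, and we must show the total also fits in $\SfG_k$. With generic channels, $\BH_k$ applied to the resulting $\le\SfG_k$ effective directions has full column rank, by rank--nullity together with genericity of $\BH_k$ of rank $\min(\SfG_k,L)$. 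A linear ZF receiver then separates the desired streams.

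\emph{Main obstacle.} The hard part is ensuring that user $k$'s receive dimension $\SfG_k$ suffices, because in the candidate design the residual interference may have been absorbed using $\hat G_i>\SfG_k$ receive dimensions. I would first handle this by designing the precoders of $\hat{\Bx}^{\mathrm{MC}}_{\pi,i}(s)$ so that all interference at user $k$ is nulled at the transmitter. This requires verifying, per policy, that the reduced constraint count satisfies $r_{\wp_\pi,k}(s)\le L-\delta^{k}_{\pi,i}(s)$. For $\mathrm{cmb}$ and $\mathrm{lin}$ this follows from $\Omega=K\gamma+\lfloor L/\hat G_i\rfloor$, which gives $(\Omega-K\gamma-1)\hat G_i\le L-\hat G_i$. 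For $\mathrm{opt}$ it follows from~\eqref{eq:opt_ceiling_feasibility} with $\delta_{\pi,i}$ in place of the ceiling term. If that fails, the fallback is to allot part of the interference suppression to the receiver. This split is feasible because the $\hat\beta_{\pi,i}-\SfG_k$ removed streams of user $k$ free exactly that many transmit-side constraints.
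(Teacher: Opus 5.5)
Your argument is correct and is essentially the paper's proof. It uses the same chain: the candidate design satisfies $\delta_{\pi,i}\le L-\Gamma_{\pi,i}\hat\beta_{\pi,i}$; removing unsupported streams can only decrease $r_{\wp_\pi,k}(s)$ and $\delta^{k}_{\pi,i}(s)$; rank--nullity then gives $\delta^{k}_{\pi,i}(s)\le L-r_{\wp_\pi,k}(s)$; and $\bm\beta^{k}_{\pi,i}(s)=\min(\hat\beta_{\pi,i},\SfG_k)\le\SfG_k$ settles the receive side.

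The ``main obstacle'' you raise does not arise. The paper imposes the ZF constraints on the combined channels $\BH_{k'}^{H}\BU_{k'}$, which have $\bm\beta^{k'}_{\pi,i}(s)$ columns, and this is exactly why your constraint count is per retained stream. Hence no interference is left for receive-side suppression, and the single inequality $\delta^{k}_{\pi,i}(s)+r_{\wp_\pi,k}(s)\le\delta_{\pi,i}+\Gamma_{\pi,i}\hat\beta_{\pi,i}\le L$ covers all three policies at once. That makes your per-policy checks redundant and your fallback unnecessary; the fallback could not work anyway, since a user with $\SfG_k<\hat\beta_{\pi,i}$ uses all of its receive dimensions for desired streams.
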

\begin{proof}
    The proof is given in Appendix~\ref{app_B}.\color{black}
\end{proof}

\noindent\textbf{DoF analysis.} In order to calculate the DoF of the \emph{Phantom} strategy, we start by the DoF expression for non-symmetric setups in~\eqref{eq:asym_DoF_phantom_version}. The main challenge is to determine the total number of time intervals; while the number of intervals for each MC round is given by~\eqref{eq:MC_round_trans_count}, for the number of UC intervals requires first calculating the total number of subpackets remaining for the UC round, denoted by $Z_{\pi}^{\mathrm{UC}}$. This can be done by summing up the number of subpackets delivered within all MC rounds, and substituting the result from the total number of missing subpackets. Since the subpacketization increases multiplicatively across consecutive MC rounds, all quantities must be expressed with respect to a common reference. To this end, we
introduce normalization factors $\epsilon_{\pi,i} \geq 1$ that normalize the subpacketization in each MC round to the maximum (final) degree of subpacketization, such that $\epsilon_{\pi,i}\hat{\Theta}_{\pi,i} = \hat{\Theta}_{\pi,I}$, $\forall i \in [I]$,
%
which simply results in $\epsilon_{\pi,i} = \prod_{i < i' \le I} \hat{\phi}_{\pi,i'}$. Indeed, this also requires normalizing the number of transmission intervals in each MC round, i.e., we use $\hat{S}_{\pi,i}^{\mathrm{MC}} = \epsilon_{\pi,i} {S}_{\pi,i}^{\mathrm{MC}}$ as the normalized transmission count when working with normalized subpacketization. For notational simplicity, we also define $\zeta_{\pi,i} = \nicefrac{\prod_{i' \in [I]} \hat{\phi}_{\pi,i'}}{\hat{\phi}_{\pi,i}}$, which, together with~\eqref{eq:MC_round_trans_count}, enables us to calculate the normalized transmission count of MC rounds from the transmission count of the reference policy as
\begin{equation}
    \hat{S}_{\pi,i}^{\mathrm{MC}} = \epsilon_{\pi,i} {S}_{\pi,i}^{\mathrm{MC}} = \zeta_{\pi,i} S_{\pi,i}.
\end{equation}
\begin{theorem}\label{thm:phantom_dof}
    For a given reference policy $\pi$, the DoF of the Phantom strategy is given as 
\color{black}

\begin{equation}\label{eq:DoF_phantom_general}
\mathrm{DoF}_{\pi,\mathrm{Ph}}
\!=\! \frac{Z_{\pi}^{\mathrm{Tot}}}
{\!\!\displaystyle\sum_{i\in[I]} \!\!\zeta_{\pi,i} S_{\pi,i}
+ \!\frac{1}{L}\!\! \left(\!\!
Z_{\pi}^{\mathrm{Tot}}
- \sum_{i\in[I]} \zeta_{\pi,i} \lambda_{\pi,i}
\sum_{k\in \hat{\CK}_{i}} \bm{\beta}_{\pi,i}^{k}
\!\!\right)},
\end{equation}\color{black}
    where $Z^{\mathrm{Tot}}_{\pi} = K(1-\gamma)\hat{\Theta}_{\pi,I}$ is the (normalized) total number of missing subpackets, and ${\bm\beta}_{\pi,i}^k = \min(\hat{\beta}_{\pi,i},\SfG_k)$. Moreover, $\lambda_{\pi,i}$ denotes the total number of transmissions in MC round $i$ that serve a user $k \in \hat{\CK}_i$ with reference policy $\pi$, as defined in Section~\ref{ref-schemes}.
\end{theorem}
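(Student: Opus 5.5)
We start from the general asymmetric DoF expression~\eqref{eq:asym_DoF_phantom_version} and express every quantity in the common normalized unit, i.e., subpackets of size $f = F/\hat{\Theta}_{\pi,I}$. With this normalization, all MC intervals of every round and all UC intervals carry streams of equal size $f$. The denominator $\sum_s f(s)$ then becomes $f$ times the total normalized interval count, and the numerator becomes $f$ times the total number of delivered normalized subpackets. The numerator is simply $Z_\pi^{\mathrm{Tot}} = K(1-\gamma)\hat{\Theta}_{\pi,I}$, because every missing bit of every user is eventually delivered, either in an MC round or in the UC round. The theorem therefore reduces to showing that the normalized interval count equals the denominator of~\eqref{eq:DoF_phantom_general}.

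\textbf{MC intervals.} Round $i$ uses $S_{\pi,i}^{\mathrm{MC}}$ intervals at subpacketization $\hat{\Theta}_{\pi,i}$. Expressed at the final subpacketization, each such interval corresponds to $\epsilon_{\pi,i}$ normalized intervals, which gives $\hat{S}_{\pi,i}^{\mathrm{MC}} = \epsilon_{\pi,i}S_{\pi,i}^{\mathrm{MC}} = \zeta_{\pi,i}S_{\pi,i}$ by~\eqref{eq:MC_round_trans_count} and the definition of $\zeta_{\pi,i}$. Summing over $i\in[I]$ yields the first term of the denominator. Lemma~\ref{LD_Phantom} guarantees that every real vector $\hat{\Bx}_{\pi,i}^{\mathrm{MC}}(s)$ is linearly decodable, so each user $k$ served in such an interval indeed receives $\bm\beta_{\pi,i}^k=\min(\hat\beta_{\pi,i},\SfG_k)$ streams at full DoF rate. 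This holds because Step~3 removes exactly $(\hat\beta_{\pi,i}-\SfG_k)^+$ subpackets.

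\textbf{Counting delivered subpackets.} In the hypothetical round-$i$ network, user $k\in\hat{\CK}_i$ appears in $\lambda_{\pi,i}$ reference-policy transmissions. After the replication in~\eqref{eq:MC_round_trans_count} and normalization by $\epsilon_{\pi,i}$, this becomes $\zeta_{\pi,i}\lambda_{\pi,i}$ normalized intervals, by the same factor as for $S_{\pi,i}$. The symmetry of the reference policy makes this count identical for all users in $\hat{\CK}_i$. Hence round $i$ delivers $\zeta_{\pi,i}\lambda_{\pi,i}\sum_{k\in\hat{\CK}_i}\bm\beta_{\pi,i}^k$ normalized subpackets. Subtracting the sum over rounds from $Z_\pi^{\mathrm{Tot}}$ gives $Z_\pi^{\mathrm{UC}}$. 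Because we restrict to operating points where~\eqref{necc_1}--\eqref{necc_2} fail, the UC completion time is $Z_\pi^{\mathrm{UC}}/L$, with the ceiling dropped in the DoF limit, or exact under divisibility after scaling $F$. Adding the UC term to the MC count and dividing the numerator by this total gives~\eqref{eq:DoF_phantom_general}.

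\textbf{Main obstacle.} The delicate step is the consistency of the round-to-round bookkeeping. Round $i>1$ must operate only on subpackets deferred in round $i-1$, and the candidate vectors of the hypothetical $\hat K_i$-user network, with CC gain $K\gamma$ retained, must draw exactly on that deferred pool. Otherwise the count $\zeta_{\pi,i}\lambda_{\pi,i}\bm\beta^k_{\pi,i}$ could double-count subpackets or exceed the supply. To handle this, we would argue that the $\hat\phi_{\pi,i}$-fold splitting of each deferred subpacket provides fresh indices for every candidate transmission of round $i$, and that the random removals of Step~3 are distributed so that each $k$ has enough deferred material. We would therefore take the supply to be at least the demand as an implicit feasibility assumption of the construction, so that the delivered counts are exact and the residual $Z^{\mathrm{UC}}_\pi$ is nonnegative.
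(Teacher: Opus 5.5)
Your outline is the paper's proof step for step. It normalizes to $\hat\Theta_{\pi,I}$, counts MC intervals as $\sum_i\zeta_{\pi,i}S_{\pi,i}$, takes $\zeta_{\pi,i}\lambda_{\pi,i}\sum_{k\in\hat{\CK}_i}\bm\beta^k_{\pi,i}$ as the round-$i$ delivered count, and charges the rest to UC at $L$ streams per interval. The paper simply asserts that delivered count, so it never addresses the obstacle you raise. That obstacle is a genuine gap. Your patch, assuming the deferred supply covers the round-$i$ demand, cannot be imposed: for $\pi\in\{\mathrm{opt},\mathrm{cmb}\}$ and $I\ge 2$ the construction itself violates it.

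\textbf{Why the supply assumption fails.} For these policies $\lambda_{\pi,i}\hat\beta_{\pi,i}=\binom{\hat K_i-1}{K\gamma}\hat\phi_{\pi,i}$. Take a user $k\in\hat{\CK}_i$ with $\SfG_k=\hat G_i$, so that $\bm\beta^k_{\pi,i}=\hat\beta_{\pi,i}$. It is credited in round $i$ with $\binom{\hat K_i-1}{K\gamma}\prod_{i'\in[I]}\hat\phi_{\pi,i'}$ normalized subpackets. This is the \emph{entire} content of all subfiles $W_{\CP}$ of its requested file with $\CP\subseteq\hat{\CK}_i\setminus\{k\}$. The reason is that the $\prod_{i'<i}\hat\phi_{\pi,i'}$-fold replication in \eqref{eq:MC_round_trans_count} runs the reference scheme once per round-$(i-1)$ subpacket, deferred or not. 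Yet every $k\in\hat{\CK}_i$ has $\SfG_k<\hat\beta_{\pi,i-1}$. It therefore already received $\SfG_k\ge 1$ of its $\hat\beta_{\pi,i-1}$ candidate streams in each of its round-$(i-1)$ intervals. For $i=2$, under random removal, only about a fraction $1-\SfG_k/\hat\beta_{\pi,1}$ of those subfiles is still pending. So the round-$i$ term counts subpackets that were already delivered.

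\textbf{A concrete instance.} The overcount can push the total MC credit above $Z^{\mathrm{Tot}}_\pi$. Take $\pi=\mathrm{opt}$, $K=20$, $K\gamma=1$, $L=10$, with $18$ users having $\SfG_k=1$ and $2$ users having $\SfG_k=4$. Then $(\hat\Omega^*_{\mathrm{opt},1},\hat\beta^*_{\mathrm{opt},1})=(4,4)$, $\hat{\CK}_2=\CK_1$, $(\hat\Omega^*_{\mathrm{opt},2},\hat\beta^*_{\mathrm{opt},2})=(11,1)$, and $I=2$. Each single-antenna user is credited with $\tfrac14+f(18,20)=\tfrac14+\tfrac{17}{19}>1$ times its whole missing content. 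The formula therefore implies $Z^{\mathrm{UC}}_\pi<0$.

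\textbf{What your argument does establish.} It rigorously proves the case $I=1$, i.e., SPh. There $\lambda_{\pi,1}\hat\beta_{\pi,1}=(1-\gamma)\hat\Theta_{\pi,1}$ per user, every missing subpacket appears in exactly one candidate vector, and the MC and UC counts are exact. For $I\ge 2$ you would have to redefine round $i$ to operate only on the pending pool, with a removal rule that keeps that pool balanced across the indices $\CP\subseteq\hat{\CK}_i$. At that point $S^{\mathrm{MC}}_{\pi,i}$ and $\hat Z^{\mathrm{MC}}_{\pi,i}$ acquire deferred-fraction factors, and \eqref{eq:DoF_phantom_general} no longer holds in its stated form.
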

\begin{proof}
    Starting with~\eqref{eq:asym_DoF_phantom_version}, we can see that with uniform (normalized) subpacketization at all intervals, $f(s)$ becomes independent of the interval index $s$, and the general DoF formula in~\eqref{eq:asym_DoF_phantom_version} is simplified as 
    \begin{equation}
        \mathrm{DoF} = \frac{\sum_{s \in [S]} \sum_{k \in \CK(s)}\bm\beta_k(s)}{\sum_{s \in [S]} 1},
    \end{equation}
    which can be interpreted as the (normalized) total number of missing subpackets, i.e., $Z^{\mathrm{Tot}}_{\pi}$, divided by the total number of intervals. 
    To calculate the total number of intervals, we know that
    the number of MC intervals is given as $\sum_{i \in [I]} \hat{S}_{\pi,i}^{\mathrm{MC}} = \sum_{i \in [I]} \zeta_{\pi,i}S_{\pi,i}$. 
     Under~\eqref{eq:uc_rxa_activation}, UC completion time reduces to
\(S_{\pi}^{\rm UC}\!=\!S_{\pi,L}^{\rm UC}
\!\!=\!\!\lceil Z_{\pi}^{\rm UC}/L\rceil\). The ceiling introduces a rounding gap of less than one interval, which does not affect the asymptotic DoF; hence, the normalized UC contribution is
\(Z_{\pi}^{\rm UC}/L\).
    \color{black} We first note that
    \begin{equation}
        Z_{\pi}^{\mathrm{UC}} = Z^{\mathrm{Tot}}_{\pi} - \sum\nolimits_{i \in [I]} \hat{Z}_{\pi,i}^{\mathrm{MC}},
    \end{equation}
    where
    \begin{equation}
        \hat{Z}_{\pi,i}^{\mathrm{MC}} = \zeta_{\pi,i} \lambda_{\pi,i}\sum\nolimits_{k\in\hat{\CK_i}}\bm\beta_{\pi,i}^k
    \end{equation}
    is the (normalized) total number of subpackets delivered in MC round $i$.
    Plugging these in the DoF formula results in~\eqref{eq:DoF_phantom_general}, completing the proof.
\end{proof}

\begin{remarknum}\label{phantom_polciies}
By substituting policy-specific parameters in~\eqref{eq:DoF_phantom_general}, one can 
obtain the corresponding
expression of the DoF with the \emph{Phantom} strategy. For the $\mathrm{opt}$ policy, using $\hat{\phi}_{\mathrm{opt},i}$ from~\eqref{eq:policy_phi_phantom} and
\begin{equation*}
    \begin{array}{l}
        S_{\mathrm{opt},i} = \binom{\hat{K}_i}{\hat{\Omega}_{\rm opt,i}^{*}} \binom{\hat{\Omega}_{\rm opt,i}^*-1}{K\gamma}, \; \lambda_{\mathrm{opt},i} = \binom{\hat{K}_{i}-1}{\hat{\Omega}^*_{\rm opt,i}-1} \binom{\hat{\Omega}^*_{\rm opt,i}-1}{K\gamma},
    \end{array}
\end{equation*}
the DoF can be simplified as (derivations are removed due to lack of space)
\begin{equation}\label{Imp_Phantom_DoF}
\mathrm{DoF}_{\mathrm{opt},\mathrm{Ph}}
=\frac{L}{1-\frac{1}{K} \sum\limits_{i\in [I]}^{}\!\frac{f(\hat{K}_{i}, K)}{\hat{\Omega}^*_{\rm opt,i}\hat{\beta}^*_{\rm opt,i}}\Big(\!\hat{\Omega}^*_{\rm opt,i}\!\!\sum\limits_{k\in \hat{\CK}_{i}}\!\!\bm\beta_{\mathrm{opt,i}}^k - L\hat{K}_{i}\Big)},
\end{equation}
where
\begin{equation}\label{multiplicative_ratio}
    f(\hat{K}_{i}, K)  = \frac{\binom{ \hat{K}_{i}-1}{K\gamma}}{\binom{K - 1}{K\gamma}} = \prod_{l = 0}^{K\gamma - 1} \frac{\hat{K}_{i} - 1 - l}{
    K - 1 - l}.
\end{equation}
Similarly, for the $\mathrm{cmb}$ policy, the DoF expression can be found by replacing $\hat{\Omega}_{\rm opt,i}^*$ and $\hat{\beta}_{\rm opt,i}^*$ in~\eqref{Imp_Phantom_DoF} with $K\gamma + \lfloor \nicefrac{L}{\hat{G}_i} \rfloor$ and $\hat{G}_i$, respectively. On the other hand, 
for the $\mathrm{lin}$ policy, we can use
\begin{equation*}
    \begin{aligned}
        S_{\mathrm{lin},i} = \hat{K}_i (\hat{K}_i-K\gamma), \; \lambda_{\mathrm{lin},i} = (\hat{K}_i-K\gamma)(K\gamma + \lfloor \nicefrac{L}{\hat{G}_i} \rfloor),
    \end{aligned}
\end{equation*}
to achieve
\begin{equation}\label{Imp_Phantom_DoFlin}
\mathrm{DoF}_{\mathrm{lin},\mathrm{Ph}}=\frac{L}{1-
\sum\limits_{i\in[I]}^{}\frac{(\hat{K}_{i} -K\gamma)}{K^2(1-\gamma)\hat{G}_{i}}\Big(\sum\limits_{k\in \hat{\CK}_{i}}\bm\beta_{\mathrm{lin,i}}^k-\frac{L\hat{K}_{i}}{ \big\lfloor \frac{L}{\hat{G}_{i}} \big\rfloor + K\gamma}\Big)}\,.
\end{equation}
\end{remarknum}

\begin{remarknum}\label{s-r-ph}
The MC phase may be terminated after any $\hat{I}<I$ rounds, with the remaining subpackets delivered via UC, at the cost of reduced DoF. However, from~\eqref{multiplicative_ratio}, for $\pi \in \{\mathrm{opt},\mathrm{cmb}\}$, increasing $i$ reduces $f(\hat{K}_i,K)$ (since $\hat{K}_i>\hat{K}_{i+1}$), with a stronger impact for larger $K\gamma$. As a result, for sufficiently large $K\gamma$, the DoF of the $\mathrm{opt}$ and $\mathrm{cmb}$ policies may already slightly decrease at $\hat{I}=1$, as is also observed numerically in the next section. The same trend applies to $\mathrm{lin}$, albeit with a larger DoF loss due to the absence of a multiplicative factor.
Motivated by its reduced complexity and subpacketization, we focus on the single-round ($\hat{I}=1$) Phantom strategy ($\mathrm{SPh}$). Its properties are summarized in Table~\ref{tab:phantom-table}. 

\end{remarknum}

\color{black}

\begin{exmp}
\label{examp:phantom}
Here, we briefly demonstrate the operating procedure of the Phantom strategy for the $\mathrm{opt}$ policy 
by considering an asymmetric MIMO-CC setup with $L=16$, $\gamma=0.04$, and $J=3$ user groups. The user groups have $\{K_{1},K_{2},K_{3}\}=\{25,75,125\}$ users, and the spatial multiplexing gains of users in each group are given as $\{G_{1},G_{2},G_{3}\}=\{2,4,8\}$.

For this network, the Phantom strategy concludes after $I=2$ MC rounds and one UC round. In the first MC round ($i=1$), the set of users is given as $\hat \CK_1=\CK=\{1,2,\ldots,225\}$. Considering a hypothetical symmetric MIMO-CC setup with $\hat G_1=8$ receive antennas per user and using the reference policy $\mathrm{opt}$, for the number of served users per transmission and the number of assigned streams per selected user we get $\hat \Omega_{\mathrm{opt},1}^*=13$ and $\hat \beta_{\mathrm{opt},1}^*=4$, respectively. Accordingly, users in groups~2 and~3 can decode all their parallel streams, while $\hat \beta_{\mathrm{opt},1}^*-\SfG_1=2$ streams should be removed \color{black} randomly for users in group~1 to ensure linear decodability.

Following this process, in the MC round~2, we serve $\hat \CK_2=\CK_1=\{1,2,\ldots,25\}$. The parameters for the $\mathrm{opt}$ policy are calculated as 
$\hat \Omega_{\mathrm{opt},2}^*=17$ 
and $\hat \beta_{\mathrm{opt},2}^*=2$,  where $\hat \Omega_{\mathrm{opt},2}^*$ can be selected freely from the given range. As $\SfG_1 \ge \hat \beta_{\mathrm{opt},2}^*$, there is no need to remove \color{black} any streams anymore, and the MC rounds are concluded. It can be easily verified that, after counting for the UC round, the final DoF for this setup is calculated as $\mathrm{DoF}_{\mathrm{opt},\mathrm{Ph}}=44.05$. 

\end{exmp}


\subsection{Complexity--Performance Tradeoffs and Finite-SNR Operation} \label{subsec:tradeoffs_finite_snr} The considered policies and asymmetric delivery strategies %
offer different tradeoffs among achievable DoF, subpacketization, and implementation complexity.
For each asymmetric policy--strategy pair, a symmetry-based reference upper bound is obtained by applying the same policy to the corresponding enhanced symmetric system in which every user has receive dimension $G_J$. For example, for the $\mathrm{opt}$ policy, such an upper bound can be found using $G=G_J$ in~\eqref{eq:total_DoF}. The $\mathrm{opt}$ policy provides the most flexible achievable-DoF benchmark, although its combinatorial subpacketization may limit its practical use in large systems. The $\mathrm{cmb}$ policy offers a more structured intermediate alternative, whereas $\mathrm{lin}$ is particularly attractive within its applicability range because of its linear subpacketization. 

The asymmetric strategies introduce a complementary tradeoff. \emph{Phantom} provides a flexible high-performance construction, potentially at the cost of higher subpacketization, while SPh offers a lower-complexity alternative. \emph{Grouping} and \emph{Super-grouping} reduce the effective combinatorial problem size by decomposing a large heterogeneous delivery problem into smaller group-level constructions. Selected policy--strategy combinations may also be implemented using shared-cache MIMO~\cite{naseritehrani2026low} or MIMO-PDA-based schemes~\cite{huang2026placement}, providing lower-subpacketization realizations of some operating points under the corresponding feasibility and divisibility conditions. 

Each achievable DoF operating point also defines a candidate finite-SNR transmission mode through its scheduled user set and stream allocation. These modes can be compared for a given SNR and channel realization based on their optimized symmetric rates. At lower SNR, more conservative modes, such as \emph{min-$G$} and \emph{Super-grouping}, may be preferable because reducing the number of simultaneously transmitted streams can improve the symmetric rate. As the SNR increases, \emph{Grouping} can benefit from greater spatial multiplexing, while \emph{Super-grouping} and \emph{Phantom} may further exploit coded-multicasting opportunities. Once a mode is selected, the finite-SNR beamforming and receive-combining methods in~\cite{naseritehrani2026cache} can be applied to its stream allocation.

Beyond subpacketization, implementation complexity depends on the numbers of multicast and residual transmission intervals, which determine the scheduling and signaling overhead and the number of beamforming instances required. CSI-dependent beamformers must also be updated when delivery spans multiple channel-coherence blocks. For a fixed system configuration, however, the multicast structure, stream allocation, and packet-index schedule can be precomputed offline and reused across demand realizations, with only the requested packet indices updated online. Receivers buffer the decoded subpackets until their requested files can be reconstructed. The computational cost of each beamformer depends on the chosen implementation and is separate from the delivery-level complexity considered here.  

\color{black}
\begin{table*}[t]
    \centering
    \caption{The \emph{Single-round Phantom} ($\mathrm{SPh}$) strategy parameters for different reference policies. Note that setting $\hat I=1$ gives $\epsilon_{\pi,1}=\zeta_{\pi,1}=1$; hence the parameters reduce to $(\CK,K,\hat G,\hat\Omega,\hat\beta, \bm \beta_{\pi}^k)$ after dropping the round index.}
    \label{tab:phantom-table}
    \begin{tabular}{>{\centering\arraybackslash}p{0.12\textwidth}!{\vrule width 1pt}>{\centering\arraybackslash}p{0.26\textwidth}|>{\centering\arraybackslash}p{0.26\textwidth}|>{\centering\arraybackslash}p{0.26\textwidth}}
        \textbf{Policy} & \textbf{DoF-optimized} & \textbf{Combinatorial parallel links} & \textbf{Cyclic parallel links} \\
        \Xhline{1pt}
        \textbf{DoF} 
        & $\begin{aligned}\\[-1mm]
        &\textrm{DoF}_{\mathrm{opt}, \mathrm{SPh}} = \\
        &\frac{\displaystyle   L }{\displaystyle  1 - \frac{
     \displaystyle 
     1}{\displaystyle K \hat{\beta}^{*}
} \displaystyle \big(\sum\limits_{k\in \CK}\bm\beta_{\mathrm{opt}}^k - \frac{LK}{\hat{\Omega}^{*}}\big) } \end{aligned}$
        & $\begin{aligned}\\[-1mm]
        &\mathrm{DoF}^{}_{\mathrm{cmb},\mathrm{SPh}} = \\
        &\frac{\displaystyle   L }{\displaystyle  1 - \frac{
     \displaystyle 
     1}{\displaystyle K \hat{G}
} \displaystyle \big(\sum\limits_{k\in \CK}\bm\beta_{\mathrm{cmb}}^k - \frac{LK}{\left(\! K\gamma + \big\lfloor \tfrac{L}{\hat{G}} \big\rfloor\! \right)}\big) }  
        \end{aligned}$ 
        & $\begin{aligned}\\[-1mm]
        &\mathrm{DoF}^{}_{\mathrm{lin},\mathrm{SPh}} = \\
        &\frac{\displaystyle   L }{\displaystyle  1 - \frac{
     \displaystyle 
     1}{\displaystyle K \hat{G}
} \displaystyle \big(\sum\limits_{k\in \CK}\bm\beta_{\mathrm{lin}}^k - \frac{LK}{\left(\! K\gamma + \big\lfloor \tfrac{L}{\hat{G}} \big\rfloor\! \right)}\big) } 
        \end{aligned}$ \\ 
        \hline
        \makecell{
        \textbf{DoF} \\
        per multicast
        }
        &$\begin{aligned}\\[-2mm]
        &\hat{\textrm{DoF}}_{\mathrm{opt}} = \hat{\Omega}^{*} \cdot \hat{\beta}^{*} \\[0.2mm]
        &\hat{\Omega}^{*} , \hat{\beta}^{*} \: \leftarrow \Delta\Big(K,\gamma,L,\! \hat{G}\Big)
        \end{aligned}$ 
        & $\begin{aligned}
        &\hat{\mathrm{DoF}}_{\mathrm{cmb}} = \\
        & \qquad\qquad\hat{G}\left( K\gamma + \big\lfloor \tfrac{L}{\hat{G}} \big\rfloor \right)
        \end{aligned}$
        & $\begin{aligned}
        &\mathrm{DoF}^{}_{\mathrm{lin}} = \\
        & \qquad\qquad \hat{G}  \left( K\gamma + \big\lfloor \tfrac{L}{\hat{G}} \big\rfloor \right)
        \end{aligned}$
        \\
        \hline
        \makecell{
        \textbf{Subpacketization} \\
        }
        & $\begin{aligned}\\[0.2mm]
        &\hat{\Theta}^{}_{\mathrm{opt}} = \binom{K}{K\gamma} \binom{K-K\gamma-1}{\hat{\Omega}^{*}-K\gamma-1} \hat{\beta}^{*}_{}
        \end{aligned}$
        & $\begin{aligned}\\[0.2mm]
        &\hat{\Theta}^{}_{\mathrm{cmb}} = \binom{K}{K\gamma} \binom{K-K\gamma-1}{\big\lfloor \tfrac{L}{\hat{G}} \big\rfloor-1} \hat{G}_{}
        \end{aligned}$
        & $\begin{aligned}\\[0.2mm]
        &\hat{\Theta}^{}_{\mathrm{lin}} = \hat{G} K \left( K\gamma + \big\lfloor \tfrac{L}{\hat{G}} \big\rfloor \right)
        \end{aligned}$ \\
        \hline
        \makecell{
        \textbf{Trans. count} \\
        MC and UC
        }
        & $\begin{aligned}
        &{S}_{\mathrm{opt}}^{\textrm{MC}} =  \binom{K}{\hat{\Omega}^{*}} \binom{\hat{\Omega}^{*}-1}{K\gamma}\\[3mm]
        &{S}_{\mathrm{opt}}^{\textrm{UC}} = \frac{1}{L}\binom{K-1}{\hat{\Omega}^{*}-1} \binom{\hat{\Omega}^{*}-1}{K\gamma} \times\\&
     \qquad \sum\limits_{k\in \CK}\Big(\hat{\beta}^{*}-\bm\beta_{\mathrm{opt}}^k\Big)
        \end{aligned}$ 
        & $\begin{aligned}\\[-.6mm]
        &S_{\mathrm{cmb}}^{\textrm{MC}} = \\[-1mm]
        & \quad\binom{K}{K\gamma+\big\lfloor \tfrac{L}{\hat{G}} \big\rfloor}\binom{K\gamma+\big\lfloor \tfrac{L}{\hat{G}} \big\rfloor-1}{K\gamma}\\[1mm]
        &S_{\mathrm{cmb}}^{\textrm{UC}} = \frac{1}{L}\binom{K-1}{K\gamma+\big\lfloor \tfrac{L}{\hat{G}} \big\rfloor-1} \times \\&\hspace{-2mm}\binom{K\gamma+\big\lfloor \tfrac{L}{\hat{G}} \big\rfloor-1}{K\gamma}\!\!\sum\limits_{k\in \CK}\Big(\hat{G}-\bm\beta_{\mathrm{cmb}}^k\Big) 
        \end{aligned}$ 
        & $\begin{aligned}
        &S_{\mathrm{lin}}^{\textrm{MC}} =  K\big(K-K\gamma\big)\\[6mm]
        &S_{\mathrm{lin}}^{\textrm{UC}} = \frac{1}{L}
     \displaystyle (K-K\gamma) \times
    \\&\quad  \!\!\left( K\gamma + \big\lfloor \tfrac{L}{\hat{G}} \big\rfloor \right) 
    \sum\limits_{k\in \CK}\Big(\hat{G}-\bm\beta_{\mathrm{lin}}^k\Big) 
        \end{aligned}$ \\
        \hline
        \textbf{Applicability} 
        & All network parameters 
        & All network parameters 
        & $\displaystyle\big\lfloor \displaystyle\tfrac{\displaystyle L}{ \displaystyle \hat{G}} \big\rfloor \ge \displaystyle K
\gamma, \quad $ \\
    \end{tabular}%
\end{table*}

\section{Numerical Analysis}
\label{section:Simulations}


Numerical experiments are carried out to evaluate the performance of the proposed asymmetric MIMO-CC strategies. 

Fig.~\ref{fig:Plot_scaling_K2} illustrates the impact of user heterogeneity on the achievable DoF under different strategies and policies for a network of $K=35$ users, divided into $J=2$ groups of sizes $K_1 = 35-K_2$ and $K_2$, where $K_2\in\{5,10,\ldots,30\}$. The receive antenna configuration for these two groups is set to $G_1 = 2$ and $G_2 = 8$. For comparison, we have also included the achievable DoF of the multi-user MIMO (MU-MIMO) baseline, where cache memories are used for a local caching gain only. As observed, for the policy $\pi \in \{\mathrm{opt}, \mathrm{cmb}\}$,
both the \emph{Grouping} and \emph{Phantom} strategies effectively translate the increasing share of users with more receive antennas
(in set $K_2$) into a clear DoF gain, across the entire range of $K_2$ (in contrast, the $\mathrm{lin}$ policy is structurally constrained in this setup due to its applicability constraints). This highlights how well the proposed strategies can integrate the available Rx spatial multiplexing gain, which is non-uniformly distributed across user sets, with the global caching gain. 
Moreover, all proposed strategies outperform the 
MU-MIMO baseline across the full range of parameters, confirming that caching and spatial multiplexing gains
remain complementary even under pronounced asymmetry.  
Table~\ref{orderofComplexity} collects the corresponding subpacketization levels, highlighting the inherent performance--complexity tradeoff across different policies and
strategies.

\begin{figure}[t]
    \centering

    \resizebox{.8\columnwidth}{!}{%

    \begin{tikzpicture}

    \begin{axis}
    [
    axis lines = center,
    xlabel near ticks,
    xlabel = \smaller {$K_{2}$},
    ylabel = \smaller {Achievable DoF},
    ylabel near ticks,
    ymin =9,
    ymax = 50,
    xmin = 5,  
    xmax = 30,
     xticklabel style={/pgf/number format/fixed}, 
    legend style={
    nodes={scale=0.7, transform shape},
    at={(.34,1.0)}, 
    anchor=north, 
    draw=black,
    outer sep=2pt,
    font=\normalsize,
    legend columns=2, 
    scale=0.7, 
    text height=3.0ex, text depth=.25ex 
    },
    ticklabel style={font=\smaller},
    grid=both,
    minor tick num = 4, 
    major grid style = {line width=0.5pt,draw=gray!50},
    minor grid style = {line width=0.2pt,draw=gray!30}
    ]


    \addplot
    [dash dot, mark = square, mark size=3.6,mark options={solid, fill=black!50} , black!100]
    table[y=MUMIMO,x=K2]{Figs/DoF_Prop_T1_1.tex};
    \addlegendentry{\normalsize MU-MIMO };
    
    \addplot
    [dash dot, mark = triangle, mark size=3.2,mark options={solid, fill=blue!100} , blue!100]
    table[y=minG,x=K2]{Figs/DoF_Prop_T1_1.tex};
    \addlegendentry{\normalsize \textit{min-$G$}
    };


    \addplot
    [dashed, mark = x, mark size=3.6,mark options={solid, fill=red!100} , red!100]
    table[y=phantom,x=K2]{Figs/DoF_Prop_T1_1.tex};
    \addlegendentry{\normalsize \textit{Phantom}, $\mathrm{opt}$ };
    
    \addplot
    [mark = o , mark size=3.4, green!100]
    table[y=grouping,x=K2]{Figs/DoF_Prop_T1_1.tex};
    \addlegendentry{\normalsize \textit{Grouping}, $\mathrm{opt}$ };


   \addplot
    [dashed, mark = +, mark size=3.6,mark options={solid, fill=cyan!100} , magenta!100]
    table[y=phantomcpar,x=K2]{Figs/DoF_Prop_T1_1.tex};
    \addlegendentry{\normalsize \textit{Phantom}, $\mathrm{cmb}$ };

    \addplot
    [mark = * , mark size=3.4, cyan!100]
    table[y=groupingcpar,x=K2]{Figs/DoF_Prop_T1_1.tex};
    \addlegendentry{\normalsize \textit{Grouping}, $\mathrm{cmb}$ };
    

      \addplot
    [dashed, mark = pentagon, mark size=3.7,mark options={solid, fill=orange!100} , orange!100]
    table[y=phantomlpar,x=K2]{Figs/DoF_Prop_T1_1.tex};
    \addlegendentry{\normalsize \textit{Phantom}, $\mathrm{lin}$ };

    \addplot
    [mark = 10-pointed star , mark size=3.6, olive!100]
    table[y=groupinglpar,x=K2]{Figs/DoF_Prop_T1_1.tex};
    \addlegendentry{\normalsize \textit{Grouping}, $\mathrm{lin}$ };
    
    
    

    \end{axis}

    \end{tikzpicture}
    }
    \vspace{-2mm}
    \caption{User heterogeneity impact on DoF scaling under different strategies and policies. $L=14$, $\gamma = 0.2$, and $K_1 = 35-K_2$. 
    }
    \label{fig:Plot_scaling_K2}
\end{figure}
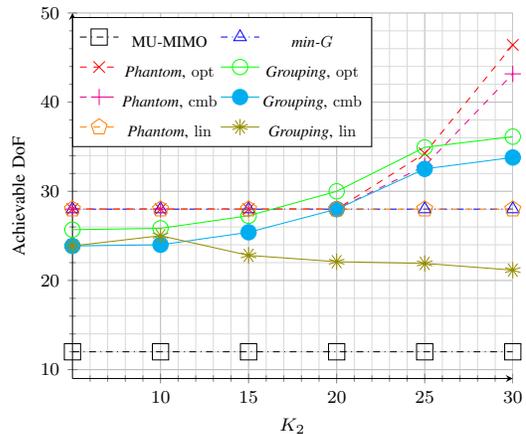

\begin{table}[!t]
\centering
\caption{Subpacketization order of different strategies, network setup similar to Fig.~\ref{fig:Plot_scaling_K2}, $K_1 = 5$, $K_2 = 30$.}
\label{orderofComplexity}
\vspace{-0mm}
\renewcommand{\arraystretch}{1.31}
\fontsize{16}{16}\selectfont
\vspace{-0mm}
\resizebox{.4\textwidth}{!}{
\begin{tabular}{|c|*{3}{>{\centering\arraybackslash}m{3.5cm}|}}
\hline
\diagbox[dir=NW,width=7.9em,height=2.5em]{\fontsize{20}{22}\selectfont Strategy}{\fontsize{20}{20}\selectfont Policy} 
  & \fontsize{20}{20}\selectfont $\mathrm{opt}$ 
  & \fontsize{20}{20}\selectfont $\mathrm{cmb}$ 
  & \fontsize{20}{20}\selectfont $\mathrm{lin}$ \\
\hline
\fontsize{20}{20}\selectfont \emph{min-$G$}
  & \fontsize{20}{20}\selectfont $3.98\times 10^{12}$
  & \fontsize{20}{20}\selectfont $3.98\times 10^{12}$
  & \fontsize{20}{20}\selectfont $9.8\times 10^{2}$ \\
\hline
\fontsize{20}{20}\selectfont \emph{Grouping}
  & \fontsize{20}{20}\selectfont $1.1\times 10^{8}$
  & \fontsize{20}{20}\selectfont $4.75\times 10^{6}$
  & \fontsize{20}{20}\selectfont $8.8\times 10^{2}$ \\
\hline
\fontsize{20}{20}\selectfont \emph{Phantom}
  & \fontsize{20}{20}\selectfont $1.45\times 10^{9}$
  & \fontsize{20}{20}\selectfont $1.27\times 10^{9}$
  & \fontsize{20}{20}\selectfont $8.6\times 10^{2}$ \\
\hline
\end{tabular}
}
\end{table}
In Table~\ref{effgrps}, the performance of the \emph{Super-grouping} strategy is evaluated for $J=5$ user sets. 
The results show that \emph{Super-grouping} can constructively enhance the achievable DoF of the asymmetric MIMO-CC system compared to both \emph{min-$G$} ($\bar{J}=1$) and \emph{Grouping} ($\bar{J}=J=5$) strategies. 
The maximum DoF is attained when $\bar{J}=3$, with the equivalent user set configuration $\{\{1\}, \{2,3,4\}, \{5\}\}$ (which is one of the $\binom{5-1}{3-1}=6$ possible combinations for $\bar{J}=3$). This increased DoF results from an optimized trade-off in~\eqref{grp_delivery34_opt_concise}, between receiver-side spatial multiplexing and CC gains in equivalent user sets. 

\begin{table}[!t]
\centering
\caption{Achievable DoF of the \emph{Super-grouping} strategy for all the possible combinations of user sets (shown in red color),
$J=5$, $L=16$, $\gamma = 0.05$, $\{K_j\}_{j=1}^5 = \{20,20,20,20,220\}$, and $\{G_j\}_{j=1}^5 = \{2,5,6,7,16\}$.
}\label{effgrps}
\vspace{-0mm}
\renewcommand{\arraystretch}{1.3}
\fontsize{14}{29}\selectfont
\vspace{-0mm}
\resizebox{.485\textwidth}{!}{
\begin{tabular}{|c|c|*{6}{>{\centering\arraybackslash}m{3.75cm}|}}
\hline
\fontsize{24}{20}\selectfont $\bar{J}$ & \fontsize{24}{20}\selectfont $\binom{J-1}{\bar{J}-1}$ & \multicolumn{6}{|c|}{\fontsize{24}{20}\selectfont$\textrm{DoF}_{\mathcal{\bar{J}}}$} \\
\hline
\fontsize{24}{20}\selectfont 1 & \fontsize{24}{20}\selectfont 1 & \makecell{\fontsize{24}{20}\selectfont 46 \\[-3mm] \textcolor{red}{\{1,2,3,4,5\}}} & -- & -- & -- & -- & -- \\
\hline
\fontsize{24}{20}\selectfont 2 & \fontsize{24}{20}\selectfont 4 & \makecell{\fontsize{24}{20}\selectfont 71.0526 \\[-3mm] \textcolor{red}{\{1\},\{2,3,4,5\}}} & \makecell{\fontsize{24}{20}\selectfont 63.7168 \\[-3mm] \textcolor{red}{\{1,2\},\{3,4,5\}}} & \makecell{\fontsize{24}{20}\selectfont 59.8446 \\[-3mm] \textcolor{red}{\{1,2,3\},\{4,5\}}} & \makecell{\fontsize{24}{20}\selectfont 66.9767 \\[-3mm] \textcolor{red}{\{1,2,3,4\},\{5\}}} & -- & -- \\
\hline
\fontsize{24}{20}\selectfont 3 & \fontsize{24}{20}\selectfont 6 & \makecell{\fontsize{24}{20}\selectfont 62.8690 \\[-3mm] \textcolor{red}{\{1\},\{2\},\{3,4,5\}}} & \makecell{\fontsize{24}{20}\selectfont 63.4228 \\[-3mm] \textcolor{red}{\{1\},\{2,3\},\{4,5\}}} & \makecell{\fontsize{24}{20}\selectfont 75.5433 \\[-3mm] \textcolor{red}{\{1\},\{2,3,4\},\{5\}}} & \makecell{\fontsize{24}{20}\selectfont 58.6047 \\[-3mm] \textcolor{red}{\{1,2\},\{3\},\{4,5\}}} & \makecell{\fontsize{24}{20}\selectfont 66.9767 \\[-3mm] \textcolor{red}{\{1,2\},\{3,4\},\{5\}}} & \makecell{\fontsize{24}{20}\selectfont 63.7425 \\[-3mm] \textcolor{red}{\{1,2,3\},\{4\},\{5\}}} \\
\hline
\fontsize{24}{20}\selectfont 4 & \fontsize{24}{20}\selectfont 4 & \makecell{\fontsize{24}{20}\selectfont  57.8867 \\[-3mm] \textcolor{red}{\{1\},\{2\},\{3\},\{4,5\}}} & \makecell{\fontsize{24}{20}\selectfont 66.0406 \\[-3mm] \textcolor{red}{\{1\},\{2\},\{3,4\},\{5\}}} & \makecell{\fontsize{24}{20}\selectfont 67.8179 \\[-3mm] \textcolor{red}{\{1\},\{2,3\},\{4\},\{5\}}} & \makecell{\fontsize{24}{20}\selectfont 62.3377 \\[-3mm] \textcolor{red}{\{1,2\},\{3\},\{4\},\{5\}}} & -- & -- \\
\hline
\fontsize{24}{20}\selectfont 5 & \fontsize{24}{20}\selectfont 1 & \makecell{\fontsize{24}{20}\selectfont 61.5259 \\[-3mm] \textcolor{red}{\{1\},\{2\},$\cdots$,\{5\}}} & -- & -- & -- & -- & -- \\
\hline
\end{tabular}
}
\end{table}

Table~\ref{Table_1} illustrates the operating procedure of the \emph{Single-round Phantom} ($\mathrm{SPh}$) strategy and its achievable DoF values under the $\mathrm{opt}$ policy, for the network setup considered in Example~\ref{examp:phantom}, i.e., $J=3$ and $\{G_j\}_{j=1}^3 = \{2,4,8\}$.
As observed, except for the special case when $\hat{G}=\check{G}$ and $\hat{\Omega}=\Omega^*_{\check{G}}$ where
\emph{min-$G$} and $\mathrm{SPh}$
attain the same DoF, 
the $\mathrm{SPh}$ strategy provides a larger DoF boost than both the \emph{Grouping} and \emph{min-$G$} strategies, as it simultaneously exploits the maximum available CC gain and the receive-side spatial multiplexing capabilities of all users. Moreover, the achievable DoF of the $\mathrm{SPh}$ strategy can be fine-tuned through the choice of \(\hat{G}\) and \(\Omega\), while strictly adhering to the linear decodability condition in Lemma~\ref{LD_Phantom}.

\begin{table}[!t]
\vspace{10pt}
\centering
\caption{Operating procedure and achievable DoF of the $\mathrm{SPh}$ strategy, and comparison with \emph{min-$G$} and \emph{Grouping} strategies, $J=3$, $\{K_j\}_{j=1}^3 = \{25,75,125\}$, $\{G_j\}_{j=1}^3 = \{2,4,8\}$, $L=16$, and $\gamma=0.04$.}
\resizebox{.49\textwidth}{!}{
\vspace{-1mm}
\renewcommand{\arraystretch}{1.3}
\fontsize{22}{24}\selectfont
\vspace{-2mm}

\resizebox{\textwidth}{!}{
\begin{tabular}{|c|c|c|c|c|c|c|c|c|c|c|c|c|}
\hline
\multicolumn{9}{|c|}{$\textrm{DoF}_{\mathrm{opt},\mathrm{SPh}}$} & $\textrm{DoF}^{*}_{\check{G}}$ & \multicolumn{3}{|c|}{$\textrm{DoF}_{\CJ}^*$= 35.72} \\
\hline
\diagbox[dir=NW,width=7.5em, height=2em]{$\hat{\beta}$ \text{or} $\beta_k$}{\raisebox{-.3em}{$\Omega$}} & 10 & 11 & 12 & 13 & 14 & 15 & 16 & 17 & 17 & 9 & 7 & 7 \\
\hline
2 & 20 & 22 & 24 & 26 & 28 & 30 & 32 & 34 & 34 & 18 & -- & -- \\
3 & 28.05 & 30.66 & 33.23 & 35.77 & 38.28 & 40.75 & -- & -- & -- & -- & -- & -- \\
4 & 35.12 & 38.17 & 41.14 & 44.05 & -- & -- & -- & -- & -- & -- & 28 & -- \\
5 & 35.29 & 37.71 & 40 & 42.16 & -- & -- & -- & -- & -- & -- & -- & -- \\
6 & 35.41 & 37.42 & 39.27 & -- & -- & -- & -- & -- & -- & -- & -- & -- \\
7 & 35.49 & 37.21 & 38.77 & -- & -- & -- & -- & -- & -- & -- & -- & -- \\
8 & 35.56 & 37.05 & -- & -- & -- & -- & -- & -- & -- & -- & -- & 56 \\
\hline
\end{tabular}\label{aTable}
}

\label{Table_1}
}
\end{table}

\begin{table}[t]
    \caption{Simulations setups for Fig.~\ref{fig:plot_prim_over_Imp}, Fig.~\ref{fig:plot_lpar_opt_comparison}, and Fig.~\ref{fig:plot_scal_Rx_SM}.}
    \centering
    \resizebox{\columnwidth}{!}{
    \begin{tabular}{|c|c|c|c|c|c|}
        \hline
        Setup & $L$ & $\gamma$ & $J$ & $\{K_j\}_{j=1}^J$ & $\{G_j\}_{j=1}^J$ \\
        \hline
        $\mathrm{sa1}$ & 28 & 0.05 & 2 & $\{20,60\}$ & $\{2,4\}$ \\
        $\mathrm{sa2}$ & 16 & 0.05 & 3 & $\{20,40,100\}$ & $\{3,5,7\}$ \\
        $\mathrm{sa3}$ & 16 & 0.05 & 4 & $\{20,20,40,100\}$ & $\{4,8,12,16\}$ \\
        $\mathrm{sa4}$ & 16 & 0.05 & 3 & $\{60,100,200\}$ & $\{2,6,8\}$ \\
        $\mathrm{sa5}$ & 16 & 0.1 & 3 & $\{60,100,200\}$ & $\{2,6,8\}$ \\
        \hline
        $\mathrm{sb1}$ & 28 & 0.05 & 2 & $\{40,100\}$ & $\{2,4\}$ \\
        $\mathrm{sb2}$ & 16 & 0.01 & 3 & $\{100,100,200\}$ & $\{2,4,8\}$ \\
        $\mathrm{sb3}$ & 16 & 0.01 & 4 & $\{50,50,100,100\}$ & $\{3,5,7,9\}$ \\
        $\mathrm{sb4}$ & 24 & 0.04 & 3 & $\{25,100,25\}$ & $\{2,4,16\}$ \\
        \hline
        $\mathrm{sc1}$ & 16 & 0.04 & 3 & $\{150,25,50\}$ & $\{2,4,8\}$ \\
        $\mathrm{sc2}$ & 16 & 0.04 & 3 & $\{50,150,25\}$ & $\{2,4,8\}$ \\
        $\mathrm{sc3}$ & 16 & 0.04 & 3 & $\{50,25,150\}$ & $\{2,4,8\}$ \\
        $\mathrm{sc4}$ & 16 & 0.04 & 3 & $\{25,50,150\}$ & $\{2,4,8\}$ \\
        $\mathrm{sc5}$ & 16 & 0.05 & 5 & $\{20,20,20,40,100\}$ & $\{2,4,6,8,10\}$ \\
        $\mathrm{sc6}$ & 16 & 0.05 & 5 & $\{20,20,20,40,100\}$ & $\{4,6,8,10,12\}$ \\
        $\mathrm{sc7}$ & 16 & 0.05 & 5 & $\{20,20,20,40,100\}$ & $\{2,10,12,14,16\}$ \\
        \hline
    \end{tabular}
    }
    \label{tab:sim_setups}
\end{table}

\begin{figure}[t]
    \centering
    \resizebox{.8\columnwidth}{!}{%

   
    \begin{tikzpicture}
\begin{axis}[
    axis lines = left,
    xlabel = \smaller {Scenario},
    ylabel = \smaller {Achievable DoF},
    ylabel near ticks,
    xlabel near ticks,
    legend pos = north west,
    legend columns=2,
    legend style = {
        nodes={scale=0.99, transform shape},
        cells={align=center},
        at={(0.01,1)},
        anchor=north west,
        font=\small
    },
    ybar,
    legend image code/.code={%
      \draw[#1] (0cm,-0.045cm) rectangle (0.42cm,0.25cm);
    },
    ymin = 0,
    ymax =99,
    xtick=data,
    enlarge x limits=0.13, 
    symbolic x coords={sp1,sp2,sp3,sp4,sp5},
    xtick=data,
    xticklabels={$\mathrm{sa}_1$,$\mathrm{sa}_2$,$\mathrm{sa}_3$,$\mathrm{sa}_4$,$\mathrm{sa}_5$},
    ticklabel style={font=\smaller},
    grid=both,
    major grid style={line width=.1pt,draw=gray!30},
    bar width=0.14cm,
]
 \addplot 
    [black,fill=brown!30]
    table
    [x=Scen ,y=minGsbopt]
    {Figs/DoF_imp_over_prim1.tex};
    \addlegendentry{\smaller \textit{min-$G$}}

    \addplot 
    [black,fill=black!50]
    table
    [x=Scen ,y=groupingsbop]
    {Figs/DoF_imp_over_prim1.tex};
    \addlegendentry{\smaller \textit{Grouping}}
    
    \addplot 
    [black,fill=red!50]
    table
    [x=Scen ,y=groupingsbopeff]
    {Figs/DoF_imp_over_prim1.tex};
    \addlegendentry{\smaller \textit{Super-grouping}}

    \addplot 
    [black,fill=blue!30]
    table
    [x=Scen ,y=phantomsbopt]
    {Figs/DoF_imp_over_prim1.tex};
    \addlegendentry{\smaller $\mathrm{SPh}$}

    \addplot 
    [black,fill=green!50]
    table
    [x=Scen ,y=phantomsimpbopt]
    {Figs/DoF_imp_over_prim1.tex};
    \addlegendentry{\smaller \textit{Phantom}}
    


    \end{axis}
    \end{tikzpicture}
    }
    \vspace{-2mm}
    \caption{DoF comparison, hybrid vs primary strategies.
    }
    \label{fig:plot_prim_over_Imp}
\end{figure}

In Fig.~\ref{fig:plot_prim_over_Imp}, the proposed hybrid strategies, i.e., \emph{Super-grouping}, \emph{Phantom}, and $\mathrm{SPh}$, are evaluated under the $\mathrm{opt}$ policy, and their achievable DoF is compared with the primary strategies, i.e., \emph{min-$G$} and \emph{Grouping} (simulation setups $\mathrm{sa}_1$--$\mathrm{sa}_5$ are defined in Table~\ref{tab:sim_setups}). 
We observe that the hybrid strategies consistently outperform their primary counterparts by effectively adapting their multicast design to the
heterogeneous antenna profile. 
The $\mathrm{SPh}$ strategy closely tracks the performance of the \emph{Phantom} strategy in all cases, showing that its simplified construction preserves the achievable DoF. 
Also, scenario $\mathrm{sa}_5$ illustrates that \emph{Super-grouping} can surpass \emph{Phantom} when the Rx-side spatial multiplexing gains, user-set sizes, and local caching gains are sufficiently large.

\begin{figure}[t]
    \centering
   \resizebox{.8\columnwidth}{!}{%
    \begin{tikzpicture}
    \begin{axis}[
        axis lines = left,
        xlabel = \smaller {Scenario},
        ylabel = \smaller {Achievable DoF},
        ylabel near ticks,
        xlabel near ticks,
        legend pos = north west,
        legend columns=2,
        legend style = {
            nodes={scale=0.99, transform shape},
            cells={align=center},
            at={(0.01,1)},
            anchor=north west,
            font=\small
        },
        ybar,
        legend image code/.code={%
        \draw[#1] (0cm,-0.045cm) rectangle (0.42cm,0.25cm);
        },
        ymin = 0,
        ymax =65,
        xtick=data,
        enlarge x limits=0.1,
        symbolic x coords={sp1a,sp1b,sp2a,sp2b,sp3a,sp3b,sp4a,sp4b},
        xticklabels={$\mathrm{sb}_1$,$\mathrm{sb}_1$,$\mathrm{sb}_2$,$\mathrm{sb}_2$,$\mathrm{sb}_3$,$\mathrm{sb}_3$,$\mathrm{sb}_4$,$\mathrm{sb}_4$,$\mathrm{sb}_5$,$\mathrm{sb}_5$},
        ticklabel style={font=\smaller},
        grid=both,
        minor tick num = 4,
        major grid style={line width=.5pt,draw=gray!30},
        minor grid style = {line width=0.3pt,draw=gray!30},
        bar width=0.1cm,
    ]

    \addplot 
        [black,fill=brown!30]
        table
        [x=Scen ,y=minGsbopt]
        {Figs/DoF_effgrp_phantom_senseitivity.tex};
    \addlegendentry{\smaller \textit{min-$G$}}


    \addplot 
        [black,fill=black!100]
        table
        [x=Scen ,y=groupingsbopeff]
        {Figs/DoF_effgrp_phantom_senseitivity.tex};
    \addlegendentry{\smaller \textit{Super-grouping}}

     \addplot 
        [black,fill=black!50]
        table
        [x=Scen ,y=phantomsbopt]
        {Figs/DoF_effgrp_phantom_senseitivity.tex};
    \addlegendentry{\smaller $\mathrm{SPh}$}
    
    \addplot 
        [black,fill=red!50]
        table
        [x=Scen ,y=phantomsimpbopt]
        {Figs/DoF_effgrp_phantom_senseitivity.tex};
    \addlegendentry{\smaller \textit{Phantom}}
    \node[font=\tiny, anchor=south] at (axis cs:sp1a,44) {$\mathrm{opt}$};
    \node[font=\tiny, anchor=south] at (axis cs:sp1b,46) {$\mathrm{lin}$};
    \node[font=\tiny, anchor=south] at (axis cs:sp2a,31) {$\mathrm{opt}$};
    \node[font=\tiny, anchor=south] at (axis cs:sp2b,30) {$\mathrm{lin}$};
    \node[font=\tiny, anchor=south] at (axis cs:sp3a,32)  {$\mathrm{opt}$};
    \node[font=\tiny, anchor=south] at (axis cs:sp3b,29) {$\mathrm{lin}$};
    \node[font=\tiny, anchor=south] at (axis cs:sp4a,48) {$\mathrm{opt}$};
    \node[font=\tiny, anchor=south] at (axis cs:sp4b,49) {$\mathrm{lin}$};

    \end{axis}

    \end{tikzpicture}
    }
    \vspace{-2mm}
    \caption{Comparing $\mathrm{opt}$ policy v.s $\mathrm{lin}
    $.
    }
    \label{fig:plot_lpar_opt_comparison}
\end{figure}

In Fig.~\ref{fig:plot_lpar_opt_comparison}, the achievable DoF of the $\mathrm{opt}$ policy is compared against the
$\mathrm{lin}$ policy, under various strategies (simulation setups $\mathrm{sb}_1$-$\mathrm{sb}_4$ are defined in Table~\ref{tab:sim_setups}).
The results reveal that, across the considered setups, the $\mathrm{lin}$ policy closely tracks the $\mathrm{opt}$ policy, despite its inherent structural constraints. Moreover, under the \emph{Phantom} strategy, the
$\mathrm{lin}$ policy can even achieve a larger DoF
than $\mathrm{opt}$ and $\mathrm{cmb}$, due to its smaller subpacketization, which makes multi-round multicasting more efficient as discussed in Remark~\ref{s-r-ph}.

 In Fig.~\ref{fig:Tx_side_scaling}, the scaling behavior of the achievable DoF with respect to the transmitter-side spatial multiplexing gain is studied 
 for 
 different strategies, under the $\mathrm{opt}$ policy.
As $L$ increases, all strategies benefit from the enhanced transmitter side spatial multiplexing, albeit with different scaling behaviors. The \emph{Phantom} strategy consistently achieves the largest DoF and the steepest growth, followed by \emph{Super-grouping}, \emph{Grouping}, and
\emph{min-$G$}. This yields the ordering $\mathrm{DoF}_{\mathrm{opt},\check{G}}^* \leq \mathrm{DoF}_{\mathrm{opt},{\CJ}}^* \leq\mathrm{DoF}_{\mathrm{opt},\Bar{\CJ}}^* \leq \mathrm{DoF}_{\mathrm{opt},\mathrm{Ph}}^*$.
Moreover, while \emph{Grouping} is inferior to \emph{min-$G$} at small $L$, it surpasses \emph{min-$G$} once $L$ becomes sufficiently large ($L\approx 15$).
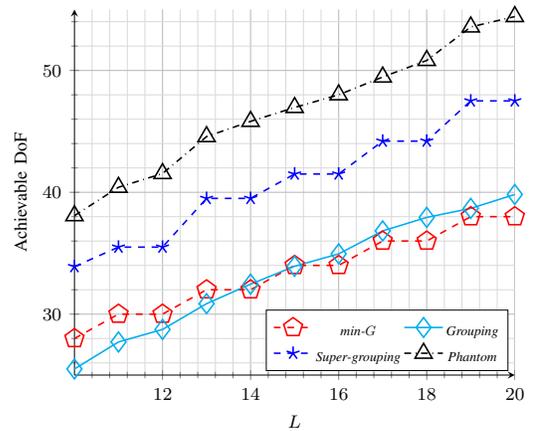
\begin{figure}[t]
    \centering

    \resizebox{.8\columnwidth}{!}{%

    \begin{tikzpicture}

    \begin{axis}
    [
    axis lines = center,
    xlabel near ticks,
    xlabel = \smaller {$L$},
    ylabel = \smaller {Achievable DoF},
    ylabel near ticks,
    ymin =25,
    ymax = 55,
    xmin = 10,  
    xmax = 20,
    legend style={
    nodes={scale=0.6, transform shape},
    at={(.71,0.19)}, 
    anchor=north, 
    draw=black,
    outer sep=2pt,
    font=\normalsize,
    legend columns=2, 
    scale=0.9, 
    text height=3.0ex, text depth=.25ex 
    },
    ticklabel style={font=\smaller},
    grid=both,
    minor tick num = 4, 
    major grid style = {line width=0.5pt,draw=gray!50},
    minor grid style = {line width=0.2pt,draw=gray!30}
    ]

   \addplot
    [dashed,line width=.7pt, mark = pentagon, mark size=4.4 ,mark options={solid, fill=red!100}, red!100]
    table[y=minGopt,x=L]{Figs/DoF_Prop_SOTA.tex};  
    \addlegendentry{\normalsize \textit{min-$G$}};

    \addplot
    [mark = diamond,line width=.7pt, mark size=4.4 , cyan!100]
    table[y=groupingop,x=L]{Figs/DoF_Prop_SOTA.tex};
    \addlegendentry{\normalsize \textit{Grouping} };
    

    \addplot
    [dashed,line width=.7pt, mark = star, mark size=4.8 , blue!100]
    table[y=groupingopeff,x=L]{Figs/DoF_Prop_SOTA.tex};
    \addlegendentry{\normalsize \textit{Super-grouping} };


    \addplot
    [dash dot,line width=.7pt, mark = triangle , mark size=4.4,mark options={solid, fill=black!100} , black!100]
    table[y=phantomopt,x=L]{Figs/DoF_Prop_SOTA.tex};
    \addlegendentry{\normalsize \textit{Phantom} };
    

    \end{axis}

    \end{tikzpicture}
    }
    \vspace{-2mm}
    \caption{DoF scaling versus Tx multiplexing gain, $\mathrm{opt}$ policy, $\gamma$ = 0.05,
    $J=5$, $\{K_j\}_{j=1}^J=\{20,20,20,40,80\}$, and $\{G_j\}_{j=1}^J=\{2,4,6,8,12\}$.
    }
    \label{fig:Tx_side_scaling}
\end{figure}

In Fig.~\ref{fig:Plot_scaling_gamma}, the achievable DoF is shown as a function of the cache size
$\gamma$ under the $\mathrm{opt}$ policy, for two Rx-antenna profiles (blue and red colors, as specified in the caption).
As $\gamma$ increases, all strategies benefit from enhanced coded-multicasting
opportunities enabled by larger caches. \emph{Phantom} and \emph{Super-grouping} strategies consistently extract larger gains from increasing $\gamma$ than
\emph{Grouping} and \emph{min-$G$}, as they more effectively translate caching into parallel streams under heterogeneous receiver-side spatial multiplexing gain configurations. 
Accordingly, for both profiles, the ordering
$\mathrm{DoF}_{\mathrm{opt},\mathcal J}\le\mathrm{DoF}_{\mathrm{opt},\check G}\le\mathrm{DoF}_{\mathrm{opt},\bar{\CJ}}\le\mathrm{DoF}_{\mathrm{opt},\mathrm{Ph}}$
holds for all $\gamma$ (except for the red profile when $\gamma \ge 0.08$).

\begin{figure}[t]
    \centering

    \resizebox{.8\columnwidth}{!}{%

    \begin{tikzpicture}

    \begin{axis}
    [
    axis lines = center,
    xlabel near ticks,
    xlabel = \smaller {$\gamma$},
    ylabel = \smaller {Achievable DoF},
    ylabel near ticks,
    ymin = 28,
    ymax = 116,
    xmin = 0.018,  
    xmax = 0.1,
     xticklabel style={/pgf/number format/fixed}, 
    legend style={
    nodes={scale=0.6, transform shape},
    at={(.33,1.0)}, 
    anchor=north, 
    draw=black,
    outer sep=2pt,
    font=\normalsize,
    legend columns=2, 
    scale=0.7, 
    text height=3.0ex, text depth=.25ex 
    },
    ticklabel style={font=\smaller},
    grid=both,
    minor tick num = 4, 
    major grid style = {line width=0.5pt,draw=gray!50},
    minor grid style = {line width=0.2pt,draw=gray!30}
    ]

    \addplot
     [dash dot,line width=.7pt, mark = x, mark size=3.6,mark options={solid, fill=red!100} , red!100]
    table[y=minGopt,x=gamma]{Figs/DoF_Prop_gamma5.tex};
    \addlegendentry{\normalsize \textit{min-$G$}
    };

    
    \addplot
     [line width=.7pt,mark = star , mark size=3.6, red!100]
    table[y=groupingop,x=gamma]{Figs/DoF_Prop_gamma5.tex};
    \addlegendentry{\normalsize \textit{Grouping} };


    \addplot
    [dashed,line width=.7pt, mark = o, mark size=3.6,mark options={solid, fill=red!100}, red!100]
    table[y=groupingopeff,x=gamma]{Figs/DoF_Prop_gamma5.tex};
    \addlegendentry{\normalsize \textit{Super-grouping} };

    
    \addplot
    [dash dot,line width=.7pt, mark = square, mark size=3.6,mark options={solid, fill=red!100} , red!100]
    table[y=phantomopt,x=gamma]{Figs/DoF_Prop_gamma5.tex};
    \addlegendentry{\normalsize \textit{Phantom} };
    
\addplot
     [dash dot,line width=.7pt, mark = +, mark size=3.6,mark options={solid, fill=blue!100} , blue!100]
    table[y=minGopt,x=gamma]{Figs/DoF_Prop_gamma6.tex};
    \addlegendentry{\normalsize \textit{min-$G$}
    };

    
    \addplot
     [line width=.7pt,mark = star , mark size=3.6, blue!100]
    table[y=groupingop,x=gamma]{Figs/DoF_Prop_gamma6.tex};
    \addlegendentry{\normalsize \textit{Grouping} };


    \addplot
    [dashed,line width=.7pt, mark = o, mark size=3.6,mark options={solid, fill=blue!100} , blue!100]
    table[y=groupingopeff,x=gamma]{Figs/DoF_Prop_gamma6.tex};
    \addlegendentry{\normalsize \textit{Super-grouping} };

    
    \addplot
    [dash dot,line width=.7pt, mark = square, mark size=3.6,mark options={solid, fill=blue!100} , blue!100]
    table[y=phantomopt,x=gamma]{Figs/DoF_Prop_gamma6.tex};
    \addlegendentry{\normalsize \textit{Phantom} };

    \end{axis}

    \end{tikzpicture}
    }
    \vspace{-2mm}
    \caption{DoF scaling versus $\gamma$, $\mathrm{opt}$ policy; $J=5$, $L=16$, and $\{K_j\}_{j=1}^J=\{50, 50$,$50, 120, 100\}$. For blue curves: $\{G_j\}_{j=1}^J=\{2, 4,8, 10,14\}$, and for red curves: $\{G_j\}_{j=1}^J = \{2, 6,8, 10,14\}$. 
    }
    \label{fig:Plot_scaling_gamma}
\end{figure}
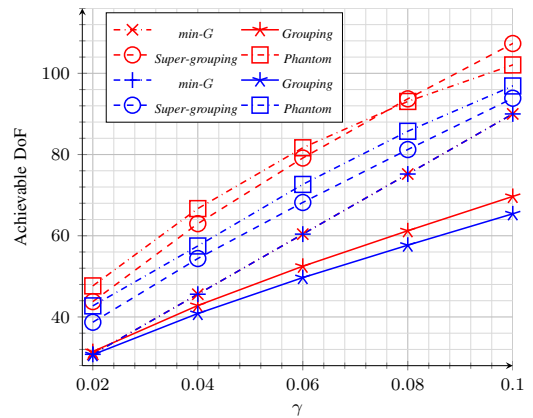
In Fig.~\ref{fig:plot_scal_Rx_SM}, the scalability of the achievable DoF is examined for 
different strategies under varying Rx-side spatial multiplexing gains (simulation setups $\mathrm{sc}_1$--$\mathrm{sc}_7$ defined in Table~\ref{tab:sim_setups}). 
The results indicate that the achievable DoF scaling is not determined by $\{G_j\}$ alone, but also by how the antenna profile aligns with the
user distribution across groups, as observed in $\mathrm{sc}_1$--$\mathrm{sc}_4$. In contrast, for $\mathrm{sc}_5$--$\mathrm{sc}_7$, where $\{K_j\}$ is fixed and the Rx antenna profile becomes overall larger in terms of antenna counts, both the
\emph{Super-grouping} and \emph{Phantom} strategies achieve a larger DoF by effectively exploiting the increased Rx-side spatial multiplexing gains.
\begin{figure}[t]
    \centering
    \resizebox{.8\columnwidth}{!}{%

   
    \begin{tikzpicture}
\begin{axis}[
    axis lines = left,
    xlabel = \smaller {Scenario},
    ylabel = \smaller {Achievable DoF},
    ylabel near ticks,
    xlabel near ticks,
    legend pos = north west,
    legend columns=3,
    legend style = {
        nodes={scale=0.99, transform shape},
        cells={align=center},
        at={(0.02,1)},
        anchor=north west,
        font=\small
    },
    ybar,
    legend image code/.code={%
        \draw[#1] (0cm,-0.045cm) rectangle (0.42cm,0.25cm);
        },
    ymin = 0,
    ymax =77,
    xtick=data,
    enlarge x limits=0.1, 
    symbolic x coords={sp1,sp2,sp3,sp4,sp5,sp6,sp7},
    xticklabels={$\mathrm{sc}_1$,$\mathrm{sc}_2$,$\mathrm{sc}_3$,$\mathrm{sc}_4$,$\mathrm{sc}_5$,$\mathrm{sc}_6$,$\mathrm{sc}_7$},
    ticklabel style={font=\smaller},
    grid=both,
    major grid style={line width=.3pt,draw=gray!30},
    bar width=0.2cm,
]
 \addplot 
    [black,fill=brown!30]
    table
    [x=Scen ,y=minGopt]
    {Figs/DoF_Scalability3.tex};
    \addlegendentry{\smaller \textit{min-$G$}}

    \addplot 
    [black,fill=black!50]
    table
    [x=Scen ,y=groupingopeff]
    {Figs/DoF_Scalability3.tex};
    \addlegendentry{\smaller \textit{Super-grouping}}
    
    \addplot 
    [black,fill=red!50]
    table
    [x=Scen ,y=phantomopt]
    {Figs/DoF_Scalability3.tex};
    \addlegendentry{\smaller \textit{Phantom}}
    


    \end{axis}
    \end{tikzpicture}
    }
    \vspace{-2mm}
    \caption{DoF scaling versus Rx antenna distribution, $\mathrm{opt}$ policy. 
    }
    \label{fig:plot_scal_Rx_SM}
\end{figure}
\section{Conclusion}
\label{sec:conclusions}
This paper characterized the achievable single-shot DoF {lower bounds} \color{black} of cache-aided MIMO systems with \emph{heterogeneous} per-user receive antennas. Building on three symmetric reference policies $\pi\!\!\!\in \!\!\!\{\mathrm{opt},\mathrm{cmb},\mathrm{lin}\}$, we developed four asymmetric delivery strategies that yield distinct operating points between global coded-multicasting and per-user spatial multiplexing, with different achievable DoF and subpacketization requirements.
The two ``primary'' strategies \emph{min-$G$} and \emph{Grouping} expose the fundamental tradeoff between the global caching and spatial multiplexing gains, while the two ``hybrid'' strategies, \emph{Super-grouping} and \emph{Phantom}, further expand the achievable region by combining these primary principles in a structured way. 
We derived closed-form and finite-search-optimized achievable \color{black} DoF expressions and established linear decodability for all policy--strategy combinations. Extensive numerical results corroborate the obtained results.

\appendix





\subsection{DoF Gap between Ceiling-based Point and \emph{$\rm opt$} Policy}
\label{app_A0}
\begin{lem}\label{lm:ceiling_opt_cmp}
Assume that the ceiling-based point in Remark~\ref{opt_special_case_ceiling} is feasible. Then, $\rm DoF_{\rm ceil}\!\leq\! \rm DoF_{\rm opt}$, and the achievable DoF gap $\Delta_{\rm gap }$ between the two policies 
satisfies $0\!\leq\! \Delta_{\rm gap }\!\leq \!(G-3)^+$.     
\end{lem}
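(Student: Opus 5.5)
The first claim follows directly from Remark~\ref{opt_special_case_ceiling}. Whenever \eqref{eq:ceiling_special_feasibility} holds, the pair $(\Omega,\beta)=(t+\lceil L/G\rceil,G)$ lies in the feasible set of \eqref{eq:opt_ceiling_feasibility}. Since $\mathrm{DoF}_{\rm opt}$ maximizes $\Omega\beta$ over that set, $\mathrm{DoF}_{\rm ceil}\le \mathrm{DoF}_{\rm opt}$, and hence $\Delta_{\rm gap}\triangleq \mathrm{DoF}_{\rm opt}-\mathrm{DoF}_{\rm ceil}\ge 0$.

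For the upper bound, I will bound $\Omega\beta$ over all feasible opt pairs, splitting into the cases $\beta=G$ and $\beta\le G-1$. Write $m\triangleq\lceil L/G\rceil$ and $L=G(m-1)+r$ with $r\in[1,G]$, so that $\mathrm{DoF}_{\rm ceil}=Gt+Gm$. Because $\beta\ge 1$, the ceiling term in \eqref{eq:opt_ceiling_feasibility} is at least $1$. The constraint therefore gives $\beta(\Omega-t-1)\le L-1$, i.e.
\begin{equation*}
\Omega\beta\le \beta(t+1)+L-1 .
\end{equation*}

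\emph{Case $\beta=G$.} The sharper bound quoted in the footnote of Remark~\ref{opt_special_case_ceiling} gives $\Omega\le t+m$. Hence $\Omega G\le \mathrm{DoF}_{\rm ceil}$, and this case contributes no gap.

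\emph{Case $\beta\le G-1$.} The displayed inequality yields $\Omega\beta\le (G-1)(t+1)+G(m-1)+r-1$. Subtracting $\mathrm{DoF}_{\rm ceil}=Gt+Gm$ gives
\begin{equation*}
\Omega\beta-\mathrm{DoF}_{\rm ceil}\le r-t-2\le G-t-2 .
\end{equation*}
Using $t=K\gamma\ge1$, the standing assumption of a nontrivial cache, this is at most $G-3$.

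Combining the two cases, $\Delta_{\rm gap}\le\max\{0,G-3\}=(G-3)^+$.

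The main obstacle is getting the constant down to $G-3$ rather than the naive $G-1$. The naive bound follows from the crude estimate $\beta\le G$ combined with $r\le G$. The tighter constant needs three ingredients together: the ``$+1$'' contributed by the ceiling term, the separate treatment of $\beta=G$ through the sharp bound $\Omega\le t+\lceil L/G\rceil$, and the loss of at least $t+1\ge2$ streams once $\beta$ drops below $G$. I would check the edge cases $r=G$ (i.e.\ $L/G\in\mathbb Z$) and $t=1$ explicitly, and state the assumption $t\ge1$ in the proof.
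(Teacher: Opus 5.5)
Your proof is correct and follows the paper's argument: the same decomposition $L=(\lceil L/G\rceil-1)G+r$, the same use of the ceiling term being at least one to get $\Omega\beta\le L-1+\beta(t+1)$, and the same split into $\beta=G$ (no gap, via $\Omega\le t+\lceil L/G\rceil$) and $\beta\le G-1$ (gap at most $r-t-2\le G-3$). The only difference is cosmetic: you bound every feasible pair and take the maximum, which yields $(G-3)^+$ directly, whereas the paper works at the optimizer $\beta^*$ and needs a separate contradiction argument for $G\le 2$.
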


\begin{proof}
     Let $n\triangleq\big\lceil \tfrac{L}{G} \big\rceil \geq 1$. 
    Then, \(L\) can be written as $L \triangleq (n-1)G + r$, where $1 \!\leq r \!\leq\! G$\footnote{Indeed, \(n-1\!<\!\frac{L}{G}\leq \!n\), and hence \(0\!<\!L-(n-1)G\!\leq\! G\). Since \(L\) and \(G\) are integers, \(1\!\leq \!r\leq\! G\). In particular, \(r=G\) when \(L\) is divisible by \(G\).}. 
    From~\cite{naseritehrani2026cache}, the
\(\mathrm{opt}\) policy achieves
    \begin{align}
\mathrm{DoF}_{\mathrm{opt}} &= \max_{\Omega,\beta}\Omega\beta \label{eq:total_DoF12} \\[-.6ex]
\color{black}\mathrm{s.t.}\color{black} \quad 
\text{i)} \: &\beta \le G, \: \: \text{ii)} \: \Big\lceil  \nicefrac{ \beta}{\binom{\Omega-1}{t}} \Big\rceil \le L - (\Omega-t-1) \beta \nonumber
\end{align}

For the gap analysis, we distinguish the cases \(\beta^*=G\) and
\(\beta^*<G\).
If $\beta^*=G$,
condition~ii) implies that the maximum achievable value of $\mathrm{opt}$ policy is $\Omega^*\leq t+\big\lceil \nicefrac{L}{G} \big\rceil$. Since the endpoint is feasible by Remark~\ref{opt_special_case_ceiling}, the resulting DoF is thus $
\Omega^*\beta^* = G(t+\lceil \nicefrac{L}{G}\rceil)$. Hence, \(\Delta_{\rm gap}=0\). It remains to consider $\beta^*< G$. 
Now, we can  
upper-bound the DoF gap between $\mathrm{DoF}_{\mathrm{opt}}$ and \(\mathrm{DoF}_{\!\rm ceil} = G\big(t+\lceil \nicefrac{L}{G}\rceil\big)= G(t+n) \) as follows: \begin{equation}\label{eq:total_DoF2}
 \begin{array}{l}
  0\leq {\mathrm{DoF}_{\rm opt}}-{\mathrm{DoF}_{\!\rm ceil}} 
  \overset{(a)}{\leq} 
  L-1+\beta^* (t+1)-G(t+n),\\[.1ex] 
   \hspace{33.5mm}\overset{(b)}{=} r-1+(\beta^*-G)(t+1) \\[.1ex]
   \hspace{33.5mm}\overset{(c)}{\leq} r-t- 2 \\[.1ex]
   \hspace{33.5mm}\overset{(d)}{\leq} G-3
 \end{array}
\end{equation}
Here, \((a)\) follows from condition~ii). Specifically, since $\Big\lceil \nicefrac{\beta^*}{\binom{\Omega^*-1}{t}} \Big\rceil \geq 1$, R.H.S of condition~ii) must also satisfy $L-(\Omega^*-t-1)\beta^*\geq1$. Rearranging this inequality yields $\Omega^*\beta^*\leq L-1+\beta^*(t+1)$.
\((b)\) follows from \(L \triangleq (n-1)G + r \), and \((c)\) follows from \(\beta^* \leq G-1\), since \(\beta^*<G\) and \(\beta^*\in\mathbb Z_+\), and \((d)\) follows from $r\leq G$ and $t \geq 1$. 


Hence, when $\beta^*<G$, \eqref{eq:total_DoF2} yields $\Delta_{\rm gap}\leq G-3$. This case can lead to a nonnegative gap only when $G\geq3$; for $G=3$, nonnegativity of $\Delta_{\rm gap}$ therefore forces $\Delta_{\rm gap}=0$. For \(G\leq2\), the case \(\beta^*<G\) leads to a contradiction, and hence
\(\beta^*=G\), which gives \(\Delta_{\rm gap}=0\). 
Consequently, in all cases,
$0\leq \Delta_{\rm gap}\leq (G-3)^+$,
which completes the proof.
\end{proof}

\color{black}

%
\color{black}

\subsection{Proof of Lemma~\ref{LD_Phantom}}
\label{app_B}

    In each MC round \(i\in[I]\), Step~2 first constructs the candidate multicast
transmission vectors \(\tilde{\Bx}_{\pi,i}^{\mathrm{MC}}(s)\) in the
hypothetical symmetric MIMO-CC setup, where every user has \(\hat G_i\) receive
dimensions and receives \(\hat\beta_{\pi,i}\) streams. The selected value
\(\hat\beta_{\pi,i}\) satisfies the corresponding linear decodability
condition~\eqref{DoF_bndd}\footnote{Calculations are provided here for the $\hat \Omega_{\pi, i} > K\gamma+1$ case only. Extension to the equality case is straightforward.
}:
\begin{equation}\label{ld_ref}
\hat{\beta}_{\pi,i}
\!\leq \!
\min \!
\big(
\hat G_i,\,
\nicefrac{\!\!\!\displaystyle(L-\delta_{\pi,i})}{\Gamma_{\pi,i}}
\big)
\!\!\Longleftrightarrow \!\!
\begin{cases}
\!\text{\rm(Ia)}\:
\delta_{\pi,i}
\leq
L-\Gamma_{\pi,i}\hat{\beta}_{\pi,i},\\
\!\text{\rm(IIa)}\quad
\hat{\beta}_{\pi,i}\leq\hat G_i.
\end{cases}
\end{equation}
Here, \(\delta_{\pi,i}\) denotes the maximum number of subpackets \(W_{\wp_\pi}^{q}\) having the same packet index \(\wp_\pi\) in each transmission interval of MC round \(i\), and \(\Gamma_{\pi,i} \triangleq \hat \Omega_{\pi, i}- K\gamma-1\) denotes 
the number of unintended users at which each considered subpacket must
be nulled. 
 Condition~\text{\rm(Ia)} is the transmit-side decodability condition:
for the \(\delta_{\pi,i}\) subpackets having the same packet index,
\(\delta_{\pi,i}\) linearly independent beamformers must be selected from the
nullspace remaining after nulling
\(\Gamma_{\pi,i}\hat{\beta}_{\pi,i}\) interfering streams.
Condition~\text{\rm(IIa)} is the receive-side condition in the hypothetical
symmetric setup.

We next move from the hypothetical symmetric setup to the actual asymmetric
setup. For a subpacket \(W_{\wp_\pi}^{q}\) intended for user \(k\), its
beamformer \(\Bw_{\wp_\pi}^{q}\) must null the inter-stream interference at
every active stream of each user \(k'\in\CI_{\wp_\pi,k}(s)\), whereas the
remaining interference is removed using the cache contents.
%
%
Let $\BU_{k'}\in \mathbb C^{\SfG_{k'}\times \bm\beta_{\pi,i}^{k'}(s)}$
be the receive combining matrix of user \(k'\).
Then,
$\BH_{k'}^{H}\BU_{k'}
\in
\mathbb C^{L\times \bm\beta_{\pi,i}^{k'}(s)}$.
For transmission $s$, the equivalent interference channel associated with \(W_{\wp_\pi}^{q}\) and included in $\hat{\Bx}_{\pi,i}^{\mathrm{MC}}(s)$ is obtained by
concatenating 
the effective interference channels of the users in
\(\CI_{\wp_\pi,k}(s)\), namely\footnote{For $\pi\in\{\mathrm{opt},\mathrm{cmb}\}$, 
$\CI_{\wp_\pi,k}(s)
=\hat{\CK}_{\pi,i}(s)\setminus
\bigl(\CP\cup\{k\}\bigr)$, while, for $\pi=\mathrm{lin}$, 
$\CI_{p,k}(s)
=
\left\{
k'\in\widehat{\CK}_{\mathrm{lin},i}(s)\setminus\{k\}:
V[p,k']=0
\right\}$,
where $V[p,k]$ denotes the $(p,k)$-th element of the cyclic placement
matrix $\mathbf V$ in~\cite{salehi2020lowcomplexity}.\color{black}} 
\begin{equation}\label{ld_0}
\bar{\BH}_{\wp_\pi,k}(s)
=
\left[
\BH_{k'}^{H}\BU_{k'}
\right]^H_{k'\in\CI_{\wp_\pi,k}(s)}
\in
\mathbb C^{r_{\wp_\pi,k}(s)\times L},
\end{equation}
where
$r_{\wp_\pi,k}(s)
=
\sum_{k'\in\CI_{\wp_\pi,k}(s)}
\beta_{\pi,i}^{k'}(s)$
is the number of streams zero-forced by the beamformer of
\(W_{\wp_\pi}^{q}\).
In the actual transmission, user \(k\) receives
$\bm\beta_{\pi,i}^{k}(s)
=
\min(\hat{\beta}_{\pi,i},\SfG_k)$
streams. Hence,
(Ib) $\bm\beta_{\pi,i}^{k}(s)\leq \hat{\beta}_{\pi,i}$,
(IIb) $\bm \beta_{\pi,i}^{k}(s)\leq \SfG_k$.
Condition~\text{\rm(Ib)} shows that removing unsupported streams in Step~3 cannot increase the number of active streams, whereas
condition~\text{\rm(IIb)} guarantees that user \(k\) is not assigned more streams than its receive dimension. 
%
Consequently,
\begin{equation}\label{ld_2}
r_{\wp_\pi,k}(s)
\leq
\Gamma_{\pi,i}\hat{\beta}_{\pi,i}.
\end{equation}

For each transmitted subpacket \(W_{\wp_\pi}^{q}\), the corresponding beamformer must satisfy
\begin{equation}\label{ld_3}
\Bw_{\wp_\pi}^{q}
\in
\mathrm{Null}
\big(
\bar{\BH}_{\wp_\pi,k}(s)
\big).
\end{equation}
Hence, by the rank-nullity theorem,
\begin{equation}\label{ld_3_nullity}
\mathrm{nullity}
\big(
\bar{\BH}_{\wp_\pi,k}(s)
\big)
=
L-r_{\wp_\pi,k}(s).
\end{equation}
Let \(\delta_{\pi,i}^{k}(s)\) denote the number of remaining subpackets
intended for user \(k\) that have the same packet index \(\wp_\pi\). Their
beamformers must be selected linearly independent from the nullspace in
\eqref{ld_3}.
Therefore, a sufficient linear decodability condition is
\begin{equation}\label{ld_4}
\delta_{\pi,i}^{k}(s)
\leq
\mathrm{nullity}
\big(
\bar{\BH}_{\wp_\pi,k}(s)
\big).
\end{equation}
Moreover, removing unsupported streams cannot increase the number of subpackets having the same packet index. Thus,
\begin{equation}\label{ld_5}
\delta_{\pi,i}^{k}(s)\leq \delta_{\pi,i}.
\end{equation}
Combining~\eqref{ld_ref}, \eqref{ld_2}, and \eqref{ld_5}, we obtain
\begin{equation}\label{ld_6}
\delta_{\pi,i}^{k}(s)
+
r_{\wp_\pi,k}(s)
\leq
\delta_{\pi,i}
+
\Gamma_{\pi,i}\hat{\beta}_{\pi,i}
\leq
L.
\end{equation}
Equivalently,
\begin{equation}\label{ld_7}
\delta_{\pi,i}^{k}(s)
\leq
L-r_{\wp_\pi,k}(s)
=
\mathrm{nullity}
\big(
\bar{\BH}_{\wp_\pi,k}(s)
\big).
\end{equation}
Hence, the remaining subpackets having the same packet index fit within the available nullspace, and their beamformers can be selected linearly independent while satisfying all required ZF constraints.
Finally, $\bm\beta_{\pi,i}^{k}(s)
=
\min(
\hat{\beta}_{\pi,i},
\SfG_k)
\leq
\SfG_k$,
and therefore every user has sufficient receive dimensions for its assigned streams. Consequently, every actual Phantom multicast transmission vector remains linearly decodable for all users in the asymmetric setup.
\color{black}  

\color{black}


%

 


\bibliographystyle{IEEEtran}
\bibliography{conf_short,IEEEabrv,references,whitepaper}
 
\end{document}